\documentclass[abstract=true, DIV=14]{scrartcl}

\usepackage[noEnd,commentColor=black]{algpseudocodex}
\usepackage{amsmath, mleftright}
\usepackage{amssymb}
\usepackage{amsthm}
\usepackage[mathscr]{euscript}
\usepackage[shortlabels]{enumitem}
\usepackage{graphicx} %
\usepackage{subcaption}
\usepackage{thmtools}
\usepackage{todonotes}
\usepackage{xcolor}
\usepackage{microtype}
\usepackage[T1]{fontenc}
\usepackage{nicefrac,xfrac}

\makeatletter\input{t1cmss.fd}\makeatother
\DeclareFontShape{T1}{cmss}{bx}{n}%
  {<5><6><7><8><9><10->ecsx1000}{}
  
\makeatletter
\def\@fnsymbol#1{\ensuremath{\ifcase#1\or \!\;\or \!\;\or \ddagger\or
   \mathsection\or \mathparagraph\or \|\or **\or \dagger\dagger
   \or \ddagger\ddagger \else\@ctrerr\fi}}
\makeatother

\newcommand{\mybinom}[2]{%
    \left(
    \begin{array}{@{}c@{\,}} #1\\#2 \end{array}
    \right)}

\usepackage[normalem]{ulem}

\usepackage[english]{babel}

\usepackage{tcolorbox}

\usepackage{upgreek}
\usepackage[colorlinks]{hyperref}
\usepackage[capitalize]{cleveref}

\crefname{equation}{eq.}{eqs.} %
\crefname{enumi}{}{} %
\crefname{icase}{case}{cases}
\crefname{ipart}{part}{parts}
\crefname{iprop}{property}{properties}
\crefname{iinv}{invariant}{invariants}

\crefformat{section}{\S\,#2#1#3}
\crefformat{subsection}{\S\,#2#1#3}
\crefrangeformat{section}{\S\S\,#3#1#4--#5#2#6}
\crefmultiformat{section}{\S\S\,#2#1#3}{,~#2#1#3}{,~#2#1#3}{,~#2#1#3}

\usepackage{authblk}
\usepackage[normalem]{ulem}

\usepackage{tikz}
\usetikzlibrary{calc}

\newcommand{\jd}[1]{\textcolor{blue}{\textbf{Justin:} #1}}

\newcommand{\N}{\mathbb{N}}

\newcommand{\fO}{\mathcal{O}}

\newcommand{\F}{\mathcal{F}}
\newcommand{\cF}{\mathcal{F}}

\DeclareMathOperator{\OPT}{\mathsf{OPT}}

\setkomafont{captionlabel}{\bfseries}
\renewcaptionname{english}{\figurename}{Fig.}
\setcapindent{0pt}
\renewcommand{\alpha}{\upalpha}

\newcommand{\cP}{\mathscr{P}}

\newcommand{\cT}{\mathscr{T}}

\newcommand{\cS}{\mathscr{S}}

\newcommand\utimes{\mathbin{\ooalign{$\cup$\cr%
   \hfil\raise0.42ex\hbox{$\scriptscriptstyle\times$}\hfil\cr}}}
\newcommand\bigutimes{\mathop{\ooalign{$\bigcup$\cr%
   \hfil\raise0.36ex\hbox{$\scriptscriptstyle\boldsymbol{\times}$}\hfil\cr}}}

\newcommand{\floor}[1]{\lfloor #1 \rfloor}

\newcommand{\connected}[1]{\def\temp{#1}\ifx\temp\empty\sim\else\overset{#1}{\sim}\fi}

\tikzset{
	point/.style={circle, fill, inner sep=1.5pt},
	smallpoint/.style={point, inner sep=1.2pt},
	tinypoint/.style={point, inner sep=1pt},
	hlbox/.style={fill, {white!90!black}},
	subrect/.style={draw, fill={white!80!cyan}}, %
	msrect/.style=subrect %
}

\newtheorem{theorem}{Theorem}[section]
\newtheorem{lemma}[theorem]{Lemma}

\newtheorem{corollary}[theorem]{Corollary}

\theoremstyle{definition}
\newtheorem{definition}[theorem]{Definition}

\title{Improved space-time tradeoff for TSP via extremal set systems}

\author[1]{Justin Dallant\thanks{$^1$Email: \href{mailto:justin.dallant@tu-dresden.de}{justin.dallant@tu-dresden.de}, \href{mailto:laszlo.kozma@tu-dresden.de}{laszlo.kozma@tu-dresden.de}}}
\author[1]{L\'{a}szl\'{o} Kozma\protect\footnotemark[1]}
\affil[1]{Faculty of Computer Science, TU Dresden, Germany}
\date{}

\begin{document}

\maketitle

\begin{abstract}
The \emph{traveling salesman problem} (TSP) is a cornerstone of combinatorial optimization and has deeply influenced the development of algorithmic techniques in both exact and approximate settings. Yet, improving on the decades-old bounds for solving TSP exactly remains elusive: the dynamic program of Bellman, Held, and Karp from 1962 uses $2^{n+\fO(\log{n})}$ time and space, and the divide-and-conquer approach of Gurevich and Shelah from 1987 uses $4^{n + \fO(\log^2{n})}$ time and polynomial space. A straightforward combination of the two algorithms trades off $T^{n+o(n)}$ time and $S^{n+o(n)}$ space at various points of the curve $ST = 4$. An improvement to this tradeoff when $2 < T < 2\sqrt{2}$ was found by Koivisto and Parviainen (SODA 2010), yielding a minimum of $ST \approx 3.93$. Koivisto and Parviainen show their method to be optimal among a broad class of partial-order-based approaches, and to date, no improvement or alternative method has been found. 

In this paper we give a tradeoff that strictly improves all previous ones for all $2<T<4$, achieving a minimum of $ST < 3.572$. A key ingredient is the construction of sparse set systems (hypergraphs) that admit a large number of maximal chains. The existence of such objects is of independent interest in extremal combinatorics, likely to see further applications. Along the way we disprove a combinatorial conjecture of Johnson, Leader, and Russell from 2013, relating it with the optimality of the previous tradeoff schemes for TSP. Our techniques extend to a broad class of permutation problems over arbitrary semirings, yielding improved space–time tradeoffs in these settings as well.

\end{abstract}

\section{Introduction}
\label{sec1}

The \emph{traveling salesman problem} (TSP) asks, given a set of $n$ cities $S = \{c_1, c_2, \dots, c_n\}$ and distances $d : S^2 \rightarrow \mathbb{R}$, for a tour of minimal length that visits each city in $S$ exactly once. 
More precisely, we seek a permutation $\pi$ of $[n]  = \{1, 2, \dots,n\}$ that minimizes $$d(c_{\pi(n)}, c_{\pi(1)}) + \sum_{i=1}^{n-1} d(c_{\pi(i)}, c_{\pi({i+1})}).$$

TSP is an emblematic problem of computer science that has been investigated since the pioneering era of the field~\cite{Cook2011}. The fascination it holds is likely due to multiple factors such as its clear and intuitive statement, its practical relevance and important special cases (e.g., metric, Euclidean), and the fact that it can be attacked with a variety of algorithmic techniques. The question of the algorithmic complexity of TSP can be traced back to Menger in the 1920s~\cite{schrijver2005history}, and the problem %
continues to inspire research to date, in both exact and approximate settings.  

The 1962 dynamic programming algorithm of Bellman~\cite{Bellman1962} and Held and Karp~\cite{HeldKarp1962} solves TSP by observing that any set of cities that appear contiguously at the beginning of the optimal tour must themselves be visited in an optimal order (as otherwise the entire tour could be improved). The algorithm then computes, for all subsets $S' \subseteq S$ of cities, i.e., all possible \emph{prefix-sets} of the optimal tour, the shortest way of visiting $S'$ from a given start- to a given endpoint. Both the time- and the space requirement\footnote{As common for exponential algorithms, the $\fO^*(\cdot)$ notation suppresses polynomial factors, i.e., $\fO^*(c^n) \subseteq c^{n + \fO(\log{n})}$.} is $\fO^*(2^n)$, dominated by the number of subsets of $S$; %
we recall this algorithm in more detail in \S\,\ref{sec2}.

Improving the running time to $\fO^*(c^n)$ with $c<2$ has been a notorious open problem for well over half a century. For another prototypical NP-hard problem, \textsc{CNF-SAT}, the impossibility of such an improvement is the subject of the Strong Exponential-Time Hypothesis (SETH)~\cite{seth}. Breaking the $2^n$-barrier for TSP would similarly be a major breakthrough. Such improvements have been achieved only in special cases, most notably for the unweighted, undirected case, i.e., for \textsc{Hamiltonicity}: here, an algorithm with running time $\fO^*(1.66^n)$ was found by Björklund~\cite{Bjorklund14}. We refer to the textbook of Fomin and Kratsch~\cite{FominK10} for a broad tour of exponential algorithms, and the survey of Nederlof~\cite{nederlof2026} for a modern treatment. 

\medskip

Generally, exponential space is an even greater obstacle to practicality than exponential time, it is thus often useful to reduce space usage, even at the expense of some increase in running time. 
For TSP, a polynomial space algorithm was found\footnote{The algorithm, as described in~\cite{GurevichShelah1987} is for finding Hamilton cycles; the fact that it easily extends to TSP was noted by Björklund and Husfeldt~\cite{bjorklund2008exact}.} by Gurevich and Shelah~\cite{GurevichShelah1987} in 1987, building on the general divide and conquer approach of Savitch~\cite{savitch1970relationships}.
It works by ``guessing'' the first half of the tour, i.e., a subset $S' \subseteq S$ of $\lfloor n/2 \rfloor$ cities, that appear next to each other in the optimal tour. This amounts to a factor of $\binom{n}{\lfloor n/2 \rfloor} \in \fO(2^n)$ in the running time. Further guessing the cities that begin and end the two halves of the tour (in polynomial time), the algorithm then recursively finds the optimal tours of both parts. The resulting recurrence for the running time is of the form $T(n) \in O^*(2^n) \cdot T(n/2)$, resolving to $T(n) \in 4^n n^{\fO(\log{n})} \subseteq 4^{n + o(n)}$. As there are only $\fO(\log{n})$ recursive levels, each with moderate bookkeeping, the overall space requirement is clearly polynomial.

One may wish to more finely trade off space and time. A straightforward way is to run the above divide and conquer algorithm until depth $i$, then switch over to the dynamic programming (DP) algorithm on the subproblems. This amounts to a runtime of $\fO^*(2^{n(2-1/2^i)}n^i)$ and space usage $\fO^{*}(2^{n/2^i})$; %
see~\cite[\S\,10.1]{FominK10} for a detailed description. 

We call a pair $(S,T)$ a \emph{feasible} space-time tradeoff (for TSP) if every input instance of size $n$ can be solved simultaneously in time $T^{n+o(n)}$ and space $S^{n+o(n)}$. The above algorithms imply that $(2,2)$, $(1,4)$, and various $(S,T)$ pairs with $ST = 4$ are feasible space-time tradeoffs. Exponential improvements to the Bellman-Held-Karp or the Gurevich-Shelah bounds would correspond to tradeoffs $(S,T)$ with $S \leq T < 2$ or $(1,T)$ with $T<4$. The question we study in this paper concerns the entire range between these two extremes. \\
\smallskip

\quad 
\emph{What $(S,T)$ tradeoffs are feasible for $2<T<4$ and what is the minimum of $S\cdot T$?}\\

\smallskip

While $ST=4$ may seem like a fundamental barrier, a 2010 result of Koivisto and Parviainen~\cite{KoivistoParviainen2010}\cite[\S\,10.1]{FominK10} showed that the tradeoff product can be lowered without improving the extremal $T=2$ or $S=1$ cases. More precisely, they obtain, for various values $2 < T < 2\sqrt{2}$ a corresponding feasible $S$ with $ST < 4$, with a minimum tradeoff point of $ST \approx 3.93$. 

At a high level, the Koivisto-Parviainen result works by partitioning the space of possible permutations of $[n]$ (i.e., the set of possible TSP solutions): each part consists of the linear extensions of a partial order (poset) from a certain family. The family of posets found by Koivisto and Parviainen to yield the best result has simple structure: they are of height two and capture the splitting of small groups of cities into two equal parts, one fully preceding the other in the solution. Cities from different groups can arbitrarily intermix. (This latter aspect makes the scheme different from the Gurevich-Shelah divide-and-conquer, where the cities are globally split into two parts that no longer interact.) The algorithm then ``guesses'' the unique poset from the family that the optimal solution extends. Knowing this poset, the number of subproblems that need to be considered in the Bellman-Held-Karp DP is reduced, improving both the time and space bounds. The fact that this scheme yields an overall saving is surprising and somewhat unintuitive -- many natural poset families would, in fact, worsen the tradeoff; even in the favorable class that was identified, improvement happens only at certain group sizes, apparently arising from a precise numerical estimate of the middle binomial coefficient, and as the group size increases or decreases, the effect vanishes. Rigorously optimizing over all possible posets appears out of reach; nonetheless, Koivisto and Parviainen show their choice to be optimal within a broad class. Since 2010 no further improvements or alternative methods have been found for the natural question of the best space-time tradeoff of TSP. 

\medskip

In this paper we develop a general algorithmic framework for space-time tradeoffs that includes the previous two schemes as special cases. It can be viewed as injecting some bias into how likely cities are to be in prefix-sets, more flexibly than the grouping by poset-extensions. It is perhaps most intuitive to describe our approach using randomization, yielding an algorithm that succeeds with high probability; the approach, however, can be made fully deterministic without worsening the bounds. Searching for the optimal tradeoff in this framework turns out to correspond to an extremal problem in combinatorics: finding set systems of a prescribed size that maximize the number of maximal chains. We show that such constructions directly translate to improved space-time tradeoffs for the TSP and other permutation-type problems. 

While natural in hindsight, this extremal problem appears not extensively researched in combinatorics. Questions of a similar flavor were studied, e.g., in a 2013 paper by Johnson, Leader, and Russell~\cite{JLRsetSystems}. %
In fact, the conjectured optimality of a construction by those authors would suggest -- within our broader framework -- the optimality of some poset-based space-time tradeoffs that include the Koivisto-Parviainen scheme.
We disprove their conjecture and give improved set system constructions that in turn lead to improved tradeoffs for the TSP.

Our main results are as follows.

\begin{theorem}
\label{thm1}
For all $2 < T < 4$, there is a value $S$ such that $ST < 4$ and the pair $(S,T)$ is a feasible space-time tradeoff for TSP.
\end{theorem}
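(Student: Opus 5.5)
The plan is to prove \cref{thm1} by building one general algorithmic framework that contains both Bellman--Held--Karp and Gurevich--Shelah as special cases, and then plugging in a suitable extremal set system. For the framework, fix a constant group size $k$ and a family $\F \subseteq 2^{[k]}$ with $\emptyset,[k] \in \F$. Let $c(\F)$ denote the number of maximal chains $\emptyset=A_0\subset A_1\subset\dots\subset A_k=[k]$ with all $A_i\in\F$.

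Randomly partition the $n$ cities into $n/k$ groups of size $k$, and identify each group with $[k]$ via a random bijection. Restrict the Bellman--Held--Karp DP to prefix-sets $P$ such that $P\cap G\in\F$ for every group $G$.
\begin{itemize}
\item The optimal tour survives the restriction exactly when, in every group, the induced order of that group's cities traces a chain of $\F$. For a uniformly random relabelling this has probability $(c(\F)/k!)^{n/k}$.
\item One DP run uses time and space $|\F|^{n/k}\cdot n^{\fO(1)}$.
\item Repeating $(k!/c(\F))^{n/k}$ times succeeds with high probability. This can be derandomized in the standard way with a family of relabellings, e.g.\ by enumerating the $\binom{n}{k,\dots,k}$ group splits lazily and fixing the labels by a covering design.
\end{itemize}
This gives the base tradeoff $S=|\F|^{1/k}$ and $T=(|\F|\,k!/c(\F))^{1/k}$.

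The second step is to embed this scheme into the Gurevich--Shelah recursion. We split the tour in halves for $i$ levels, paying roughly $2^{n(1-2^{-i})}$ time overall, and run the restricted DP on the leaves of size $n/2^i$. More generally, we allow a fractional depth by splitting off only an $\alpha$-fraction of the cities at the last level. The resulting feasible pairs $(S,T)$ form a curve parametrized by $(i,\alpha,\F)$.

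The key point is the following reduction. It suffices to exhibit a single family $\F$ whose base point $(S_0,T_0)$ satisfies $S_0T_0<4$; equivalently, $c(\F) > |\F|^2 k!/4^k$. The reason is that composing $j$ levels of divide and conquer with a base point $(S_0,T_0)$ maps it to $\bigl(S_0^{2^{-j}},\,4^{1-2^{-j}}T_0^{2^{-j}}\bigr)$, and the product becomes $4\,(S_0T_0/4)^{2^{-j}}<4$. Interpolating between consecutive $j$ with fractional splits (a convex combination in log-scale) then covers every $T$ in a suitable range with $ST<4$.

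To reach all $2<T<4$, including $T$ close to $2$, I would also use the family of base points obtained by mixing. Apply $\F$ to only a $\beta$-fraction of the groups and leave the rest unrestricted; the unrestricted part contributes factors $(2,2)$, and its product with the $\F$ part, in log-coordinates, traces a segment from $(2,2)$ to $(S_0,T_0)$. This segment lies strictly below $ST=4$ except at its endpoint $(2,2)$. Near $T=4$, the recursion with $j\to\infty$ approaches $(1,4)$ while staying strictly below the hyperbola. A careful check in $(\log S,\log T)$ coordinates should show that the union of these curves covers $(2,4)$ strictly below $ST=4$.

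The main obstacle is the extremal construction: finding an explicit $\F$ with $c(\F)\,4^k > |\F|^2\,k!$, beyond the Koivisto--Parviainen region. The hypercube gives equality ($|\F|=2^k$, $c=k!$), and so do the poset families. My first attempt would be to start from the Johnson--Leader--Russell type families, namely unions of layers or products of small cubes, and perturb them. Concretely, I would keep full middle layers but thin out the layers near the ends while preserving most chains, and optimize numerically over small $k$ (say $k\le 10$) by computing $c(\F)$ via a DP over $\F$ itself. Once a single such $\F$ is found, the rest is routine bookkeeping. Taking products of $\F$ with itself keeps the ratio $c\,4^k/(|\F|^2k!)$ multiplicative, so any strict improvement is exponential.
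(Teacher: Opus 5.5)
Your reduction in the third paragraph, that a single $\mathcal{F}$ with $S_0T_0<4$ suffices, is false, and it is where the argument breaks. In $(\lg S,\lg T)$ coordinates the step of Lemma~\ref{lemma:partialDQ} is $(x,y)\mapsto(x/2,\,1+y/2)$.

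\begin{itemize}
\item This map preserves $x+y<2$. However, it sends the Bellman--Held--Karp point $(2,2)$ to $(\sqrt2,2\sqrt2)$, which lies \emph{on} $ST=4$.
\item Every image has $T\ge 2\sqrt2$, with equality only when the base point has $T=2$.
\item Your mixing segment covers $S\in[S_0,2]$, and its $j$-fold images cover $S\in[S_0^{2^{-j}},2^{2^{-j}}]$. When $S_0>\sqrt2$, the best you can certify at $T=2\sqrt2$ is therefore $S=\sqrt2$, i.e.\ $ST=4$.
\end{itemize}

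The ``fractional depth'' does not fill this gap. Splitting off an $\alpha$-fraction of an $m$-city subproblem costs $\binom{m}{\alpha m}=2^{H(\alpha)m+o(m)}$ guesses. The log-linear interpolation between consecutive levels would instead correspond to $2^{2\alpha m}$, and $H(\alpha)>2\alpha$ on $(0,1/2)$. For a base point with $S_0T_0$ close to $4$, these split points lie above $ST=4$ except for $\alpha$ very near $0$ or $1/2$. Log-convex interpolation is available only among set-system points (union products, Lemma~\ref{lemma:interpolate}). It is not available between a set-system point and a divide-and-conquer point.

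The Koivisto--Parviainen $13\times 2$ tower is an explicit counterexample to your reduction. It has $S_0T_0\approx 3.93<4$, so, contrary to your remark, poset families do not only give equality. Yet $S_0\approx1.453>\sqrt2$, and with mixing, recursion and unequal splits you still get nothing better than $ST=4$ at $T=2\sqrt2$. This is exactly why that scheme improves only the range $2<T<2\sqrt2$.

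What the argument really needs is a set system with $S(\mathcal{F})\le\sqrt2$ and $S(\mathcal{F})^2P(\mathcal{F})<4$. The paper obtains $S^2P_S<4$ on the whole interval $1.38<S<2$, in particular $P_{\sqrt2}<1.786$. This interval and its images under $(S,T)\mapsto(\sqrt S,2\sqrt T)$ then overlap strictly below $ST=4$. This is precisely the regime in which the Johnson--Leader--Russell tower-of-cubes conjecture would forbid any improvement, so perturbing towers of cubes cannot supply it. Your heuristic cannot either: keeping full middle layers forces $|\mathcal{F}|\ge\binom{k}{\lfloor k/2\rfloor}$, hence $S(\mathcal{F})>\sqrt2$, and indeed $S(\mathcal{F})\to2$ as $k$ grows.

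So the essential ingredient is missing. The paper provides it in Theorem~\ref{thm:construction}, using nested halves $L_2\subseteq L_1$, $R_2\subseteq R_1$ with prefix, suffix and half-tour constraints. It counts supported permutations via Lemma~\ref{lemma:supported_fraction} and entropy estimates as $n\to\infty$.

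A secondary issue is the derandomization. Enumerating the $\binom{n}{k,\dots,k}$ group splits is superexponential. Covering families for a constant group size $k$ may also exceed $k!/c(\mathcal{F})$ by a constant factor per group, which is an exponential overhead overall. The paper instead fixes the groups, lets their size grow slowly (about $\lg\lg\lg n$), and uses Lemma~\ref{lem:fekete} to keep $P_S(n)\to P_S$.
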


\begin{figure}
    \centering
\includegraphics[width=0.85\linewidth]{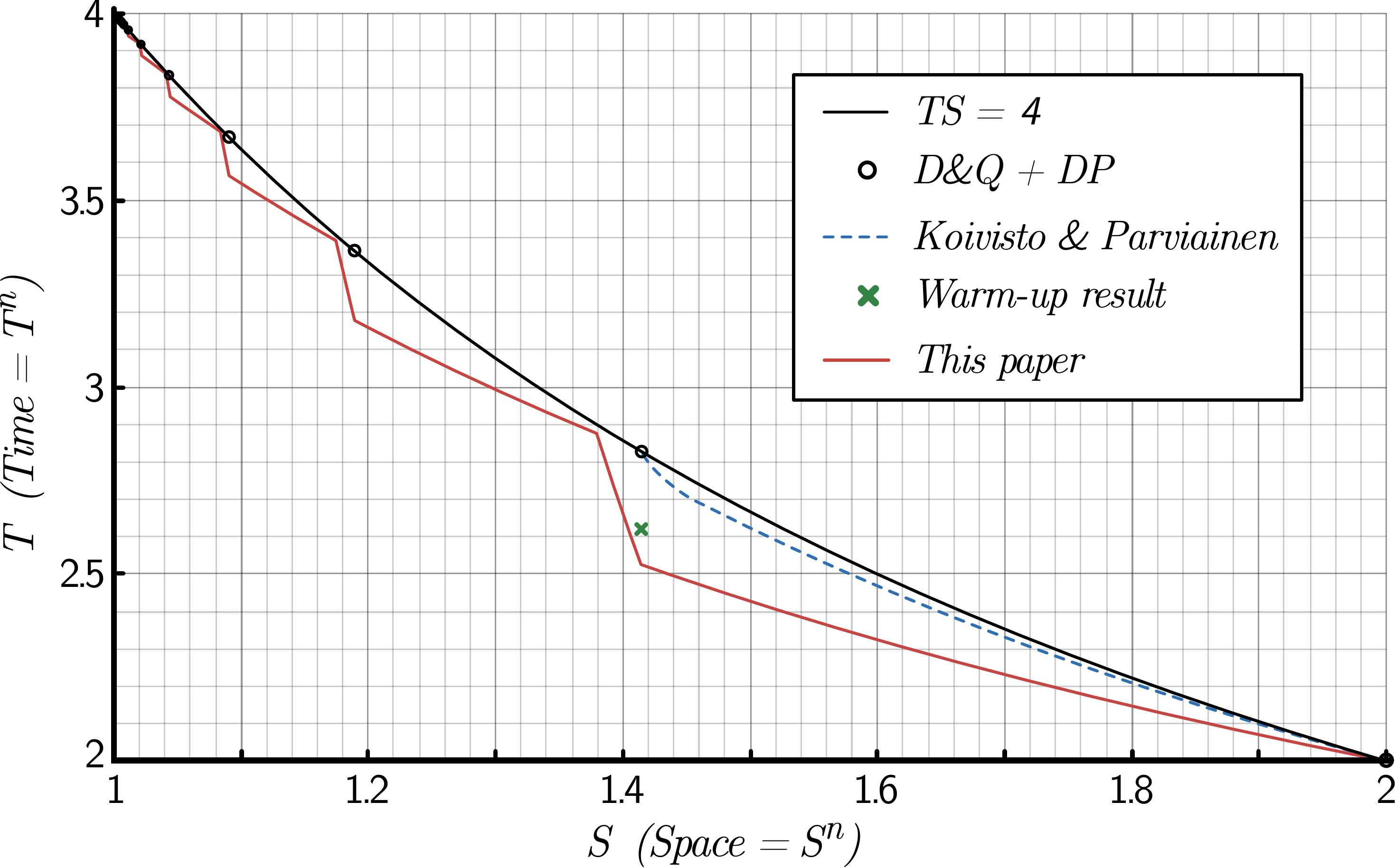}
    \caption{Space-time tradeoff for the TSP. Solid black line shows $ST=4$, with circles at the concrete feasible tradeoffs resulting from the classical divide-and-conquer and dynamic programming scheme. Dashed line is the improved tradeoff of Koivisto and Parviainen. The tradeoff achieved in our framework is shown in solid red; note that the curve does not touch $ST=4$, except at the endpoints $(1,4)$ and $(2,2)$. The green cross is the tradeoff point in our warm-up result in \S\,\ref{sec2}.}
    \label{fig1}
\end{figure}

The exact tradeoff curve we obtain is described in more detail in \S\,\ref{sec3}. It dominates all previous approaches and is illustrated in Figure~\ref{fig1}. Optimizing for a single point on the curve, we obtain:

\begin{theorem}
\label{thm2}
The pair $(S,T)$ with $S=\sqrt{2}$ and $T< 1.786 \cdot \sqrt{2}$, with $ST < 3.572$ is a feasible space-time tradeoff for TSP.
\end{theorem}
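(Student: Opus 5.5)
The plan is to instantiate the general framework sketched in the introduction: a randomized (later derandomized) variant of Bellman--Held--Karp restricted to prefix-sets from a fixed set system. We fix a constant block size $k$ and a family $\mathcal{F}\subseteq 2^{[k]}$ containing $\emptyset$ and $[k]$. The cities are partitioned into $n/k$ blocks, and each block is identified with $[k]$ via a random bijection. A \emph{global} prefix-set is admissible if its intersection with every block lies in $\mathcal{F}$, and in addition at most one block is in a ``transition'' state; this last condition is relaxed as described below.

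A tour is captured if the sequence of its prefix-sets stays admissible. Under a uniformly random relabelling, the probability that the restriction of a fixed tour to one block traces a maximal chain of $\mathcal{F}$ is $c(\mathcal{F})/k!$, where $c(\mathcal{F})$ is the number of maximal chains in $\mathcal{F}$. Because blocks are relabelled independently, the success probability is $\bigl(c(\mathcal{F})/k!\bigr)^{n/k}$. The DP over admissible prefix-sets uses space $|\mathcal{F}|^{n/k}$ and time about $|\mathcal{F}|^{n/k}$ per trial, up to polynomial factors. Repeating $\bigl(k!/c(\mathcal{F})\bigr)^{n/k}$ times gives
\[
S=|\mathcal{F}|^{1/k},\qquad T=\Bigl(|\mathcal{F}|\,k!/c(\mathcal{F})\Bigr)^{1/k}.
\]
Derandomization would use a family of relabellings in the style of splitters or perfect hashing, with $k$ constant, so only a subexponential overhead is incurred. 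Interleaving between blocks is allowed, since a prefix-set is just a tuple of per-block states.

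With $S=\sqrt2$ we need $|\mathcal{F}|=2^{k/2}$ and $k!/c(\mathcal{F})<1.786^k$. A sanity check: if $\mathcal{F}$ is the power set of half the ground set lifted as in the Koivisto--Parviainen layered posets, we recover the old curve. The improvement must therefore come from an explicit family $\mathcal{F}$ on some small $k$ that has many maximal chains relative to its size. Such a family, disproving the Johnson--Leader--Russell conjecture, is what the paper presumably constructs in a later section.

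The step I would carry out first is the search for this $\mathcal{F}$. I would take $k$ moderate (say $k=8$ to $12$), impose symmetry under a large group to cut down the search, and either hill-climb or solve an integer program maximizing $c(\mathcal{F})$ at fixed size $2^{k/2}$. I would then combine it with the product construction: if $\mathcal{F}_1,\mathcal{F}_2$ are families on disjoint ground sets, then $\mathcal{F}_1\times\mathcal{F}_2$ has size $|\mathcal{F}_1||\mathcal{F}_2|$ and $c=\binom{k_1+k_2}{k_1}c_1c_2$. This lets us tune the exponent precisely to $S=\sqrt2$. We also need the padding argument that $n$ need not be divisible by $k$, which costs $O(1)$ factors. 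The claimed constant $1.786$ is then a finite numerical verification.

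The main obstacle is finding the explicit family; the algorithmic reduction is routine. If no good family at exactly $S=\sqrt2$ emerges, I would mix block types (two families with different $k$ and ratios) to interpolate to $S=\sqrt2$ exactly.
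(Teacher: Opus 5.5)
Your reduction is sound and is essentially the paper's framework. Admissible prefix-sets are exactly a union product of relabelled copies of $\mathcal{F}$ (the ``transition state'' condition is unnecessary, and you rightly drop it). A tour is captured iff every block restriction is a maximal chain of $\mathcal{F}$, the capture probability is $(c(\mathcal{F})/k!)^{n/k}$, and this gives $S=|\mathcal{F}|^{1/k}$ and $T=(|\mathcal{F}|\,k!/c(\mathcal{F}))^{1/k}$.

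\textbf{The main gap.} The reduction only shows that the statement follows from the existence of a set system with $S(\mathcal{F})\le\sqrt2$ and $P(\mathcal{F})<1.786$. That existence is the whole content of the theorem, and your proposal does not supply it. A planned hill-climb or integer program at $k=8$--$12$ is not a proof, and there is little reason to expect it to succeed. With $|\mathcal{F}|\le 2^{k/2}\le 64$ the budget is tiny. For instance, the tower of cubes with antichains of sizes $5,4,3$ on $[12]$ has $54$ sets but only $5!\,4!\,3!$ maximal chains, so $P\approx 2.35$. Union products cannot rescue small systems either: $\lg S$ and $\lg P$ of a product are weighted averages of those of the factors, so mixing only interpolates between what you already have. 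The paper reaches $1.786$ only as the limit of an asymptotic construction, with very little slack, so a large ground set seems unavoidable.

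\textbf{The missing idea.} You need an explicit biased construction together with a way to count its chains.
\begin{enumerate}[(1)]
\item Split $[n]$ into halves $L_1,R_1$, and fix $L_2\subseteq L_1$ and $R_2\subseteq R_1$ of size $\alpha n$.
\item Let $\mathcal{F}$ be the prefix-closure of all permutations whose first $\beta n$ entries lie in $L_2$, whose last $\beta n$ entries lie in $R_2$, and whose first (resp.\ last) half contains at least $\gamma n$ elements of $L_1$ (resp.\ $R_1$).
\item Its size is $\mathcal{O}^*\bigl(2^{\alpha n}+\binom{n/2}{\beta n}\binom{n/2}{n/2-\gamma n}\bigr)$.
\item Its chain density follows by double counting over isomorphic copies. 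It equals the fraction of choices of $(L_1,R_1,L_2,R_2)$ for which the identity permutation is supported, which is an explicit ratio of binomial coefficients.
\item Entropy estimates with $\alpha$ slightly below $1/2$, $\beta=0.4112$, and $\gamma\approx 0.4703$ (the root of $H(2\beta)+H(1-2\gamma)=1$) give $S(\mathcal{F})\le\sqrt2$ and $P(\mathcal{F})<1.785975$.
\item The framework, using Fekete's lemma to pass to $P_S$, then yields $T=\sqrt2\,P_S<1.786\sqrt2$.
\end{enumerate}

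\textbf{A secondary issue: derandomization.} With a constant block size $k$, a deterministic cover of all $k!$ block orders by relabelled copies of $\mathcal{F}$ generally needs some factor $\lambda>1$ more copies than $k!/c(\mathcal{F})$. The overhead is then $\lambda^{n/k}$, which is exponential, not subexponential as you claim. The paper avoids this by letting the block size grow like $\lg\lg\lg n$ and computing optimal covers by brute force. With a fixed $k$ you could only absorb $\lambda^{1/k}$ into the slack of the strict inequality, which again forces $k$ to be very large.
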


In \S\,\ref{sec2} we give a self-contained description of our tradeoff algorithm in a special case. This yields slightly weaker bounds than our stated results, but already improves all earlier ones, by a simple approach. In \S\,\ref{sec3} we describe our general algorithmic framework that leads to the proofs of Theorems~\ref{thm1} and \ref{thm2}. The framework reduces the algorithm design task to showing the existence of finite set systems with a high density of maximal chains. In \S\,\ref{sec4} we complete the proofs with the key ingredient: a concrete set system construction with the required property. In \S\,\ref{sec:lb} we show a lower bound on the best trade-off values that are attainable in our framework. In \S\,\ref{sec:comb} we discuss connections to set system (hypergraph) theory, and disprove the conjecture of Johnson, Leader, and Russell~\cite{JLRsetSystems} from 2013.  Finally, in \S\,\ref{sec5} we conclude with some open questions.  

\smallskip

Our general approach readily applies to the broader class of problems that Koivisto and Parviainen call \emph{permutation problems of bounded degree} (see also \cite{DBLP:journals/mst/BodlaenderFKKT12} where problems of this type appear as \emph{linear ordering problems}). Intuitively (we give a more formal definition in \S\,\ref{sec3}), for problems in this class, every permutation of $[n]$ has an associated cost that is, in some sense, decomposable into local costs, and we want to compute some aggregate value of these costs over all permutations (typically, the minimum). In the case of the TSP, the cost of a permutation is the length of the corresponding tour, which decomposes into the sum of distances of adjacent cities. Other problems in this class include the (directed) feedback arcset problem as well as the computation of the cutwidth, pathwidth and treewidth of a graph \cite{KoivistoParviainen2010, DBLP:journals/mst/BodlaenderFKKT12}. Note that Koivisto and Parviainen define this class of problems over semirings, and, while we improve on the earlier best tradeoffs for general semirings, e.g., for counting variants of various problems, our overall best bound requires the semiring to be \emph{additively idempotent} (this is the case in all the above examples, and whenever the additive operation of the semiring is $\min$, which obeys $\min(x,x) = x$ for all $x$). 

Our result in this direction can be stated as follows.

\begin{theorem}[Informal]\label{thm3}
The tradeoffs of Theorems~\ref{thm1} and \ref{thm2} are feasible for all permutation problems of constant degree over an additively idempotent semiring. Over a general semiring we obtain a tradeoff $(S,T)$ with $ST < 3.864$.
\end{theorem}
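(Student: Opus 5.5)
The plan is to carry over the framework of \S\,\ref{sec3} from TSP to a general permutation problem of constant degree over a semiring $(R,\oplus,\otimes)$. Two facts about the TSP algorithm matter. First, Bellman--Held--Karp is really a sum-product over the lattice of prefix sets. Second, our tradeoff only changes \emph{which} prefix sets are visited. Concretely, the algorithm fixes a finite set system $\mathcal{F}\subseteq 2^{[k]}$ with many maximal chains, and blows it up to $[n]$ via random (later derandomized) partitions and relabelings. For each choice from a family of such structures it then runs a restricted DP over the sets that are compatible with the structure.

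The DP recurrence for a degree-$d$ permutation problem extends a prefix set $Y$ by one element $x$ and multiplies in a local cost. That cost depends only on $x$, on $Y$, and on the last $d$ elements, so the state is $Y$ together with an ordered $d$-tuple, which is only a polynomial overhead for constant $d$. Hence the time and space accounting of \S\,\ref{sec3} goes through unchanged. In particular, time is (number of structures) $\times$ (sets per structure), and space is the maximum number of compatible sets at one layer, with the same layering trick.

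Correctness splits into two cases. In the idempotent case ($x\oplus x=x$), each restricted DP returns the $\oplus$-sum over all permutations whose prefix chain lies inside the structure. Our covering argument guarantees that every permutation is covered by at least one structure. Combining the answers with $\oplus$ therefore gives exactly the $\oplus$-sum over all permutations, since permutations counted several times contribute only once by idempotence. So Theorems~\ref{thm1} and~\ref{thm2} transfer verbatim, with the same set systems from \S\,\ref{sec4}. I expect to check this carefully in the derandomized version, where the covering family comes from splitters/perfect hash families rather than random sampling; only coverage, not multiplicity, is needed.

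For general semirings (e.g.\ counting), multiplicities matter, and this is the main obstacle. Every permutation must be covered \emph{exactly once}, so the structures have to partition the set of permutations, as in Koivisto--Parviainen's poset families. My first plan is to restrict to set systems $\mathcal{F}$ whose maximal chains, after blow-up, can be assigned a canonical structure. One example is a product of small lattices via the group decomposition. I would define the owner of a permutation by a deterministic rule, for instance the lexicographically first structure containing its chain, and enforce this rule inside the DP by forbidding transitions to sets where an earlier structure would also apply. If that rule is not locally checkable, I would instead use a weaker construction: each group of cities is split into a part of fixed size and its complement, as in the poset scheme, but with the best group-level set system for which maximal chains decompose uniquely. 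Then optimize numerically over group size and splitting fractions. I expect this optimization to give a point with $ST<3.864$, which beats Koivisto--Parviainen's $3.93$ but not the idempotent $3.572$. Establishing unique ownership while keeping the chain density high is where I expect most of the difficulty.
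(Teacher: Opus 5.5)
\textbf{The additively idempotent case is fine.} Your argument matches the paper's. The restricted Bellman--Held--Karp DP extends to constant degree with polynomial overhead, and only coverage matters under idempotence. The deterministic block-wise covering from the proof of Theorem~\ref{thm:main} (brute-force set cover on blocks of size $\Theta(\lg\lg\lg n)$, glued by union products) carries over verbatim, so you do not need splitters.

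\textbf{Gap 1: no mechanism for exact coverage.} For general semirings you correctly state the requirement, that every permutation is covered exactly once, but you give no mechanism that achieves it. Your rule (``lexicographically first owner, forbid sets where an earlier structure applies'') fails for arbitrary set systems. A set $Y\in\mathcal{F}_i\cap\mathcal{F}_k$ can lie both on chains supported by $\mathcal{F}_k$ and on chains that are not. Deleting $Y$ from $\mathcal{F}_i$ then loses permutations, while keeping it double-counts.

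The missing idea is to restrict to \emph{regularly self-intersecting} systems. These are systems where, for every relabeled copy $\mathcal{F}'$, there is $\mathcal{G}\subseteq\mathcal{F}\cap\mathcal{F}'$ such that a permutation supported by $\mathcal{F}$ has a prefix in $\mathcal{G}$ iff it is also supported by $\mathcal{F}'$. Given a covering family of copies $\mathcal{F}'_1,\dots,\mathcal{F}'_q$, set $\mathcal{F}_i=\mathcal{F}'_i\setminus\bigcup_{k<i}\mathcal{G}_i^k$. This gives exactly-once coverage:
\begin{itemize}
\item A permutation supported by both $\mathcal{F}_i$ and $\mathcal{F}_k$ with $k<i$ would need a prefix in the deleted $\mathcal{G}_i^k$, so no permutation is covered twice.
\item If $\pi$ were supported by no $\mathcal{F}_i$, take the least $i$ with $\pi$ supported by $\mathcal{F}'_i$. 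Some prefix of $\pi$ was deleted, so it lies in some $\mathcal{G}_i^k$ with $k<i$. This forces $\pi$ to be supported by $\mathcal{F}'_k$, contradicting minimality.
\end{itemize}
Set sizes only shrink, and the property is preserved under union products. Uniqueness also composes across blocks by Lemma~\ref{lemma:split}. Hence the proof of Theorem~\ref{thm:main} goes through with $P'_S$ in place of $P_S$.

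\textbf{Gap 2: no set system achieving $ST<3.864$.} ``Optimize numerically and expect $ST<3.864$'' is not an argument. If your fallback means splitting each group into a prescribed part and its complement, that is a bucket order. Koivisto and Parviainen show bucket orders bottom out at $ST\approx 3.93$, so you must go beyond them; the paper's system is not even the ideal family of a poset.

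The paper's construction is as follows. Take $L\subseteq[n]$ with $|L|=\alpha n$, and let $\mathcal{F}$ consist of all sets comparable by inclusion with some $\beta n$-subset of $L$. Then $\mathcal{F}$ supports exactly the permutations whose first $\beta n$ entries lie in $L$. For the copy given by $L'$, the witness $\mathcal{G}$ is the family of $\beta n$-subsets of $L\cap L'$. Entropy estimates give, for $\alpha/2\le\beta\le\alpha$,
\[
\lg S(\mathcal{F})\le\max\{\alpha,(1-\alpha)+\alpha H(\beta/\alpha)\}+o(1),\qquad
\lg P(\mathcal{F})\le H(\alpha)-(1-\beta)H\bigl(\tfrac{\alpha-\beta}{1-\beta}\bigr)+o(1).
\]
Choosing $\alpha=0.8412$ and $\beta=0.75\alpha$ yields $S\approx 1.7916$ and $P'_S<1.20375$, hence $S\cdot(S P'_S)<3.864$.
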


By making the tradeoff between prefix-sets and maximal chains in set systems explicit, our framework generalizes poset-based approaches and also gives a more intuitive view of the Koivisto-Parviainen result (we discuss this in Appendix~\ref{appa}). Overall, the algorithm design task is brought into a familiar setting of extremal combinatorics, where the task is to construct a (finite) combinatorial object with two parameters naturally in tension with each other.

To conclude our introduction, we mention that space-time tradeoffs have also been studied for other well-known problems. Influential results, based on rather different techniques, include the Fiat-Naor scheme for inverting functions~\cite{fiat2000rigorous} and the Schroeppel-Shamir scheme for Subset Sum~\cite{schroeppel1981t} with subsequent improvements~\cite{austrin2013space, bansal2018faster, NederlofW23}. 

\paragraph{Note on related work.} We recently learned that an improved space-time tradeoff for TSP has also been obtained independently and concurrently by Afrouz Jabal Ameli, Jesper Nederlof, and Shengzhe Wang. Their work appears on arXiv simultaneously with ours.

\section{Improved space-time tradeoff}
\label{sec2}

As a warm-up, we give a self-contained description of an algorithm (a special case of our general framework) with running time $\fO^*(2.6209^n)$ and space usage $\fO^*(1.4143^n)$, with $ST < 3.7063$.  This already improves the previous best result of $ST \approx 3.93$ and is achieved by a remarkably simple approach. Our algorithm is Monte Carlo randomized, yielding the optimal TSP tour with high probability. Later (\S\,\ref{sec3}) we show that the method can be derandomized without changing the bases of the exponentials and we also give a more general view of the algorithm, deriving improved bounds for the entire range of parameters. 

We first review the Bellman-Held-Karp dynamic programming algorithm. To solve TSP, it computes entries $\cT(S',c)$ for all $S' \subseteq S$ such that $c,c_1 \in S'$, indicating the minimum length of a tour (a Hamiltonian \emph{path}) that visits exactly the cities in $S'$, starting in $c_1$ and ending in $c$. (Notice that the choice of the starting city $c_1 \in S$ is arbitrary.) The overall optimum is then $$\OPT(S) = \min_{c \in S \setminus \{c_1\}}\{\cT(S,c) + d(c,c_1)\}.$$ Each entry of the table can be computed in $\fO(n)$ time with the recurrence $$\cT(S',c) = \min_{c' \in S'\setminus\{c_1,c\}}\{\cT(S'\setminus{c},c')+d(c',c)\},$$ 
with base cases $\cT(\{c_1,c\},c) = d(c_1,c)$, for all $c \in S \setminus \{c_1\}$. The number of table entries is upper bounded by $n \cdot 2^n$, yielding the total runtime $\fO(n^2 \cdot 2^n)$.

\medskip

Suppose we can guess a set $S' \subseteq S$ of cities with $|S'| = \lfloor n/2 \rfloor$ with the guarantee that the first $\lfloor \upalpha n \rfloor $ cities of the tour (assuming some canonical starting point $c_1$) are from $S'$, and that the last $\lfloor \upalpha n \rfloor$ cities of the tour (before returning to $c_1$) are from $S \setminus S'$. For this to be possible, we need $\upalpha \leq 1/2$; with foresight we pick $\upalpha \approx 0.445$.

Let $\cP$ denote the collection of prefix-sets that we must consider in the Bellman-Held-Karp DP. Then: $$|\cP| ~~\leq~~ 2^{|S'|} ~~+   \sum_{i=\lfloor \upalpha{n} \rfloor}^{|S'|}{|S'| \choose {i}} \cdot \sum_{i=0}^{|S \setminus S'| - \lfloor \upalpha n \rfloor}{|S \setminus S'| \choose {i}}  ~~+~~  2^{|S \setminus S'|}.$$

The first term corresponds to the first $\lfloor \upalpha n \rfloor$ cities in the tour. Since these are guaranteed to be from $S'$, we only need to consider subsets of $S'$ of which there are $2^{|S'|}$. Similarly, the last term corresponds to the last $\lfloor \upalpha n \rfloor$ cities in the tour. Since these are guaranteed to be from $S \setminus S'$, the set $S'$ must be part of the prefix, and we only need to additionally consider subsets of $S \setminus S'$  of which there are $2^{|S \setminus S'|}$.

The middle term corresponds to the portion of the tour that is between the first $\lfloor \upalpha n \rfloor$ and the last $\lfloor \upalpha n \rfloor$ cities. Here, $\lfloor \upalpha n \rfloor$ cities from $S'$ have been visited already, so they must be part of the prefix (of course we do not know which cities these are). The prefix must therefore contain a subset of at least this many cities from $S'$ (the first sum). Similarly, we may visit some cities from $S \setminus S'$, but only at most $|S \setminus S'| - \lfloor \upalpha n \rfloor$ of them (the second sum), such as to leave $\lfloor \upalpha n \rfloor$ for the last part. 

We apply the well-known identity ${n \choose k} = {n \choose n-k}$ and the standard estimate that, for $0 < \upbeta \leq 1/2$, yields $\sum_{i=0}^{\lfloor \upbeta n \rfloor}{{n \choose i} \leq 2^{n H(\upbeta)}}$, where $H(x) = -x \lg{x} - (1-x) \lg{(1-x)}$ is the binary entropy function\footnote{We denote $\lg{n} = \log_2{n}$.} (e.g., see~\cite[\S\,3.2]{FominK10}). We obtain $|\cP| \leq \fO^*(2^{n/2} + 2^{n H(2\upalpha)})$. Choosing $2\upalpha \approx 0.889972$, the root of $H(x) = \nicefrac{1}{2}$, we obtain $|\cP| \leq \fO^*(2^{n/2})$. The DP table has at most $n$ entries for each element of $\cP$, and each can be computed in $\fO(n)$ time as before, filling in entries in increasing order of the prefix-set-sizes. The bound on $|\cP|$ thus captures both the space and time requirements of the algorithm (assuming that the initial guess for $S'$ was correct). 

\smallskip

It remains to choose the set $S'$ with the above guarantees. We choose $S'$ uniformly at random among sets of cardinality $\lfloor n/2 \rfloor$. The probability $p$ that such a set fulfills the requirements is:
$$ p ~\geq~ \frac{\mybinom{n - 2 \lfloor \upalpha n \rfloor}{\lfloor n/2 \rfloor - \lfloor \upalpha n \rfloor}}{\mybinom{n}{\lfloor n/2 \rfloor}}. $$

Here the denominator counts the number of possible choices for $S'$ and the numerator captures the fact that for two disjoint subsets of size $\lfloor \upalpha n \rfloor$ the choice is fixed, and from the remaining cities exactly half should be picked in $S'$.

Using standard upper and lower bounds on the binomial coefficients, we obtain $1/p \in \fO^*(2^{n - n(1-2\upalpha)})$, which for our previous choice of $\upalpha$ yields $1/p \in \fO^*(1.8532^n)$. Picking $S'$ independently (say) $n/p$ times yields at least one with the required properties with high probability, and we can find it by taking the minimum over the obtained solutions. (Observe that all choices of $S'$ yield \emph{some} feasible solution.) 
The overall runtime is thus $\fO^*(T^n)$ with $T < 1.8532 \cdot \sqrt{2}$ and the space usage is $\fO^*(S^n)$ with $S = \sqrt{2}$, yielding $ST < 3.7063$. 

While this algorithm is randomized, it can be derandomized without changing the bases of the exponentials. We describe this in a more general setting in \S\,\ref{sec3}. We note that throughout the paper we focus, for simplicity, only on computing the \emph{value} of the optimal tour. Constructing the actual \emph{tour} can be achieved by standard modifications or black-box reductions.

\section{General framework}
\label{sec3}

In this section we develop our general approach for space-time tradeoffs for the TSP. Afterwards we extend the approach to a more general class of problems. The approach strongly builds on the Bellman-Held-Karp DP algorithm described in \S\,\ref{sec2}. 
We start with some definitions. 

\paragraph{Set systems.} A set system $\cF$ over $[n]$ is a collection of subsets of the set $[n]$, i.e., $\cF \subseteq 2^{[n]}$. We refer to $[n]$ as the ground set of $\cF$, and denote its cardinality by $n(\cF) = n$. Two set systems are said to be \emph{isomorphic} if one can be obtained from the other by a (bijective) relabeling of the elements of the ground set.

A \emph{maximal chain} of a set system $\cF$ over $[n]$ is a sequence $S_0, \dots, S_{n} \in \cF$, where $S_{i-1} \subsetneq S_{i}$ for $i \in [n]$.\footnote{In this paper we only call a chain maximal if it is maximal with respect to the full powerset $2^{[n]}$, i.e., of size $n+1$.} Notice that necessarily $|S_i| = i$, for all $i$, and in particular, $S_0 = \emptyset$ and $S_n = [n]$.

For a permutation $\pi$ of $[n]$, we let $\pi^{(0)} = \emptyset$, and for $i \in [n]$ define $\pi^{(i)} = \{\pi(1), \dots, \pi(i)\}$, the \emph{prefix-sets} of $\pi$. There is a natural bijection between permutations of $[n]$ and maximal chains of $2^{[n]}$, given by mapping a permutation to its sequence of prefix-sets. If a permutation $\pi$ is mapped to a maximal chain of $\cF$, i.e., if all prefix-sets of $\pi$ are in $\cF$, we say that $\pi$ is \emph{supported} by $\cF$.

A quantity that plays an important role in our study is the number of maximal chains of a set system $\cF$, denoted $C(\cF)$. Clearly, $C(\cF) \leq n!$, and it is natural to refer to $C(\cF)/n!$ as the \emph{chain-density} of $\cF$.
The space-time tradeoff we develop crucially depends on the existence of set systems $\cF$ of (relatively) small size and high chain-density. %
We soon make this requirement precise, but first we define normalized forms of the quantities in question.

\begin{definition}
Let $\cF$ be a set system over $[n]$. Then, the normalized size and the (inverse) normalized chain-density of $\cF$ are defined as follows. 
\begin{itemize}
    \item $S(\cF) = |\cF|^{1/n}$, and
        \item $P(\cF) = \left(\frac{n!}{C(\cF)}\right)^{1/n}$.
\end{itemize}
Note that in the latter definition we require $C(\cF) > 0$, and otherwise let $P(\cF) = +\infty$. %
\end{definition}

One can observe that the bounds $S(\cF) \leq 2$ and $P(\cF) \geq 1$ always hold.  

\paragraph{Algorithms.} The following key lemma is the bridge between set systems with a certain structure and feasible space-time tradeoffs. Perhaps surprisingly, the existence of a single finite set system that satisfies the condition is sufficient, and knowledge of which set system achieves this is not required. 

\begin{lemma}\label{lem:main}
Let $\cF$ be a set system with $S(\cF) \leq S$ and $S(\cF) \cdot P(\cF) \leq T$, for some $1 < S \leq 2$ and $S \leq T$. Then, there is an algorithm that solves TSP on inputs of size $n$, deterministically, in $T^{n+o(n)}$ time and $S^{n+o(n)}$ space.
\end{lemma}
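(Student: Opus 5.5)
Given a fixed finite set system $\cF$ over $[k]$, with $k = n(\cF)$ a constant, I will lift it to large $n$ by a product construction and then cover all permutations of $[n]$ with relabelings of the product system.

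\textbf{Product system.} Write $n = mk + r$ with $0 \le r < k$, and split $[n]$ into $m$ blocks $B_1,\dots,B_m$ of size $k$ plus a remainder $R$ of size $r$. Identify each block with $[k]$ via a bijection, and let $\cF^{\otimes}$ be the family of sets $X \subseteq [n]$ such that $X \cap B_j$ corresponds to a member of $\cF$ for every $j$. No constraint is placed on $R$. Then
\[
|\cF^{\otimes}| \le |\cF|^m 2^r \le S^{n+o(n)} .
\]
A permutation $\pi$ of $[n]$ is supported by $\cF^{\otimes}$ exactly when, for each $j$, the restriction of $\pi$ to $B_j$ is a maximal chain of $\cF$. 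Counting interleavings gives
\[
C(\cF^{\otimes}) = \frac{n!}{(k!)^m}\, C(\cF)^m ,
\]
so the fraction of supported permutations is $(C(\cF)/k!)^m = P(\cF)^{-km} \ge P(\cF)^{-n}$.

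\textbf{Restricted DP.} For a fixed relabeling $\sigma$, run the Bellman--Held--Karp DP of \S\,\ref{sec2}, restricted to prefix-sets in $\sigma(\cF^{\otimes})$ (cities indexed so that $c_1$ is arbitrary; rotate the tour so that the canonical start is handled, or equivalently treat a path DP with $n$ choices of start). This computes the minimum over all tours whose prefix-set chain lies in the family. It needs $\fO^*(|\cF^{\otimes}|)$ space and time, provided we can enumerate the family and check membership in polynomial time per set. That is possible because $\cF$ has constant size: membership is tested blockwise, and the family is enumerated as a product.

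\textbf{Covering.} Take the minimum over a collection of relabelings $\sigma_1,\dots,\sigma_L$ such that every permutation is supported by some $\sigma_i(\cF^{\otimes})$. Every run returns the length of some genuine tour, so the overall minimum is $\OPT$. A random relabeling supports a fixed permutation with probability at least $P(\cF)^{-n}$, so $L = \fO(n \log n \cdot P(\cF)^n)$ random choices suffice by a union bound over all $n!$ permutations. The total time is then $L \cdot \fO^*(S(\cF)^n 2^r) \le T^{n+o(n)}$, while the space stays $S^{n+o(n)}$ since runs are sequential.

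\textbf{Derandomization (main obstacle).} The collection must be found deterministically and in small space. Greedy set cover over $n!$ permutations is infeasible directly. My plan is instead to exploit the product structure and reduce to a constant-size problem:
\begin{itemize}
\item First choose the partition of $[n]$ into blocks. Permutations induce on each block only a relative order, so it suffices to cover, for each block, the $k!$ orders of $[k]$ by relabelings of $\cF$.
\item Within one block, a greedy or fractional cover with $\fO(P(\cF)^k \cdot k\log k)$ relabelings of $\cF$ covers all orders, and this is computable in constant time.
\item Take all combinations across blocks: $(c\,P(\cF)^k k\log k)^m = P(\cF)^{n} \cdot 2^{\fO(n\log k / k)}$. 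The extra factor is not $o(n)$ for fixed $k$, which is the crux.
\end{itemize}
To handle this I will let $k$ grow by replacing $\cF$ with its $t$-fold product $\cF^{\otimes t}$ on $[kt]$ with $t = t(n) \to \infty$ slowly. This product has the same $S$ and $P$, so the overhead $\fO(\log(kt)/(kt))$ per element vanishes. The cover of the $(kt)!$ orders must then itself be found efficiently, for instance by greedy set cover in time $((kt)!)^{\fO(1)} = 2^{o(n)}$ when $kt = o(n/\log n)$. The same greedy also handles the choice of block partition: rather than fixing it, I would include the partition choices in the covering argument over the $\binom{n}{kt,\dots}$ possibilities, or simply reuse the per-block cover after a fixed partition, since any permutation restricts to some order on each block. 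The result is a deterministic algorithm in $T^{n+o(n)}$ time and $S^{n+o(n)}$ space.
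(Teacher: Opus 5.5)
Your proposal is correct. The paper derives this lemma from Theorem~\ref{thm:main}, and that proof has the same architecture as your derandomized algorithm:
\begin{itemize}
\item split $[n]$ into blocks of slowly growing size;
\item for each block, solve a set cover instance to get a family of set systems supporting every permutation of the block;
\item enumerate all tuples of per-block choices and combine them by union products (Lemma~\ref{lemma:split});
\item run the restricted DP of Lemma~\ref{thm:TSP_limited_prefixes} on each combination.
\end{itemize}

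The difference is in how the per-block cover is produced and bounded.

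\emph{The paper's route.} It uses blocks of size between $m$ and $2m$, with $m\approx\frac12\lg\lg\lg n$. It brute-forces an exact minimum cover over all set systems of size at most $S^{n_i}$. It bounds the size of that cover by $(P_S+o(1))^{n_i}$, using Fekete's lemma (Lemma~\ref{lem:fekete}) and the probabilistic Lemma~\ref{thm:almost_optimal_chain_covering}.

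\emph{Your route.} You fix the block system to be the $t$-fold union product of the given $\cF$, which has the same $S$ and $P$ by Lemma~\ref{lem37}, and leave a remainder of size $<kt$ unconstrained. You then bound the greedy cover by $1+\ln((kt)!)$ times the uniform fractional cover. Putting weight $1/C(\cF^{\otimes t})$ on every relabeling gives that fractional cover the value $P(\cF)^{kt}$. The resulting overhead $2^{\fO(n\log(kt)/(kt))}$ is $2^{o(n)}$, exactly as in the paper.

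\emph{What each buys.} Your argument is more elementary for this particular lemma. It needs no Fekete and no search over all set systems. Since only relabelings of one system are searched, blocks can be much larger (e.g.\ of size about $\log n/\log\log n$, keeping the greedy computation polynomial), which gives smaller lower-order terms. Your algorithm does depend on $\cF$, but this is harmless because $\cF$ is a fixed finite object that can be hard-coded. The paper's route yields the stronger Theorem~\ref{thm:main}, with base $S\cdot P_S$ given by the infimum. Its algorithm never needs to be told a good set system, which is the point the paper stresses before stating the lemma.

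One small remark: your aside about folding the choice of block partition into the covering argument is unnecessary. A single fixed partition suffices, as your second alternative already says, since every permutation induces some order on each block.
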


More strongly, we can consider the extremal set systems of a given size over a given ground set that minimize the inverse normalized chain-density defined above. These set systems yield algorithms with a favorable space-time tradeoff.
Precisely, for any $1<S\leq 2$, denote 
\begin{align*}
P_S & = \inf\left\{P(\cF) \mid  S(\cF)\leq S\right\}, \text{~and}\\ 
P_S(n) & = \min\left\{P(\cF) \mid S(\cF)\leq S,~n(\cF)=n\right\}.
\end{align*}
Here, $P_S(n)=+\infty$ if the minimum is taken over an empty set. We now state our key theorem, proved in \S\,\ref{sec31}.

\begin{theorem}\label{thm:main}
    For any $1 < S \leq 2$, there is an algorithm that solves TSP on inputs of size $n$, deterministically, in $(S\cdot P_S)^{n+o(n)}$ time and $S^{n+o(n)}$ space.
\end{theorem}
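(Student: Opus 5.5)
The plan is to derive the statement from Lemma~\ref{lem:main} by a diagonalization over a sequence of set systems whose chain-densities approach the infimum $P_S$. First I would check that $P_S<\infty$ for every $1<S\leq 2$, so that the infimum is approached by actual set systems. A single finite $\cF$ with $S(\cF)\leq S$ and $C(\cF)>0$ suffices, and one exists. For instance, take a ground set $[m]$ and the family consisting of a single maximal chain. Its size is $m+1$, so $S(\cF)=(m+1)^{1/m}\to 1$, and this is at most $S$ for $m$ large.

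Next I fix a sequence $\cF_1,\cF_2,\dots$ of set systems with $S(\cF_k)\leq S$ and $P(\cF_k)\leq P_S(1+1/k)$. Set $T_k = S\cdot P(\cF_k)$. Since $P(\cF_k)\geq 1$, we have $T_k\geq S$, and also $S(\cF_k)P(\cF_k)\leq T_k$. So Lemma~\ref{lem:main} applies to $\cF_k$ with parameters $(S,T_k)$. It yields a deterministic algorithm $A_k$ running in time $T_k^{n+o(n)}$ and space $S^{n+o(n)}$. The $o(n)$ terms here depend on $k$, through the fixed finite system $\cF_k$ (its ground-set size, $|\cF_k|$, and so on). Unwinding the definition, for each $k$ there is a threshold $N_k$ such that for all $n\geq N_k$, $A_k$ uses time at most $T_k^{n(1+1/k)}$ and space at most $S^{n(1+1/k)}$. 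I can choose the thresholds strictly increasing, $N_1<N_2<\cdots$.

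The combined algorithm, on an input of size $n$ with $N_k\leq n<N_{k+1}$, runs $A_k$; for $n<N_1$ it uses brute force, which costs $O(1)$. Then $k=k(n)\to\infty$ as $n\to\infty$. The time is at most
\[
\bigl(S P_S(1+1/k)\bigr)^{n(1+1/k)} = (SP_S)^{n+o(n)},
\]
because $n\cdot\bigl((1+1/k)\log(1+1/k) + \tfrac1k\log(SP_S)\bigr)=o(n)$. Likewise the space is $S^{n(1+1/k)}=S^{n+o(n)}$.

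The only delicate step is exactly this non-uniformity: Lemma~\ref{lem:main} gives an $o(n)$ error for each fixed $\cF$, and the thresholds $N_k$ are what turn this into a single $o(n)$. The existence of the algorithm is all that is claimed, so the $\cF_k$ and $N_k$ may simply be hard-wired. If one wanted explicitness, $\cF_k$ could instead be found by exhaustive search over set systems on a fixed ground set, which is a constant-time overhead for each fixed $k$.
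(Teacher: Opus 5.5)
Your argument is circular. In the paper, Lemma~\ref{lem:main} has no independent proof. It is introduced as a consequence of the very theorem you are proving (``Theorem~\ref{thm:main} clearly implies Lemma~\ref{lem:main}'', i.e.\ apply the theorem with $S(\cF)$ in place of $S$). Your diagonalization over $\cF_k$ with $P(\cF_k)\to P_S$ is arithmetically fine, but it is the easy part. The actual content of the theorem is turning a \emph{single finite} set system with small $S(\cF)$ and $P(\cF)$ into a TSP algorithm for \emph{every} input size $n$, and you assume this rather than prove it. Two ideas are missing.

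First, the restricted Bellman--Held--Karp DP (Lemma~\ref{thm:TSP_limited_prefixes}) only finds the best tour supported by one $\cF$. So you must cover all permutations by relabelled copies of $\cF$. By symmetry, a uniformly random relabelling of a system on $[m]$ with $P(\cF)=P$ supports any fixed permutation with probability exactly $P^{-m}$. Hence $P^m m^2$ random copies cover all $m!$ permutations by a union bound (Lemma~\ref{thm:almost_optimal_chain_covering}).

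Second, since $\cF$ lives on a fixed small ground set, you must lift to $[n]$. Split $[n]$ into blocks of sizes $n_i\in[m,2m]$ and, for every index tuple $\mathbf{j}$, form $\bigutimes_i \cF^i_{j_i}$. By Lemma~\ref{lemma:split}, a permutation is supported by this product iff each induced block permutation is supported by its factor. So the $\prod_i q_i\le (P_S+o(1))^n$ products cover all tours. By Lemma~\ref{lem37} each product has at most $S^n$ sets, which gives time $(SP_S)^{n+o(n)}$ and space $S^{n+o(n)}$.

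There is also a uniformity problem in your last step. A single algorithm cannot hard-wire the infinite sequence $(\cF_k,N_k)_{k\ge 1}$. Your fallback of exhaustive search would need to know $P_S$, to test $P(\cF_k)\le P_S(1+1/k)$. It would also need the explicit $o(n)$ terms hidden inside Lemma~\ref{lem:main}, to compute $N_k$.

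The paper avoids this by letting the block size grow very slowly, $m=\lfloor (\lg\lg\lg n)/2\rfloor$. Then optimal covering families for each block can be found deterministically by brute-force set cover in time polynomial in $n$. Fekete's lemma (Lemma~\ref{lem:fekete}, $P_S(m)\to P_S$) then absorbs the gap between $P_S(m)$ and $P_S$ into the $o(n)$ term. This plays the role of your diagonalization, but uniformly in $n$.
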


Theorem~\ref{thm:main} clearly implies Lemma~\ref{lem:main}. Notice that if we only wish to find a single small $ST$ value, for some feasible tradeoff pair $(S,T)$, then the quantity of interest is $|\cF|^2 / C(\cF)$, which we should minimize for a given ground set size $n(\cF)$.

Given a feasible pair $(S,T)$, we can obtain other feasible pairs (with smaller $S$ and larger $T$) by combining the corresponding algorithm with the divide and conquer approach of Gurevich and Shelah, up to a certain level of recursion.
\begin{lemma}\label{lemma:partialDQ}
If $(S,T)$ is a feasible space-time tradeoff for TSP, then $(\sqrt{S},2\sqrt{T})$ is also feasible. 
\end{lemma}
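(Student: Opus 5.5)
We will use one level of the Gurevich--Shelah divide and conquer and then hand both halves to the assumed $(S,T)$ algorithm. Let $\mathcal{A}$ be an algorithm solving TSP instances of size $m$ in time $T^{m+o(m)}$ and space $S^{m+o(m)}$. On an instance of size $n$ with fixed start city $c_1$, we enumerate all subsets $S'\subseteq S$ with $c_1\in S'$ and $|S'|=\lceil n/2\rceil$, together with all pairs $(a,b)$ with $a\in S'$ and $b\in S\setminus S'$. There are $\binom{n}{\lceil n/2\rceil}\cdot n^2\in 2^{n+\fO(\log n)}$ such guesses.

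Each guess gives two Hamiltonian path subproblems. The first is a shortest path from $c_1$ to $a$ through exactly $S'$. The second is a shortest path from $b$ back towards $c_1$ through exactly $S\setminus S'$, closing the tour with the edge $d(\cdot,c_1)$. The answer for the guess is the sum of the two path costs plus $d(a,b)$. The optimal tour is the minimum over all guesses, since its first $\lceil n/2\rceil$ cities form one of the enumerated sets.

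The technical step is to reduce a path subproblem with prescribed endpoints to a TSP instance of size about $n/2$, so that $\mathcal{A}$ applies as a black box. For a path from $u$ to $v$ through a set $X$, we build a cycle instance on $X$ by redefining $d'(v,u)$ to be $0$, or a very negative number if negative weights are not allowed, and setting all other edges $d'(x,u)=M$ for a huge $M$. This forces the optimal tour to use the edge $v\to u$. The tour value minus the contribution of that edge is then exactly the shortest $u$--$v$ path through $X$.

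A cleaner alternative avoids the big-$M$ construction. We add one dummy city $z$ with $d'(z,u)=0$, $d'(v,z)=0$, and $d'(x,z)=d'(z,x)=M$ for all other $x$. This gives a TSP instance of size $|X|+1$. In either form the sizes stay $n/2+\fO(1)$, and the $+\fO(1)$ is absorbed into the $o(n)$ terms. The second path subproblem is handled the same way, with the closing distance to $c_1$ folded in. If TSP is defined only for symmetric or nonnegative distances, we note that the directed formulation in the paper allows arbitrary $d$, so the construction is legitimate. This gadget is the step I expect to need the most care, but it is routine.

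Time and space then follow directly. Each guess makes two calls to $\mathcal{A}$ on instances of size at most $n/2+\fO(1)$, costing $T^{n/2+o(n)}$ time. The total time is therefore
\[
2^{n+o(n)}\cdot T^{n/2+o(n)}=(2\sqrt{T})^{n+o(n)}.
\]
For space, the guesses are enumerated one at a time in polynomial space, and we keep only the running minimum. The space used by $\mathcal{A}$ can be reused across calls, so the total space is $S^{n/2+o(n)}+\mathrm{poly}(n)=(\sqrt{S})^{n+o(n)}$, using $S\ge 1$. Hence the pair $(\sqrt{S},2\sqrt{T})$ is feasible.
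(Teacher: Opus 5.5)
Your proposal is correct and matches the paper's proof. Both use one level of Gurevich--Shelah splitting, which costs a $2^{n+o(n)}$ factor for guessing the half-set and the endpoints, and then apply the assumed algorithm as a black box to both halves, giving $2^{n+o(n)}\cdot T^{n/2+o(n)}$ time and $S^{n/2+o(n)}$ space.

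You additionally spell out the path-to-tour reduction, which the paper only attributes to Gurevich--Shelah. Your dummy-city gadget is the better choice, since it also works for symmetric instances. The first gadget needs asymmetric distances, and its phrase ``if negative weights are not allowed'' should read ``allowed''.
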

\begin{proof}
Suppose that an algorithm $\mathcal{A}$ exists that solves TSP on inputs of size $n$ in time $T^{n + o(n)}$ and space $S^{n + o(n)}$. We guess the $\lfloor n/2 \rfloor$ cities making up the first half of the optimal tour and the cities beginning and ending both halves of the tour. This implies a factor of $\fO^*(2^n)$ in the running time. We then solve both halves recursively, using $\mathcal{A}$ as a black box, yielding the overall running time $2^{n+o(n)} \cdot T^{n/2}$ and space $S^{n/2 + o(n)}$, as required. A small omitted detail is that $\mathcal{A}$ finds a \emph{tour} (Hamiltonian \emph{cycle}), whereas in the recursive calls a Hamiltonian \emph{path} between two endpoints is sought for; this issue can be addressed identically to the Gurevich-Shelah algorithm in a black box manner.  
\end{proof}

Theorems~\ref{thm1} and \ref{thm2} in \S\,\ref{sec1} follow from Theorem~\ref{thm:main} and Lemma~\ref{lemma:partialDQ}, combined with a set system construction in \S\,\ref{sec4}.
Concretely, we show the existence of set systems that imply the bound $S^2\cdot P_S < 4$ for $1.38 < S < 2$ (see Figure~\ref{fig:LB}, solid red line), thus implying the claim of Theorem~\ref{thm1} over this range. In particular, for $S=\sqrt{2}$, we show that $P_S < 1.786$, implying Theorem~\ref{thm2}.

As we show in \S\,\ref{sec:lb}, $S^2\cdot P_S > 4$ when $S<1.15$ so the approach described up to now is not sufficient to prove Theorem~\ref{thm1} over the entire range $1<S<2$. However, by additionally applying Lemma~\ref{lemma:partialDQ} (up to $\fO(\log{n})$ times), we obtain a tradeoff curve that dominates $ST = 4$ over the entire range, as shown in Figure~\ref{fig1}. 

\subsection{Space-time tradeoffs from set systems}\label{sec31}

We state some additional definitions and lemmas that facilitate the proof of Theorem~\ref{thm:main}.

The next lemma follows directly from running the Bellman-Held-Karp algorithm as we did in the warm-up of \S\,\ref{sec2}, computing only the entries of the DP table corresponding to sets in $\cF$. 
\begin{lemma}
\label{thm:TSP_limited_prefixes}
    Given an instance of TSP on $n$ cities and a set system $\cF$ on $[n]$, one can compute the tour of least cost among those supported by $\F$ in $\fO^*(|\cF|)$ time and space.
\end{lemma}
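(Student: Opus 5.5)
We run the Bellman-Held-Karp recurrence restricted to $\cF$. Fix the starting city $c_1$ and identify cities with $[n]$. For every $S' \in \cF$ and every $c \in S'$, define $\cT_\cF(S',c)$ as the minimum length of a Hamiltonian path on $S'$ that starts at $c_1$, ends at $c$, and all of whose prefix-sets lie in $\cF$. The recurrence is
$$\cT_\cF(S',c) = \min_{\substack{c' \in S'\setminus\{c\}\\ S'\setminus\{c\} \in \cF}} \{\cT_\cF(S'\setminus\{c\},c') + d(c',c)\}.$$
The base case is $\cT_\cF(\{c_1\},c_1)=0$, provided $\{c_1\}\in\cF$ (and $\emptyset\in\cF$). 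The answer is $\min_c \{\cT_\cF([n],c)+d(c,c_1)\}$, and it is $+\infty$ if no supported tour exists.

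One subtlety is that a supported permutation need not start with $c_1$. A tour, however, is a cyclic order, and we want the minimum over all permutations $\pi$ supported by $\cF$ of the cyclic tour cost. To handle this, we run the above computation once for each choice of first city $\pi(1)=c$ with $\{c\}\in\cF$, restricting to paths that start at $c$. This costs only an extra factor of $n$. Equivalently, we can seed the base cases $\cT_\cF(\{c\},c)=0$ for every singleton in $\cF$, as long as the start city is tracked, which adds a factor of $n$ in the state.

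Correctness follows by induction on $|S'|$. A path on $S'$ ending at $c$ all of whose prefix-sets lie in $\cF$ decomposes into a path on $S'\setminus\{c\}$ with the same property ending at some $c'$, followed by the edge $(c',c)$. Conversely, every such combination yields a valid path, since $S'\in\cF$. The final closing edge gives exactly the supported tours.

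For complexity, there are at most $n^2|\cF|$ table entries, counting the start and end cities. Each entry needs up to $n$ membership queries "$S'\setminus\{c\}\in\cF$?". These are answered in $\fO(n)$ time by storing $\cF$ in a trie or hash table keyed by characteristic vectors, which uses $\fO(n|\cF|)$ space. We process the sets of $\cF$ in order of increasing cardinality, which takes $\fO^*(|\cF|)$ time by bucketing. The total is therefore $\fO^*(|\cF|)$ time and space.

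The only mildly delicate points are the free starting city, handled by the factor $n$, and the assumption that $\cF$ is given explicitly so that this indexing is possible. I expect no real obstacle here.
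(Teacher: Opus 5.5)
Your proposal is correct and follows the paper's own argument, which simply runs the Bellman--Held--Karp dynamic program of \S\,\ref{sec2} and computes only the table entries indexed by sets in $\cF$. The details you add are left implicit in the paper: the base cases requiring $\emptyset,\{c_1\}\in\cF$, trying every possible first city at a factor-$n$ cost (the main theorem could equally fix $c_1$, since every permutation is covered), and storing $\cF$ in a trie for $\fO(n)$-time membership tests. All of these cost only polynomial overhead, so the $\fO^*(|\cF|)$ time and space bounds stand.
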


The following operation is a useful way to construct larger set systems from smaller ones, while preserving the normalized quantities of interest.
\begin{definition}
    Given two set systems $\cF_1$ and $\cF_2$ over ground sets $[n_1]$ and $[n_2]$ respectively, the \emph{union product} of $\cF_1$ and $\cF_2$ (denoted by $\cF_1 \utimes \cF_2$) is the set system over $[n_1+n_2]$ defined as 
    \[
        \cF_1 \utimes \cF_2 = \{s_1 \cup (s_2+n_1) \mid s_1\in\cF_1, s_2\in \cF_2\},
    \]
    where  $(s_2+n_1)$ denotes the set $\{e+n_1 \mid e\in s_2\}$.
\end{definition}
Note that this operation is not commutative, although the set systems $\cF_1 \utimes \cF_2$ and $\cF_2 \utimes \cF_1$ are isomorphic.

We show that the union product affects the defined set system parameters in an intuitive way.

\begin{lemma}\label{lem37}
    Let $\cF_1,\cF_2,\ldots,\cF_k$ be set systems over ground sets $[n_1], [n_2], \ldots, [n_k]$ respectively, and let $\cF = \bigutimes_{i=1}^k \cF_i$.
    Then,     \begin{align*}
        n(\cF) &= \sum_{i=1}^k n_i,\\
        S(\cF) &= %
        \prod_{i=1}^k S(\cF_i)^{{n_i}/{n(\cF)}},\\
        P(\cF) &= 
        \prod_{i=1}^k P(\cF_i)^{n_i/n(\cF)}.
    \end{align*}
    In particular, if $S(\cF_i)\leq S$ and $P(\cF_i)\leq P$ for all $1\leq i\leq k$, then $S(\cF)\leq S$ and $P(\cF)\leq P$.
\end{lemma}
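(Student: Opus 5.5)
I will prove the statement for $k=2$ and then induct on $k$. The union product is associative up to the obvious identification of ground sets, so $\bigutimes_{i=1}^k \cF_i = \left(\bigutimes_{i=1}^{k-1}\cF_i\right)\utimes \cF_k$. The formulas are multiplicative in the weighted sense, so the general case follows from the two-factor case. Writing $N=n(\cF)$, the claims are equivalent to $|\cF| = \prod_i |\cF_i|$ and $N!/C(\cF) = \prod_i n_i!/C(\cF_i)$; taking $N$-th roots then gives the stated expressions. The claim about $n(\cF)$ is immediate from the definition.

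For the size, I will show that the map $(s_1,s_2)\mapsto s_1\cup(s_2+n_1)$ is a bijection from $\cF_1\times\cF_2$ to $\cF_1\utimes\cF_2$. Surjectivity holds by definition. Injectivity holds because $s_1$ is recovered as the intersection with $[n_1]$, and $s_2$ as the part above $n_1$ shifted back down. Hence $|\cF_1\utimes\cF_2|=|\cF_1|\,|\cF_2|$.

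The main step is counting chains: I need $C(\cF_1\utimes\cF_2)=\binom{n_1+n_2}{n_1}C(\cF_1)C(\cF_2)$. Then
\[
\frac{(n_1+n_2)!}{C(\cF)}=\frac{n_1!}{C(\cF_1)}\cdot\frac{n_2!}{C(\cF_2)},
\]
as required. I will use permutations. A permutation $\pi$ of $[N]$ determines a subsequence $\pi_1$ of its elements in $[n_1]$, a subsequence $\pi_2$ (shifted down by $n_1$) of its elements in $[n_1+1,N]$, and an interleaving pattern, which is a choice of $n_1$ positions out of $N$. This gives a bijection between permutations of $[N]$ and triples $(\pi_1,\pi_2,\text{pattern})$. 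Each prefix-set $\pi^{(j)}$ equals $\pi_1^{(a)}\cup(\pi_2^{(b)}+n_1)$ with $a+b=j$. By the decomposition used for the size, $\pi^{(j)}\in\cF$ if and only if $\pi_1^{(a)}\in\cF_1$ and $\pi_2^{(b)}\in\cF_2$. As $j$ ranges over $0,\dots,N$, the index $a$ takes every value in $0,\dots,n_1$ and $b$ takes every value in $0,\dots,n_2$, since each step of the interleaving increments exactly one of them. Hence $\pi$ is supported by $\cF$ if and only if $\pi_1$ is supported by $\cF_1$ and $\pi_2$ is supported by $\cF_2$, with the interleaving arbitrary. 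This gives the count. The only point needing care is this last "if and only if", in particular that every prefix index of each factor is actually attained. If some $C(\cF_i)=0$, both sides are zero and $P=+\infty$ on both sides, so the formula holds with that convention.

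For the final claim, both $S(\cF)$ and $P(\cF)$ are weighted geometric means of the $S(\cF_i)$ and the $P(\cF_i)$, with weights $n_i/N$ summing to $1$. Each factor is at most $S$ (respectively $P$), so the product is at most $S^{\sum n_i/N}=S$, and likewise for $P$.
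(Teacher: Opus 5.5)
Your proof is correct and follows essentially the paper's argument: a bijection between sets of $\cF$ and tuples $(s_1,\dots,s_k)$, and the count $C(\cF)=\binom{N}{n_1,\dots,n_k}\prod_i C(\cF_i)$ obtained from a bijection between supported permutations and (component chains, interleaving pattern). The differences are cosmetic: you reduce to $k=2$ via associativity of $\utimes$ and explicitly check the support equivalence, which the paper proves separately in Lemma~\ref{lemma:split}, whereas the paper handles all $k$ factors at once with a multinomial coefficient.
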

\begin{proof}
    The first equality is immediate from the definition of the union product. The second equality follows from observing that $|\cF| = \prod_{i=1}^k{|\cF_i|} = \prod_{i=1}^k{S(\cF_i)^{n_i}}$, since the ground sets of the set systems $\cF_i$ are shifted such as to make them disjoint, and there is a bijection between sets of $\cF$ and tuples $(s_1,\dots,s_k)$ where $s_i \in \cF_i$.

    For the third equality, we notice that there is a bijection between maximal chains of $\cF$ and tuples $(c_1, \dots, c_k)$ where $c_i$ is a maximal chain of $\cF_i$, once we fix the way these $k$ chains ``interleave'' to form the large chain. The number of possible ways to interleave them is $\binom{n(\cF)}{n_1,\dots,n_k} = \frac{n(\cF)!}{\prod_{i=1}^k{n_i!}}$. 
    It follows that $C(\cF) = \binom{n(\cF)}{n_1,\dots,n_k} \cdot \prod_{i=1}^{k}{C(\cF_i)}$, yielding the result by re-arranging and normalization.
\end{proof}

We also show that union products of set systems play along nicely with a natural notion of combining permutations, preserving the relation of \emph{support} between set systems and permutations. 

\begin{definition}
    Let $n>1$ and let $\pi$ be a permutation of $[n]$. For $n_1 > 0$ and $n_2 > 0$ such that $n_1+n_2 = n$, we call the pair $(\pi_1,\pi_2)$ the $(n_1,n_2)$-induced split of $\pi$, if
    \begin{itemize}
        \item $\pi_1$ is the permutation obtained from $\pi$ by ignoring all elements larger than $n_1$ (in other words, it is the permutation induced by $\pi$ on $[n_1]$),
        \item $\pi_2$ is is the permutation obtained from $\pi$ by ignoring all elements smaller or equal to $n_1$ and subtracting $n_1$ from the remaining elements.
    \end{itemize}

    Let $k>2$, and $n_1,n_2,\ldots,n_k$ be positive integers summing to $n$. 
    We define the $(n_1,n_2,\ldots,n_k)$-induced split $(\pi_1,\pi_2,\ldots,\pi_k)$ of $\pi$ recursively by letting $(\pi',\pi_k)$ be the $(n_1+n_2+\ldots+n_{k-1},n_k)$-induced split of $\pi$, and $(\pi_1,\pi_2,\ldots,\pi_{k-1})$ the $(n_1,n_2,\ldots,n_{k-1})$-induced split of $\pi'$.
\end{definition}

For example, if $n=7$ and $\pi= (1,4,3,6,2,5,7)$, then the $(2,2,3)$-induced split of $\pi$ is $(\pi_1,\pi_2,\pi_3)$, where $\pi_1 = (1,2)$, $\pi_2 = (2,1)$ and $\pi_3 = (2,1,3)$. 

\begin{lemma}\label{lemma:split}
    Let $\cF_1,\cF_2,\ldots,\cF_k$ be set systems over ground sets $[n_1], [n_2], \ldots, [n_k]$ respectively, let $\cF = \bigutimes_{i=1}^k \cF_i =  \cF_1\utimes\cF_2\cdots \utimes \cF_k$, and
    let $n=n(\cF)$ ($=n_1+\ldots +n_k$). 
    If $\pi$ is a permutation of $[n]$ and $(\pi_1,\pi_2,\ldots,\pi_k)$ is the $(n_1,n_2,\ldots,n_k)$-induced split of $\pi$, then $\pi$ is supported by $\cF$ if and only if $\pi_i$ is supported by $\cF_i$ for all $1\leq i\leq k$.
\end{lemma}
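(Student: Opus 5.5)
I will first prove the case $k=2$ directly and then obtain the general statement by induction on $k$, following the recursive definition of the induced split. The induction step is the easy part. The $(n_1,\ldots,n_k)$-induced split is defined via the $(n_1+\cdots+n_{k-1},n_k)$-induced split $(\pi',\pi_k)$, and the union product is associative, with $\cF=(\cF_1\utimes\cdots\utimes\cF_{k-1})\utimes\cF_k$; this follows immediately from the definition, since shifts compose additively. So applying the $k=2$ case to $\cF'=\bigutimes_{i<k}\cF_i$ and $\cF_k$ shows that $\pi$ is supported by $\cF$ iff $\pi'$ is supported by $\cF'$ and $\pi_k$ is supported by $\cF_k$. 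The induction hypothesis applied to $\pi'$ then finishes the argument.

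For $k=2$, write $A=[n_1]$ and $B=\{n_1+1,\ldots,n\}$. Every set $X\subseteq[n]$ decomposes uniquely as $X=(X\cap A)\cup(X\cap B)$. By definition of the union product, $X\in\cF_1\utimes\cF_2$ iff $X\cap A\in\cF_1$ and $(X\cap B)-n_1\in\cF_2$. This uses the disjointness of $A$ and $B+$shift: the representation $s_1\cup(s_2+n_1)$ is forced to have $s_1=X\cap A$ and $s_2=(X\cap B)-n_1$. The key observation relates prefix-sets. For each $0\le i\le n$ we have $\pi^{(i)}\cap A=\pi_1^{(a_i)}$ and $(\pi^{(i)}\cap B)-n_1=\pi_2^{(b_i)}$, where $a_i=|\pi^{(i)}\cap A|$ and $b_i=i-a_i$. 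This holds because $\pi_1$ lists the elements of $A$ in the order in which they appear in $\pi$, and similarly for $\pi_2$. Moreover, as $i$ runs from $0$ to $n$, $a_i$ takes every value in $\{0,\ldots,n_1\}$ and $b_i$ takes every value in $\{0,\ldots,n_2\}$, since each increases by at most one per step and they go from $0$ to $n_1$ and from $0$ to $n_2$ respectively.

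Both directions now follow. ($\Leftarrow$) If all prefix-sets of $\pi_1$ lie in $\cF_1$ and all prefix-sets of $\pi_2$ lie in $\cF_2$, then each $\pi^{(i)}$ decomposes into such prefix-sets and so lies in $\cF$. ($\Rightarrow$) Fix $j\le n_1$ and choose $i$ with $a_i=j$, which is possible by surjectivity. Then $\pi^{(i)}\in\cF$ gives $\pi_1^{(j)}=\pi^{(i)}\cap A\in\cF_1$, and the same argument applies to $\pi_2$. The only point needing care is this surjectivity of $i\mapsto a_i$ together with the uniqueness of the decomposition of elements of $\cF$; neither is a real obstacle.
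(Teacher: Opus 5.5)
Your proof is correct and follows the paper's route: you settle $k=2$ by identifying $\pi^{(i)}\cap[n_1]$ and $(\pi^{(i)}\setminus[n_1])-n_1$ with prefix-sets of $\pi_1$ and $\pi_2$, then pass to general $k$ through the recursive definition of the induced split. Your surjectivity of $i\mapsto a_i$ plays the role of the paper's choice of $j_i$ as the position of $\pi_1(i)$ in $\pi$, and your explicit induction using associativity of $\utimes$ spells out the paper's ``iterate the argument through repeated splitting.''
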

\begin{proof}
    We prove both directions for $k=2$ only, for $k>2$ we can iterate the argument through repeated splitting.
    
    For all $s \in \cF$, we have $s \cap [n_1] \in \cF_1$. Let $j_i$ be the position where $\pi_1(i)$ appears in $\pi$. Since $\pi$ is supported by $\cF$, we have $\pi^{(j_i)} \in \cF$, and therefore $\pi^{(j_i)} \cap [n_1] = \pi_1^{(i)} \in \cF_1$, for $0 \leq i \leq n_1$.
    
    Similarly, for all $s \in \cF$, we have $(s \cap ([n_2]+n_1))-n_1 \in \cF_2$. Let $j_i$ be the position where $\pi_2(i)+n_1$ appears in $\pi$. Since $\pi$ is supported by $\cF$, we have $\pi^{(j_i)} \in \cF$, and therefore $(\pi^{(j_i)} \cap ([n_2]+n_1))-n_1 = \pi_2^{(i)} \in \cF_2$, for $0 \leq i \leq n_2$. Therefore, $\pi_1$ and $\pi_2$ are supported by $\cF_1$ and $\cF_2$, respectively.

    In the reverse direction, since $\cF = \cF_1 \utimes \cF_2$, for all $s_1 \in \cF_1$ and $s_2 \in \cF_2$ we have $s_1 \cup (s_2 + n_1) \in \cF$, by the definition of the union product. 

    Consider the prefix-set $s = \pi^{(i)}$ for arbitrary $0 \leq i \leq n_1+n_2$. Let $s_1 = s \cap [n_1]$ and $s_2 = (s \cap ([n_2]+n_1))-n_1$. Since $s_1$ and $s_2$ are prefix-sets of $\pi_1$ and $\pi_2$, which in turn are supported by $\cF_1$ and $\cF_2$, we have $s_1 \in \cF_1$ and $s_2 \in \cF_2$. It follows that $s_1 \cup (s_2 + n_1) = s \in \cF$, therefore $\pi$ is supported by $\cF$. 
\end{proof}

Union products of set systems will be used explicitly as part of our final algorithm, but also allow us to prove the following useful result.
\begin{lemma}\label{lem:fekete}
    For any $1<S\leq 2$, the value $P_S(n)$ approaches $P_S$ as $n\to\infty$. In other terms, $P_S(n) = P_S+o(1)$.
\end{lemma}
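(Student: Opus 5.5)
The plan is a Fekete-type argument built on the union product (Lemma~\ref{lem37}), padding with a single maximal chain to handle ground-set sizes that are not multiples of a fixed block size.

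\textbf{Lower bound.} By definition, every $\cF$ admissible for $P_S(n)$ is also admissible for $P_S$. Hence $P_S(n) \geq P_S$ for every $n$, and it remains to show $\limsup_{n\to\infty} P_S(n) \leq P_S$.

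\textbf{Choosing a block.} Fix $\varepsilon>0$. By definition of the infimum, there is a set system $\cF$ over $[m]$ with $S(\cF)\leq S$ and $P(\cF) \leq P_S+\varepsilon$. Here $P_S$ is finite, since a single maximal chain on $[m]$ has $m+1$ sets and $C=1$; thus for large $m$ it has size at most $S^m$, using $S>1$.

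Let $\mathcal{C}_k$ denote the set system over $[k]$ consisting of one maximal chain, for instance $\{[0],[1],\dots,[k]\}$. Then $|\mathcal{C}_k| = k+1$ and $P(\mathcal{C}_k) = (k!)^{1/k} \leq k$.

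Because $S>1$, we can fix a constant $c$, depending only on $m$ and $S$, such that $(c+1)m+1 \leq S^{cm}$.

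\textbf{The construction.} For $n \geq (c+1)m$, write $n = q'm + k$ with $cm \leq k < (c+1)m$, and set
\[
\cG = \cF^{\utimes q'} \utimes \mathcal{C}_k .
\]

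\textbf{Size check.} We have
\[
|\cG| = |\cF|^{q'}(k+1) \leq S^{q'm}\bigl((c+1)m+1\bigr) \leq S^{q'm} S^{cm} \leq S^{n},
\]
so $S(\cG)\leq S$. This slack is exactly why we leave $cm$ elements rather than only the remainder $r<m$. With only $r$ leftover elements, the extra factor $r+1$ could not be absorbed, since $|\cF|$ may equal $S^m$ exactly.

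\textbf{Chain-density check.} By the product formula of Lemma~\ref{lem37},
\[
P(\cG) = P(\cF)^{q'm/n}\, P(\mathcal{C}_k)^{k/n} \leq (P_S+\varepsilon)\cdot \bigl((c+1)m\bigr)^{(c+1)m/n}.
\]
Here we used $P(\cF) \geq 1$ and $q'm \leq n$. Since $m$ and $c$ are fixed, the second factor tends to $1$ as $n\to\infty$.

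\textbf{Conclusion.} We get $P_S(n) \leq P(\cG) \leq P_S + 2\varepsilon$ for all sufficiently large $n$. As $\varepsilon$ was arbitrary, $P_S(n)\to P_S$.

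The only delicate point is the size bookkeeping for the non-multiple part of $n$, which is handled by the choice of the constant $c$. Everything else follows directly from Lemma~\ref{lem37}.
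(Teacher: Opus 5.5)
Your proof is correct and is essentially the paper's argument. The paper applies Fekete's subadditive lemma to $a_n=\lg(P_S(n)^n)$, with subadditivity coming from the union product (Lemma~\ref{lem37}). You have inlined the proof of Fekete's lemma: you take the $q'$-fold union product of a near-optimal $\cF$ and append a single-chain block $\mathcal{C}_k$ on a remainder of size $k\in[cm,(c+1)m)$.

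Your explicit version has one small advantage. It handles the fact that $a_n=+\infty$ for small $n$ (whenever $S^n<n+1$), a point the paper's black-box use of Fekete's lemma, which is stated for real-valued sequences, passes over silently. Your choice of $c$ is exactly what makes the remainder block admissible.
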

\begin{proof}
    By Fekete's subadditive lemma~\cite{steele1997probability} applied to the sequence $a_n = \lg(P_S(n)^n)$, we have $\lim_{n \to \infty}{\frac{a_n}{n}} = \lim_{n \to \infty}{P_S(n)} = P_S$. It remains to show that $a_n$ is subadditive, i.e., that $a_{n+m} \leq a_n + a_m$.

    If $a_n = \infty$ or $a_m = \infty$, then this is immediate. If $a_n < \infty$ and $a_m < \infty$, then let $\cF_1$ (resp.\ $\cF_2$) be a set system on $[n]$ (resp.\ $[m]$) with $S(\cF_1) \leq S$ and $\lg(P(\cF_1)^n) = a_n$ (resp.\ $S(\cF_2) \leq S$ and $\lg(P(\cF_2)^m) = a_m$). Such set systems exist by definition of the sequence $(a_n)_{n\in \N}$.

    Let $\cF = \cF_1 \utimes \cF_2$. By Lemma~\ref{lem37}, $n(\cF)=n+m$, $S(\cF)\leq S$ and $\lg(P(\cF)^{n+m}) = \lg(P(\cF_1)^n) + \lg(P(\cF_2)^m) = a_n+a_m$. By definition of $P_S(n+m)$ and $a_{n+m}$, we have $a_{n+m} \leq a_n + a_m$.
\end{proof}

The following ``interpolation lemma'' is a simple consequence of Lemma~\ref{lem37} and Lemma~\ref{lem:fekete}:
\begin{lemma}\label{lemma:interpolate}
    For an arbitrary parameter $0\leq \mu \leq 1$, let  $1<S_1\leq 2$ and $1<S_2\leq 2$. Then $P_{S_1^\mu S_2^{1-\mu}} \leq P_{S_1}^\mu P_{S_2}^{1-\mu}$.
\end{lemma}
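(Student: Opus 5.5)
The approach is to build, for every large enough ground set size, a union product of two near-optimal set systems, one for $S_1$ and one for $S_2$, with relative ground-set sizes in proportion $\mu : 1-\mu$. We then let the sizes grow. Write $S=S_1^\mu S_2^{1-\mu}$. The endpoint cases $\mu\in\{0,1\}$ are trivial, so assume $0<\mu<1$. If $P_{S_1}=+\infty$ or $P_{S_2}=+\infty$ there is nothing to prove. So assume both are finite; by Lemma~\ref{lem:fekete}, $P_{S_j}(n)$ is then finite for all large $n$ and $P_{S_j}(n)\to P_{S_j}$.

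Fix $\varepsilon>0$ and a large integer $m$. Put $n_1=\lfloor \mu m\rfloor$ and $n_2=m-n_1$, so that $n_1,n_2\to\infty$ as $m\to\infty$ and $n_1/m\to\mu$. Choose $\cF_1$ over $[n_1]$ with $S(\cF_1)\le S_1$ and $P(\cF_1)=P_{S_1}(n_1)\le P_{S_1}+\varepsilon$, and $\cF_2$ over $[n_2]$ with $S(\cF_2)\le S_2$ and $P(\cF_2)\le P_{S_2}+\varepsilon$. Both exist for $m$ large by Lemma~\ref{lem:fekete}. Let $\cF=\cF_1\utimes\cF_2$. By Lemma~\ref{lem37},
\[
S(\cF)\le S_1^{n_1/m}S_2^{n_2/m},\qquad P(\cF)\le (P_{S_1}+\varepsilon)^{n_1/m}(P_{S_2}+\varepsilon)^{n_2/m}.
\]

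The main obstacle is the rounding: $n_1/m$ need not equal $\mu$ exactly, so $S(\cF)$ may slightly exceed $S$. This matters because $P_S$ is an infimum over systems with $S(\cF)\le S$ exactly. The cleanest fix is to choose $m$ so that $\mu m$ is an integer whenever $\mu$ is rational, which settles that case. For general $\mu$, take $n_1=\lfloor\mu m\rfloor$ and assign the rounding slack to the factor with the smaller $S_j$. Assuming without loss of generality $S_1\le S_2$, rounding $n_1$ up instead gives $n_1/m\ge\mu$, so $S_1^{n_1/m}S_2^{1-n_1/m}\le S$.

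With that choice, $S(\cF)\le S$, so $P_S\le P(\cF)$. Letting $m\to\infty$ (so $n_1/m\to\mu$) and then $\varepsilon\to 0$ gives $P_S\le P_{S_1}^\mu P_{S_2}^{1-\mu}$. In the limit, $P$ is continuous in the exponents because $P_{S_j}$ is finite and at least $1$.
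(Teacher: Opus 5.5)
Your proposal is correct and follows the route the paper indicates: the paper gives no written proof and only calls the lemma a simple consequence of Lemma~\ref{lem37} (union products) and Lemma~\ref{lem:fekete} (convergence of $P_S(n)$). Your treatment of the rounding issue supplies a detail the paper leaves implicit, and it is sound: assume $S_1\le S_2$ and round $n_1=\lceil \mu m\rceil$ up, so that $S(\cF)\le S_1^\mu S_2^{1-\mu}$ holds exactly.
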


Finally, we move from focusing on the permutations supported by a single particular set system to all permutations of a certain size, supported via a family of set systems.

\begin{lemma}\label{thm:almost_optimal_chain_covering}
    For any $1<S\leq 2$ and any $n$ such that $S^n \geq n+1$, there is a family of $q(n)$ isomorphic set systems $\cF_1, \cF_2, \ldots, \cF_{q(n)}$ over $[n]$, such that:
    \begin{itemize}
        \item $q(n) \leq (P_S+o(1))^{n}$,
        \item $|\cF_j|\leq S^{n}$ for all $1\leq j \leq q(n)$, and
        \item for every permutation $\pi$ of $[n]$ there is some $1\leq j \leq q(n)$ such that $\cF_j$ supports $\pi$.
    \end{itemize}
\end{lemma}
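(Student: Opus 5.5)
We take a set system $\cF$ over $[n]$ that is essentially optimal for its size, and cover all permutations by random relabelings of it; this is a set-cover (or probabilistic) argument. By Lemma~\ref{lem:fekete} there is a set system $\cF$ over $[n]$ with $S(\cF)\le S$ and $P(\cF)=P_S(n)=P_S+o(1)$, provided the minimum defining $P_S(n)$ is over a nonempty set. The hypothesis $S^n\ge n+1$ is exactly what guarantees this. A single maximal chain $\{\emptyset,\{1\},\{1,2\},\dots,[n]\}$ has $n+1$ sets, so it satisfies $S(\cF)\le S$ and has $C(\cF)\ge 1$. Hence $P_S(n)<\infty$, and a minimizer $\cF$ exists.

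For each permutation $\sigma$ of $[n]$, let $\sigma(\cF)=\{\sigma(s)\mid s\in\cF\}$. These relabelings are all isomorphic to $\cF$, and all have size $|\cF|\le S^n$. A permutation $\pi$ is supported by $\sigma(\cF)$ if and only if $\sigma^{-1}\circ\pi$ is supported by $\cF$. So if $\sigma$ is chosen uniformly at random, then for every fixed $\pi$,
\[
\Pr[\pi \text{ supported by } \sigma(\cF)] = \frac{C(\cF)}{n!} = P(\cF)^{-n}.
\]

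Now pick $q=\lceil n!\,\cdot P(\cF)^{n}\ln(n!)\rceil+1$ independent uniformly random relabelings $\sigma_1,\dots,\sigma_q$. For a fixed $\pi$, the probability that no $\sigma_j(\cF)$ supports $\pi$ is at most
\[
(1-P(\cF)^{-n})^{q}\le e^{-q P(\cF)^{-n}} < 1/n!.
\]
A union bound over the $n!$ permutations shows that some choice of $q$ relabelings covers every permutation. Alternatively, one can argue deterministically via the greedy set-cover bound, where the ratio is $H_{n!}\le 1+\ln n!$. Either way we get a family with
\[
q(n)\le P(\cF)^n\cdot \fO(n\log n)=(P_S+o(1))^n,
\]
because $P(\cF)=P_S+o(1)$ and the polynomial factor $(n\log n)^{1/n}$ is $1+o(1)$.

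The only delicate points are bookkeeping. First, the $o(1)$ term comes from two sources, Lemma~\ref{lem:fekete} and the polylogarithmic factor, and both must be absorbed. Second, the existence statement must be valid for every $n$ with $S^n\ge n+1$, not only for large $n$. The argument above handles every such $n$ directly, since it uses only $P_S(n)$ for that $n$. I expect no real obstacle beyond these. If a constructive family is needed later for derandomization, the greedy set-cover version gives it, at the price of $n!$-size computations. That cost is fine for the intended use, because the paper applies the family only to constant-size blocks inside union products.
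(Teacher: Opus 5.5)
Your proof is correct and follows the paper's argument: take an extremal $\cF$ attaining $P_S(n)$ (the minimum is over a nonempty set because $S^n\ge n+1$), take independent uniformly random relabelings of it, each of which supports a fixed permutation with probability $P(\cF)^{-n}$, and finish with a union bound over all $n!$ permutations. The paper takes $q=P^n n^2$ instead of your $q\approx P^n\ln(n!)$, which changes nothing; however, your formula $q=\lceil n!\cdot P(\cF)^{n}\ln(n!)\rceil+1$ contains a stray factor $n!$ that would break $q\le(P_S+o(1))^n$, and it should read $q=\lceil P(\cF)^{n}\ln(n!)\rceil+1$, for which $e^{-qP(\cF)^{-n}}<1/n!$ still holds.
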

\begin{proof}

Let $\cF$ be a set system that is extremal (minimizing) with respect to $P(\cF)$ among set systems of size $\lfloor S^n \rfloor$ over $[n]$ and denote $P = P_S(n) = P(\cF)$. Let $\cS$ be the set of permutations of $[n]$ supported by $\cF$. Recall from the definition of $P_S$, that $|\cS| = \frac{n!}{P^n}$ and notice that since $|\cF| \geq n+1$ and $\cF$ maximizes the number of supported permutations, it can be assumed that $|\cS|$ is nonzero.

Set $q(n) = P^n \cdot n^2$ and take $\cF_1, \dots, \cF_{q(n)}$ to be set systems isomorphic to $\cF$ obtained by relabeling the ground set $[n]$ according to $q(n)$ independently drawn uniform random permutations. Since $q(n) = (P+\fO(\frac{\lg{n}}{n}))^n$, and $P = P_S + o(1)$, by Lemma~\ref{lem:fekete}, the sequence of $q(n)$ set systems clearly satisfies the first two conditions. 

For an arbitrary permutation $\tau$ of $[n]$, the probability that $\cF_i$ supports $\tau$ (for arbitrary $i$) is $\frac{1}{P^n}$. This is because for each element of $\cS$ there is a unique permutation that maps it to $\tau$, so altogether $\cS$ permutations (out of the $n!$ total) lead to $\tau$ being supported. 

The probability of $\tau$ not being supported by any of $\cF_1, \dots, \cF_{q(n)}$ is at most $\left(1-\frac{1}{P^n}\right)^{q(n)} \leq e^{-\frac{q(n)}{P^n}} = e^{-n^2}$. By the union bound, the probability that there is \emph{some} permutation of $[n]$ not supported by any of $\cF_1, \dots, \cF_{q(n)}$ is thus at most $\frac{n!}{e^{n^2}}$.

Since this probability is strictly below $1$ for all positive $n$, by the probabilistic method, there is a sequence which satisfies the first two conditions, and where for every permutation $\pi$, at least one set system in the sequence supports $\pi$. 
\end{proof}

We are now ready to prove the main theorem.
\begin{proof}[Proof of Theorem \ref{thm:main}]
Let $m = \floor{\frac{\lg\lg\lg n}{2}}$, let $k=\floor{\frac{n}{m}}$ and for $1\leq i \leq k$, let $m \leq n_i \leq 2m$, such that $\sum_{i=1}^kn_i=n$.

 For $1\leq i \leq k$, consider a family of set systems $\F^{i}_1, \F^{i}_2, \ldots, \F^{i}_{q_i}$ on $[n_i]$, such that 
\begin{itemize}
    \item $|\F^{i}_j|\leq S^{n_i}$ for all $1\leq j \leq q_i$, 
    \item for every permutation $\pi_i$ of $[n_i]$ there is some $1\leq j \leq q_i$ such that $\F^{i}_j$ supports $\pi_i$, and
    \item $q_i$ is as small as possible.
\end{itemize}

By Lemma~\ref{thm:almost_optimal_chain_covering}, for $m$ large enough (such that $S^m \geq m+1$), there is such a family of set systems with $q_i = (P_S+o(1))^{n_i}$. 

Let $\pi$ be a permutation of $[n]$, and $(\pi_1, \pi_2, \ldots, \pi_k)$ its $(n_1,n_2,\dots,n_k)$-induced split. For every $1\leq i \leq k$, there is some $1\leq j_i \leq q_i$ such that $\cF_{j_i}^i$ supports $\pi_i$. By Lemma~\ref{lemma:split}, $\pi$ is supported 
by $\bigutimes_{i=1}^k \cF^i_{j_i}$.

Assume for now that for all possible choices of $\mathbf{j} = (j_1, j_2, \ldots, j_k)$, where $1\leq j_i\leq q_i$,  for $i=1,\dots,k$, we can compute $\cF_{\mathbf{j}} = \bigutimes_{i=1}^k \cF^i_{j_i}$ in time and space $\fO^*(S^n)$. 

For any choice of $\mathbf{j}$,  we have $|\F_{\mathbf{j}}| \leq S^{\sum n_i}=S^n$ and by Lemma~\ref{thm:TSP_limited_prefixes} we can find the optimal tour, assuming that the corresponding ordering of the cities $\pi$ is such that all its prefix-sets are in $\F_{\mathbf{j}}$, in $\fO^*(S^n)$ time and space.

There are $\prod_{i=1}^k q_i \leq (P_S+o(1))^{n} \leq P_S^{n+o(n)}$ possible choices for $\mathbf{j}$, and for every permutation $\pi$ of $[n]$ there is at least one choice of $\mathbf{j}$ such that $\pi$ is supported by $\F_{\mathbf{j}}$. Thus, by repeating the above for every possible choice of $\mathbf{j}$, we can find the optimal tour in $\fO^*(S^n)$ space and $\fO((S\cdot P_S)^{n+o(n)})$ time.

It remains to show how to compute $\cF_{\mathbf{j}}$ in $\fO^*(S^n)$ time and space.

For all $1\leq i \leq k$, we precompute $q_i$ and $\F^{i}_1, \F^{i}_2, \ldots ,\F^{i}_{q_i}$ by brute-force, considering all families of distinct set systems on $[n_i]$ with set systems of sizes at most $S^{n_i}$ and test each of them against %
all permutations of $[n_i]$. For a given $1\leq i \leq k$, this amounts to at most $2^{2^{2m}} \in \fO(\lg{n})$ set systems and $(2m)! \in \fO(\lg{n})$ permutations. %
Finding the smallest family of set systems that together support all permutations of $[n_i]$ amounts to solving a set cover problem. Since the set cover instance is of size $\fO(\lg{n})$, this can be carried out in time and space polynomial in $n$. 
The total time and space to achieve this for all $1\leq i \leq k$ is still polynomial in $n$. 

(We note that more efficient computation of the set systems $\F_{1}^{i}, \dots, \F_{q_i}^{i}$, and relaxing $m$ to $\approx \lg\lg{n}$ is possible, if we additionally use that the set systems are isomorphic, as guaranteed by Lemma~\ref{thm:almost_optimal_chain_covering}, or if we settle for solving the set cover problem approximately. We forgo such optimizations as they are not consequential to our main claim.)

For a given $\mathbf{j} = (j_1, j_2, \ldots, j_m)$, computing $\F_{\mathbf{j}}$ can then be done in $\fO^*(S^n)$ time and space by computing $\bigcup_{i=1}^k C_i$ for all $C_1 \in \F^{1}_{j_1}, C_2 \in \F^{2}_{j_2}, \ldots, C_k \in \F^{k}_{j_k}$.

Clearly, all steps of the computation can be performed deterministically.
\end{proof}

\subsection{Generalizing to permutation problems}

We briefly discuss how our approach applies more generally to so-called permutation problems (a.k.a.\ linear ordering problems).
We start by recalling the definition, largely following~\cite{KoivistoParviainen2010}.
\begin{definition}
    Let $n>0$, let $d\geq 0$, and $R$ be a semiring with addition $\oplus$ and multiplication $\otimes$.

    Let $f$ be a cost function that maps permutations of $[n]$ to values in $R$, decomposable into local cost functions $f_i$ as follows:
    \[f(\pi) = \bigotimes_{j=1}^n f_j(\{\pi_1,\pi_2,\ldots,\pi_j\}, \pi_{j-d+1},\ldots, \pi_{j-1}, \pi_j).\]
    If $d>j$, the sequence $\pi_{j-d+1},\ldots, \pi_{j-1}, \pi_j$ is to be read as $\pi_1, \pi_2,\ldots, \pi_j$, and if $d=0$, it is empty.

    We call \emph{permutation problem of degree $d$} the task of computing $\bigoplus_\pi f(\pi)$ for $\pi$ ranging over all permutations of $[n]$. For such a problem we assume that the local costs $f_i$ and the operations $\oplus$ and $\otimes$ are computable in polynomial time.

\end{definition}

For example, TSP reduces to finding a minimum weight Hamiltonian path in a weighted graph, which is a permutation problem of degree two over the ($\min$,$+$) semiring with $f_1(A,x)=0$ and $f_j(A,x,y)$ being equal to the weight of the edge $xy$ for $j > 1$. Other examples include the (directed) feedback arcset problem as well as the computation of the cutwidth, pathwidth and treewidth of a graph (see \cite{KoivistoParviainen2010, DBLP:journals/mst/BodlaenderFKKT12}).

The Bellman-Held-Karp algorithm for TSP, as well as the variant of Lemma~\ref{thm:TSP_limited_prefixes} restricting the considered permutations by their prefix-sets, both easily extend to permutation problems of constant degree. 

The approach we develop in this paper, based on grouping permutations by their prefix-sets, also applies in this setting but with a caveat: because a permutation might appear in multiple groups, it can contribute multiple times to the output of the algorithm, whereas it only contributes once to the correct output $\bigoplus_\pi f(\pi)$. In the special case where the operation $\oplus$ is idempotent (i.e., $x\oplus x = x$ for all $x\in R$), or in other words when the semiring $R$ is additively idempotent, then contributing multiple times does not change the output. Our general approach thus directly applies to this case, with time- and space bounds unchanged, up to a polynomial factor. Note that all the examples of permutation problems mentioned above fall into this category.

The only obstacle to applying our approach to the non-idempotent case is Lemma~\ref{thm:almost_optimal_chain_covering}, where a permutation $\pi$ might be supported by $\cF_j$ for multiple different $j$. In the rest of this section we show how to obtain a version of this lemma where every permutation is uniquely supported, at the cost of restricting the family of set systems that are admissible.

We start with some definitions.

\begin{definition}
    Two set systems $\cF_1$ and $\cF_2$ on the same ground set are \emph{regularly intersecting} if there is a subset $\mathcal{G} \subseteq \cF_1 \cap \cF_2$ such that all permutations supported by both $\cF_1$ and $\cF_2$ have at least one prefix in $\mathcal{G}$, and all permutations supported by $\cF_1$ but not by $\cF_2$ have no prefix in $\mathcal{G}$.

    We say that $\cF$ is \emph{regularly self-intersecting} if for all $\cF'$ isomorphic to $\cF$, the set systems $\cF$ and $\cF'$ are regularly intersecting.
\end{definition}

    For any $1<S\leq 2$,  we denote the following extremal quantities, analogously to our earlier definitions for general set systems:
    \begin{align*}
    P'_S & = \inf\left\{P(\cF) \mid  S(\cF)\leq S\right\}, \text{~and}\\ 
    P'_S(n) & = \min\left\{P(\cF) \mid S(\cF)\leq S,~n(\cF)=n\right\},
    \end{align*}
    where $\cF$ ranges over regularly self-intersecting set systems and $P'_S(n)=+\infty$ if the minimum is taken over an empty set.

\begin{lemma}
    If $\cF_1$ and $\cF_2$ are regularly self-intersecting set systems then $\cF = \cF_1 \utimes \cF_2$ is regularly self-intersecting.
\end{lemma}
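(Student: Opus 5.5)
The plan is to verify the definition directly. Write $n=n_1+n_2$, $B_1=[n_1]$ and $B_2=[n_2]+n_1$. Fix an arbitrary relabeling $\sigma$ of $[n]$ and put $\cF'=\sigma(\cF)$. We must exhibit $\mathcal{G}\subseteq \cF\cap\cF'$ such that (a) every permutation supported by both $\cF$ and $\cF'$ has a prefix-set in $\mathcal{G}$, and (b) no permutation supported by $\cF$ but not by $\cF'$ has a prefix-set in $\mathcal{G}$.

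Throughout I use Lemma~\ref{lemma:split}: $\pi$ is supported by $\cF$ iff the two parts of its induced split are supported by $\cF_1$ and $\cF_2$. Applying the same lemma to $\sigma^{-1}\pi$ gives that $\pi$ is supported by $\cF'$ iff the split of $\sigma^{-1}\pi$ is supported by $\cF_1$ and $\cF_2$.

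\textbf{Block-respecting case.} First suppose $\sigma$ maps $B_1$ onto $B_1$ and $B_2$ onto $B_2$. Then $\sigma$ restricts to relabelings $\sigma_1,\sigma_2$ of the blocks, and $\cF'=\sigma_1(\cF_1)\utimes\sigma_2(\cF_2)$. By hypothesis, $\cF_i$ and $\sigma_i(\cF_i)$ are regularly intersecting via some $\mathcal{G}_i$. A permutation supported by $\cF$ fails to be supported by $\cF'$ exactly when some block component fails, so I would define
\[
\mathcal{G}=\bigl\{s\in\cF\cap\cF' : s\cap B_1\in\mathcal{G}_1 \text{ or } (s\cap B_2)-n_1\in\mathcal{G}_2\bigr\}
\]
and check (a) and (b).

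\textbf{Checking the block-respecting case.} Condition (b) is the hard direction, and the "or" above is there because of it. If $\pi$ is supported by $\cF$ and not by $\cF'$, then some component, say $\pi_1$, is supported by $\cF_1$ but not by $\sigma_1(\cF_1)$. By the property of $\mathcal{G}_1$, no prefix of $\pi_1$ lies in $\mathcal{G}_1$. But the other component $\pi_2$ might be supported by both $\cF_2$ and $\sigma_2(\cF_2)$, and then it does hit $\mathcal{G}_2$. So a plain "or" is wrong.

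The fix I would try is a lexicographic rule. A prefix $s$ of $\pi$ is put in $\mathcal{G}$ only if the $\cF_1$-side has already been certified as jointly supported. Concretely, $s\cap B_1$ has a prefix in $\mathcal{G}_1$ that is the full set $B_1$ or is followed in $\pi_1$ only through $\cF_1\cap\sigma_1(\cF_1)$, and symmetrically for the $B_2$-side. This amounts to requiring $s$ to witness both $\mathcal{G}_1$ and $\mathcal{G}_2$ simultaneously. For example, take $\mathcal{G}=\{s\in\cF\cap\cF' : s\cap B_1\in\mathcal{G}_1^{\uparrow},\ (s\cap B_2)-n_1\in\mathcal{G}_2^{\uparrow}\}$, where $\mathcal{G}_i^{\uparrow}$ denotes sets lying above some element of $\mathcal{G}_i$ along the common part. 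The interleaving freedom of Lemma~\ref{lem37} then shows that a jointly supported $\pi$ reaches such a set: take the first time both components have passed their $\mathcal{G}_i$-witnesses.

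\textbf{General $\sigma$, the main obstacle.} The hard step is when $\sigma$ mixes the blocks. Then $\cF'$ is a product along the blocks $\sigma(B_1),\sigma(B_2)$, which are not the original ones. My approach would be to pass to the common refinement into the four cells $B_i\cap\sigma(B_j)$. I would view both $\cF$ and $\cF'$ as restrictions of the finer product structure over these cells. Then I would apply the hypothesis to $\cF_1$ versus the relabeled copy of itself obtained by projecting along $B_1$, and similarly for $\cF_2$, reusing the block-respecting argument cell by cell.

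What must be checked here is that the projections of $\cF'$ onto $B_1$ behave like an isomorphic copy of $\cF_1$, at least on the chains that matter. I expect this to be the real difficulty. If it fails, I would fall back to exploiting that the definition quantifies over all isomorphic copies, choosing the witnesses $\mathcal{G}_i$ for suitable restricted relabelings.
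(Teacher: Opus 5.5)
\textbf{There is a genuine gap, and it cannot be closed: the statement is false as written, already for block-respecting relabelings.}

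First, your proposal never produces an actual family $\mathcal{G}$. The ``lexicographic'' condition (a witness in $\mathcal{G}_1$ ``followed in $\pi_1$ only through $\cF_1\cap\sigma_1(\cF_1)$'') is a property of the permutation, not of the set $s$. The definition, however, asks for one fixed family.

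Under the literal reading, where $\mathcal{G}_i^{\uparrow}$ is the set of members of $\cF_i\cap\sigma_i(\cF_i)$ containing a member of $\mathcal{G}_i$, condition (b) fails. Since $B_1$ contains every member of $\mathcal{G}_1$, it lies in $\mathcal{G}_1^{\uparrow}$. Take some $\pi_1$ supported by $\cF_1$ but not by $\sigma_1(\cF_1)$, run through it completely, and then run through a jointly supported $\pi_2$. The resulting permutation reaches $B_1\cup(g_2+n_1)\in\mathcal{G}$ with $g_2\in\mathcal{G}_2$, yet it is supported by $\cF$ and not by $\cF'$.

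More fundamentally, the interleaving freedom you invoke restricts every admissible $\mathcal{G}$. Consider a set $s\in\mathcal{G}$ that occurs as a prefix-set of some $\cF$-supported permutation. Then every $\cF_1$-supported permutation through $s\cap B_1$ must be $\sigma_1(\cF_1)$-supported, and likewise for $(s\cap B_2)-n_1$. Otherwise, interleave a bad chain in one block with any chain in the other so that it passes through $s$; this violates (b).

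Now take a jointly supported $\pi$ that exhausts $B_1$ before entering $B_2$. It only visits sets whose $B_2$-part is $\emptyset$ or whose $B_1$-part is $B_1$. These certify in the above sense only if the support of $\cF_2$ (resp.\ $\cF_1$) is contained in the support of its copy. So whenever both factors admit copies whose supports overlap without containment, no $\mathcal{G}$ exists.

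\textbf{Counterexample.} Let $\cF_1=\cF_2=2^{[3]}\setminus\{\{3\}\}$.
\begin{itemize}
\item This is the paper's own regularly self-intersecting construction from the proof of Theorem~\ref{thm:construction-regular}, with $n=3$, $L=\{1,2\}$, $\beta n=1$.
\item It supports exactly the permutations not starting with $3$.
\item Against the copy $2^{[3]}\setminus\{\{1\}\}$, the witness is $\{\{2\}\}$.
\end{itemize}
Let $\cF=\cF_1\utimes\cF_2$, and let $\cF'$ be its image under the block-respecting relabeling swapping $1\leftrightarrow 3$ and $4\leftrightarrow 6$.
\begin{itemize}
\item $\pi=(2,1,3,5,4,6)$ is supported by both $\cF$ and $\cF'$.
\item Every prefix-set of $\pi$ other than $\{2\}$ is a prefix-set of $(1,2,3,5,4,6)$.
\item $\{2\}$ is a prefix-set of $(2,4,1,3,5,6)$.
\item Both of these permutations are supported by $\cF$ but not by $\cF'$: one starts the first block with $1$, the other starts the second block with $4$.
\end{itemize}
Hence (b) excludes every prefix-set of $\pi$ from $\mathcal{G}$, contradicting (a).

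\textbf{Relation to the paper's proof.} The paper's proof is your block-respecting plan with the ``and'' choice $\mathcal{G}=\mathcal{G}_1\utimes\mathcal{G}_2$. It also relies on the unproved claim that every copy of $\cF$ splits as $\cF_1'\utimes\cF_2'$ over the original blocks. In the example above, $\mathcal{G}_1\utimes\mathcal{G}_2=\{\{2,5\}\}$, which $\pi$ never visits. So both points you were suspicious of are real holes in the paper's argument too, and your open general-$\sigma$ case is moot.
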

\begin{proof}
    Let $\cF'$ be a set system isomorphic to $\cF$. It can be written as $\cF' = \cF'_1 \utimes \cF'_2$ where $\cF'_1$ and $\cF'_2$ are isomorphic to $\cF_1$ and $\cF_2$ respectively. 

    Because $\cF_1$ is regularly self-intersecting, there is a subset $\mathcal{G}_1 \subseteq \cF_1 \cap \cF'_1$ such that all permutations supported by both $\cF_1$ and $\cF'_1$ have a prefix in $\mathcal{G}_1$ and all permutations supported by $\cF_1$ but not $\cF'_1$ have no prefix in $\mathcal{G}_1$.

    Similarly, there is such a subset $\mathcal{G}_2 \subseteq \cF_2 \cap \cF'_2$ for $\cF_2$ and $\cF'_2$.

    All permutations supported by both $\cF$ and $\cF'$ have a prefix in $\mathcal{G}_1 \utimes \mathcal{G}_2$ and all permutations supported by $\cF$ but not $\cF'$ have no prefix in $\mathcal{G}_1 \utimes \mathcal{G}_2$. Thus, since $\mathcal{G}_1 \utimes \mathcal{G}_2 \subseteq \cF \cap \cF'$, the set systems $\cF$ and $\cF'$ are regularly intersecting, and $\cF$ is regularly self-intersecting.
\end{proof}

Using the above definitions and lemma, we mimic the steps taken to prove Lemma~\ref{thm:almost_optimal_chain_covering},  and obtain the following analogous result for regularly self-intersecting set systems.
\begin{lemma}\label{thm:almost_optimal_regular_chain_covering}
     For any $1<S\leq 2$ and any $n$ such that $S^n \geq n+1$, there is a family of $q(n)$ isomorphic regularly self-intersecting set systems $\cF_1, \cF_2, \ldots, \cF_{q(n)}$ over $[n]$, such that
    \begin{itemize}
        \item $q(n) = (P'_S+o(1))^{n}$,
        \item $|\cF_j|\leq S^{n}$ for all $1\leq j \leq q(n)$, and 
        \item for every permutation $\pi$ of $[n]$ there is some $1\leq j \leq q(n)$ such that $\F_j$ supports $\pi$.
    \end{itemize}
\end{lemma}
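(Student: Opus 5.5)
We mirror the proof of Lemma~\ref{thm:almost_optimal_chain_covering} essentially verbatim, replacing $P_S$ by $P'_S$ and restricting attention to regularly self-intersecting set systems. The first step is the analogue of Lemma~\ref{lem:fekete}: $P'_S(n) = P'_S + o(1)$. We apply Fekete's subadditive lemma to $a'_n = \lg(P'_S(n)^n)$. Subadditivity holds because, by the preceding lemma, the union product of two regularly self-intersecting systems $\cF_1,\cF_2$ attaining $P'_S(n)$ and $P'_S(m)$ is again regularly self-intersecting. By Lemma~\ref{lem37} this product has size at most $S^{n+m}$ and $\lg(P(\cF_1\utimes\cF_2)^{n+m}) = a'_n + a'_m$. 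The infinite cases are trivial, exactly as before.

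The second step is the covering. Let $\cF$ be a regularly self-intersecting system over $[n]$ with $|\cF| \le S^n$ that attains $P'_S(n) = P$. We need $C(\cF)>0$ for this choice to be finite. The hypothesis $S^n \ge n+1$ is meant to guarantee this: a single maximal chain $\{[0],[1],\dots,[n]\}$ is a system of $n+1 \le S^n$ sets supporting exactly one permutation. It is regularly self-intersecting, because for any isomorphic copy $\cF'$ we can take $\mathcal{G} = \cF\cap\cF'$ if the two chains coincide and $\mathcal{G}=\emptyset$ otherwise. (If they coincide, the unique supported permutation has all its prefixes in $\mathcal{G}$; if they differ, no permutation is supported by both, and the empty set works.) So the minimum is over a nonempty class and is finite. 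Next, set $q(n) = P^n n^2$ and take $q(n)$ relabelings of $\cF$ by independent uniform random permutations. Each relabeling is isomorphic to $\cF$, so each is regularly self-intersecting; this property is by definition invariant under isomorphism, since the class of isomorphic copies is the same. Each relabeling also has the same size. For a fixed $\tau$, the probability that a given relabeled copy supports $\tau$ is $C(\cF)/n! = P^{-n}$. Hence the probability that $\tau$ is missed by all copies is at most $e^{-n^2}$. A union bound over the $n!$ permutations gives total failure probability below $1$, so a good family exists. Finally, $q(n) = (P+\fO(\lg n/n))^n = (P'_S+o(1))^n$ by the first step.

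The main obstacle is the nonemptiness and finiteness in the second step, namely exhibiting a regularly self-intersecting system with positive chain count within the size budget. The chain construction above handles it; the remaining steps are routine transfers of the earlier proofs. A secondary point is checking that regular self-intersection is closed under isomorphism, which I would verify directly from the definition: if $\cF'$ is isomorphic to $\sigma(\cF)$ then it is isomorphic to $\cF$, so the required pairwise property follows by applying $\sigma^{-1}$ to both systems and to the witness set $\mathcal{G}$, which preserves support of correspondingly relabeled permutations.
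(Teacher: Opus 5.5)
Your covering step is sound, and in two places more careful than the paper. First, you optimize over $|\cF|\le S^n$ rather than $|\cF|=\lfloor S^n\rfloor$, which matters because adding sets can destroy regular self-intersection. Second, you give the single chain as an explicit regularly self-intersecting witness with $C(\cF)>0$. Your isomorphism-invariance check is also right.

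The gap is in the first step. The lemma you invoke, that $\cF_1\utimes\cF_2$ is regularly self-intersecting whenever $\cF_1$ and $\cF_2$ are, is false, and your own witness refutes it. Take $\cF_1=\cF_2=\{\emptyset,\{1\},\{1,2\}\}$. Then $\cF=\cF_1\utimes\cF_2$ supports exactly the permutations of $[4]$ with $1$ before $2$ and $3$ before $4$. Its copy $\cF'$, obtained by swapping the labels $2$ and $3$, supports exactly those with $1$ before $3$ and $2$ before $4$. The permutation $(1,3,2,4)$ is supported by both, so one of its prefix sets must lie in $\mathcal{G}$. But $\emptyset,\{1\},\{1,3\},[4]$ are prefix sets of $(1,3,4,2)$, and $\{1,2,3\}$ is a prefix set of $(3,1,2,4)$. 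Both of these permutations are supported by $\cF$ but not by $\cF'$, so none of these sets may lie in $\mathcal{G}$, and no admissible $\mathcal{G}$ exists. The paper's proof of that lemma fails at the claim that every isomorphic copy of $\cF_1\utimes\cF_2$ has the form $\cF'_1\utimes\cF'_2$ with $\cF'_i\cong\cF_i$. That only holds for relabelings that preserve the two blocks.

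Consequently, subadditivity of $a'_n$, and with it $P'_S(n)\to P'_S$, is unproven, and this matters for the statement itself. Your covering argument gives $q(n)\le (P'_S(n)+o(1))^n$. Conversely, any covering family of isomorphic copies of one system $\cF$ needs $q(n)\ge n!/C(\cF)=P(\cF)^n\ge P'_S(n)^n$. So the stated bound is equivalent to $P'_S(n)\to P'_S$, which needs a genuinely different argument or a weaker statement.

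Your proposal follows exactly the paper's route: the paper only says to mimic Lemma~\ref{thm:almost_optimal_chain_covering} using the union-product lemma. So you inherit this flaw rather than introduce a new one. For the algorithmic application it can be sidestepped:
\begin{enumerate}[(i)]
\item run the covering step and the uniqueness construction at a single block size $m$;
\item take union products of the resulting uniquely-supporting families, since by Lemma~\ref{lemma:split} unique support passes to products without the product being regularly self-intersecting;
\item handle a leftover block of size $r<m$ with all $r!$ single chains;
\item choose $m\le M(n)$ to minimize $P'_S(m)$.
\end{enumerate}
Since $\min_{m\le M}P'_S(m)\to P'_S$, this still yields time $(S\cdot P'_S)^{n+o(n)}$, but it does not prove the lemma as stated.
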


We next modify the set systems to ensure that every permutation is supported by only one of them.
\begin{lemma}
     For any $1<S\leq 2$ and any $n$ such that $S^n \geq n+1$, there is a family of $q(n)$ set systems $\cF_1, \cF_2, \ldots, \cF_{q(n)}$ over $[n]$, such that
    \begin{itemize}
        \item $q(n) = (P'_S+o(1))^{n}$,
        \item $|\cF_j|\leq S^{n}$ for all $1\leq j \leq q(n)$, and 
        \item for every permutation $\pi$ of $[n]$, there is \emph{a unique} $1\leq j \leq q(n)$ such that $\F_j$ supports $\pi$.
    \end{itemize}
\end{lemma}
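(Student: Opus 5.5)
Start from the family $\cF_1,\dots,\cF_{q(n)}$ given by Lemma~\ref{thm:almost_optimal_regular_chain_covering}: isomorphic, regularly self-intersecting, each of size at most $S^n$, and jointly supporting every permutation. Build new systems $\cF'_1,\dots,\cF'_{q(n)}$ with $\cF'_j \subseteq \cF_j$. For every permutation $\pi$, let $j(\pi)$ be the smallest index $j$ with $\cF_j$ supporting $\pi$. The aim is that $\cF'_j$ supports exactly those $\pi$ with $j(\pi)=j$. Then the first two properties carry over, since $q(n)$ is unchanged and $|\cF'_j|\le|\cF_j|$, and the uniqueness property holds by construction.

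\textbf{Construction of $\cF'_j$.} For each pair $i<j$, the systems $\cF_i$ and $\cF_j$ are isomorphic, so they are regularly intersecting because $\cF_j$ is regularly self-intersecting. Fix a witness $\mathcal{G}_{j,i}\subseteq \cF_j\cap\cF_i$ with two properties:
\begin{itemize}
\item every permutation supported by both $\cF_j$ and $\cF_i$ has a prefix-set in $\mathcal{G}_{j,i}$;
\item no permutation supported by $\cF_j$ but not by $\cF_i$ has a prefix-set in $\mathcal{G}_{j,i}$.
\end{itemize}
Define
\[
\cF'_j = \cF_j \setminus \bigcup_{i<j} \mathcal{G}_{j,i}.
\]

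\textbf{Verification.} Let $\pi$ be supported by $\cF'_j$. Then $\pi$ is supported by $\cF_j$. For each $i<j$, $\pi$ avoids $\mathcal{G}_{j,i}$ entirely, so by the first property $\pi$ is not supported by $\cF_i$. Hence $j=j(\pi)$, which gives at most one supporting index.

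Conversely, let $j=j(\pi)$. Then $\pi$ is supported by $\cF_j$ and by no $\cF_i$ with $i<j$. By the second property, no prefix-set of $\pi$ lies in any $\mathcal{G}_{j,i}$. So every prefix-set of $\pi$ survives in $\cF'_j$, and $\pi$ is supported by $\cF'_j$. Since the $\cF_j$ jointly cover all permutations, $j(\pi)$ always exists, which gives existence.

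\textbf{Main obstacle.} The delicate point is the direction of the definition of regular intersection. It must be applied with $\cF_j$ in the role of the first system, so that its guarantee about permutations ``supported by $\cF_j$ but not $\cF_i$'' is what protects the permutations we keep. We also need regular intersection for the ordered pair $(\cF_j,\cF_i)$ and not only the reverse. This holds because the self-intersection property quantifies over all isomorphic copies of $\cF_j$, and $\cF_i$ is one such copy. The remaining bookkeeping, that $P'_S$ rather than $P_S$ enters $q(n)$, is inherited directly from Lemma~\ref{thm:almost_optimal_regular_chain_covering}.
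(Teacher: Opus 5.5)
Your proposal is correct and uses the same construction as the paper: from each system you remove the union of the regular-intersection witnesses $\mathcal{G}_{j,i}$ over all $i<j$. Your verification shows directly that the trimmed $\cF'_j$ supports exactly the permutations whose first covering index is $j$, which is a slightly cleaner packaging of the paper's minimal-index contradiction argument. You also correctly apply the witness properties to the untrimmed systems, whereas the paper's write-up is loose with primes and with the index range $k<i$.
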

\begin{proof}
    Start with a family $\cF'_1, \cF'_2, \ldots ,\cF'_{q(n)}$ given by Lemma~\ref{thm:almost_optimal_regular_chain_covering}.

    Next, let $\cF_1 = \cF'_1$ and for $i$ ranging from $2$ to $q(n)$, define $\cF_i$ as follows:
    \begin{itemize}
        \item For $1\leq k < i$, let $\mathcal{G}_i^k\subseteq \cF'_i\cap \cF'_k$ be such that all permutations supported by both $\cF'_i$ and $\cF'_k$ have a prefix in $\mathcal{G}_i^k$, and no permutation supported by $\cF_i$ but not by $\cF_k$ has a prefix in $\mathcal{G}_i^k$ ($\mathcal{G}_i^k$ exists by definition of regularly self-intersecting set systems).
        \item Let $\mathcal{G}_i = \bigcup_{k=1}^i \mathcal{G}_i^k$.
        \item Let $\cF_i = \cF'_i \setminus \mathcal{G}_i$.
    \end{itemize}

    For all $i$, no permutation supported by $\cF_i$ is supported by any $\cF_k$ with $k<i$, because such a permutation would have a prefix in $\mathcal{G}_i^k$ and thus not be supported by $\cF_i$. In other words, no permutation is supported by more than one set system.

    On the other hand, let $\pi$ be a permutation of $[n]$, and suppose it is not supported by any of the set systems. Let $i>1$ be the smallest index such that $\pi$ is supported by $\cF'_i$ but not $\cF_i$. Then $\pi$ must have a prefix in $\mathcal{G}_i$, which has to be in $\mathcal{G}^k_i$ for some $k<i$. By definition of $\mathcal{G}^k_i$, and because $\pi$ is supported by $\cF'_i$, $\pi$ has to be supported by $\cF'_k$. By assumption, $\pi$ is not supported by $\cF_k$, which contradicts the minimality of $i$. We conclude by contradiction that $\pi$ is supported by at least one of the set systems.

    In short, every permutation of $[n]$ is supported by exactly one set system in $\cF_1, \cF_2, \ldots, \cF_{q(n)}$.
\end{proof}

A similar proof to that of Theorem~\ref{thm:main} gives the following.

\begin{theorem}\label{thm:main-permutation-problem}
    For any permutation problem of constant degree and any $1 < S \leq 2$, there is an algorithm that solves the problem on inputs of size $n$, deterministically, in $(S\cdot P'_S)^{n+o(n)}$ time and $S^{n+o(n)}$ space.
\end{theorem}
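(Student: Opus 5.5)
We follow the proof of Theorem~\ref{thm:main} step by step. The key change is that the covering families now come from the preceding lemma, so each permutation is supported exactly once. Set $m = \floor{\frac{\lg\lg\lg n}{2}}$ and $k=\floor{n/m}$, and split $n=\sum_{i=1}^k n_i$ with $m\le n_i\le 2m$. For each $i$ we obtain, by brute force (the ground sets have size $\fO(\lg\lg\lg n)$), a family $\cF^i_1,\dots,\cF^i_{q_i}$ on $[n_i]$ with the following properties:
\begin{itemize}
\item $|\cF^i_j|\le S^{n_i}$;
\item every permutation of $[n_i]$ is supported by exactly one member;
\item $q_i=(P'_S+o(1))^{n_i}$.
\end{itemize}
The preceding lemma guarantees that such a family exists. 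The search enumerates collections of set systems of size at most $S^{n_i}$ and checks the exact-cover property against all $n_i!$ permutations; the number of candidates is bounded by a function of $m$ that is $n^{o(1)}$, or polylogarithmic after the same tightening of $m$ as before. So the preprocessing costs only polynomial time and space. Instead of minimizing via set cover, we may simply take any uniquely-supporting family whose size meets the bound; existence is all we need.

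For $\mathbf{j}=(j_1,\dots,j_k)$, let $\cF_{\mathbf{j}}=\bigutimes_i \cF^i_{j_i}$. By Lemma~\ref{lemma:split}, a permutation $\pi$ of $[n]$ is supported by $\cF_{\mathbf{j}}$ if and only if each component $\pi_i$ of its induced split is supported by $\cF^i_{j_i}$. By uniqueness in each coordinate, every $\pi$ is therefore supported by exactly one $\cF_{\mathbf{j}}$. This is the step that needs care: we must use both directions of Lemma~\ref{lemma:split}, and note that the induced split is a fixed function of $\pi$, so no permutation is counted twice.

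Next we generalize Lemma~\ref{thm:TSP_limited_prefixes} to permutation problems of constant degree $d$. The restricted Bellman-Held-Karp DP indexes states by a prefix set $A\in\cF_{\mathbf{j}}$ together with the last $d-1$ elements, giving $n^{d-1}$ tuples per set. Its recurrence is
\[g(A,x_{1},\dots,x_{d-1})=\bigoplus_{y}g(A\setminus\{x_{d-1}\},y,x_1,\dots,x_{d-2})\otimes f_{|A|}(A,y,x_1,\dots,x_{d-1}),\]
restricted to $A\setminus\{x_{d-1}\}\in\cF_{\mathbf{j}}$. This computes $\bigoplus_{\pi \text{ supported by }\cF_{\mathbf{j}}} f(\pi)$ exactly, by distributivity, in $\fO^*(|\cF_{\mathbf{j}}|)=\fO^*(S^n)$ time and space.

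Finally, we $\oplus$-sum these values over all $\prod_i q_i\le (P'_S)^{n+o(n)}$ choices of $\mathbf{j}$, keeping only a running accumulator. Since the supports partition the set of all permutations, associativity and commutativity of $\oplus$ give $\bigoplus_\pi f(\pi)$, with no idempotence required. Each $\cF_{\mathbf{j}}$ is built in $\fO^*(S^n)$ time and space as before. The total is $(S\cdot P'_S)^{n+o(n)}$ time and $S^{n+o(n)}$ space, and every step is deterministic.
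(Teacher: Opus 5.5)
Your proposal is correct and follows the route the paper intends when it says ``a similar proof to that of Theorem~\ref{thm:main}'': you substitute the unique-support lemma for Lemma~\ref{thm:almost_optimal_chain_covering}, use both directions of Lemma~\ref{lemma:split} to show each permutation is supported by exactly one $\cF_{\mathbf{j}}$, run the restricted degree-$d$ DP on each $\cF_{\mathbf{j}}$, and $\oplus$-sum the results, which needs no idempotence. One small fix in the brute-force step: the $o(1)$ in $(P'_S+o(1))^{n_i}$ is not explicit, so the algorithm should select a minimum-size uniquely-supporting family from your exhaustive enumeration, which is automatically no larger than the family the lemma guarantees.
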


In the next section, we will show that for $S = 1.7913$, $P'_S < 1.20398$, thus implying the following.

\begin{theorem}\label{thm319}
    The pair $(S,T)$ with $S = 1.7916$, $T = S\cdot P'_S < 2.1567$ and $S\cdot T < 3.864$ is a feasible space-time tradeoff for permutation problems of constant degree.
\end{theorem}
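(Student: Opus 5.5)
The plan is to obtain Theorem~\ref{thm319} directly from Theorem~\ref{thm:main-permutation-problem}, once we have a single finite regularly self-intersecting set system with the right parameters. Concretely, I would exhibit (in the next section) a set system $\cF_0$ over some ground set $[n_0]$ that satisfies three conditions: it is regularly self-intersecting, $S(\cF_0)\le 1.7913$, and $P(\cF_0) < 1.20398$. This immediately gives $P'_{1.7913} < 1.20398$. Since $P'_S$ is non-increasing in $S$ (the defining infimum ranges over a larger class as $S$ grows), we also get $P'_{1.7916}\le P'_{1.7913} < 1.20398$.

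Next, apply Theorem~\ref{thm:main-permutation-problem} with $S=1.7916$. This yields time $(S\cdot P'_S)^{n+o(n)}$ and space $S^{n+o(n)}$. The arithmetic is $1.7916\cdot 1.20398 \approx 2.15705$. To certify $T<2.1567$, I expect we need either a slightly sharper bound on $P'_S$ from the construction or a $T$ taken marginally above $2.1567$. Presumably the construction actually gives $P' \le 1.2038$, and I would recompute the numerics from the exact construction to confirm. The product is then $S\cdot T = S^2 P'_S \approx 3.2098\cdot 1.20398 \approx 3.8645$. Again, fine numerics decide the strict inequality $3.864$, and I would adjust $S$ within $[1.7913,1.7916]$ to optimize $S^2P'_S$.

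The main obstacle is the construction step itself: producing a set system that is regularly self-intersecting and still has high chain-density. The natural candidate is a poset-based system in the Koivisto--Parviainen style, i.e.\ the downsets of a partial order on a small ground set (e.g.\ groups split into two halves, each half preceding the other). The reason to expect regularity is that two isomorphic copies of such a system intersect in a structured way. Take $\mathcal{G}$ to be the sets of $\cF\cap\cF'$ at the ``critical'' level, the level at which a chain commits to the ordering choice. A permutation supported by both systems passes through such a set. A permutation supported only by $\cF$ must avoid $\cF'$ at some level, and I would check that this forces it to avoid $\mathcal{G}$.

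For the numerics, I would use the union-product Lemma~\ref{lem37}, together with the fact that regular self-intersection is preserved under $\utimes$ (the preceding lemma). This lets us combine a few small building blocks and interpolate, as in Lemma~\ref{lemma:interpolate} restricted to regular systems, to hit exactly $S=1.7913$. I would compute $|\cF_0|$ and $C(\cF_0)$ exactly for the small blocks, by counting linear extensions, and optimize the mixing exponent.
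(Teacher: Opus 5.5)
Your top-level reduction matches the paper's. Theorem~\ref{thm:main-permutation-problem} already gives feasibility of $(S,S\cdot P'_S)$, so all the content lies in exhibiting one regularly self-intersecting system with $S(\cF)\le 1.7916$ and $P(\cF)<2.1567/1.7916\approx 1.20378$. That is exactly the step you leave open, and the only concrete candidate you name cannot supply it. Downset systems of bucket orders (height-two groups, towers of cubes with possibly unequal buckets, free elements) are precisely the class over which Koivisto and Parviainen proved their $13\times 2$ scheme optimal, with minimum $S^2P\approx 3.93$. By Lemma~\ref{lem37}, $S(\cF)^2P(\cF)$ of a union product is a weighted geometric mean of the blocks' values, so combining or interpolating blocks never goes below the best block. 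For instance, interpolating the $13\times 2$ system with the trivial system $2^{[1]}$ to reach $S=1.7916$ gives $P\approx 1.238$ and $ST\approx 3.97$. Since the theorem claims $ST<3.864$, this route cannot work, and you give no other system. Separately, your numerical target does not yield the statement as written. Monotonicity from $S=1.7913$ only gives $P'_{1.7916}<1.20398$, i.e.\ $T\approx 2.1571$, and your fallback $P'\le 1.2038$ is still above $1.20378$. The bound must be proved at $S=1.7916$ itself.

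The missing idea is a system that is \emph{not} an order-ideal family. The paper fixes $L\subseteq[n]$ with $|L|=\alpha n$ and takes the minimal $\cF$ supporting every permutation whose first $\beta n$ entries lie in $L$. This consists of all subsets of $L$ of size at most $\beta n$, plus all sets containing at least $\beta n$ elements of $L$. Equivalently, it is the union, over all $\beta n$-subsets $B\subseteq L$, of the ideal systems of the bucket orders $B\prec[n]\setminus B$. These ideal systems overlap heavily above level $\beta n$, and the union is not intersection-closed, so no single poset produces it. For $\beta/\alpha\ge 1/2$ one gets $\lg S(\cF)\le\max\{\alpha,(1-\alpha)+\alpha H(\beta/\alpha)\}+o(1)$. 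The supported fraction of permutations is $\binom{\alpha n}{\beta n}/\binom{n}{\beta n}$, which equals $\binom{(1-\beta)n}{(\alpha-\beta)n}/\binom{n}{\alpha n}$ as counted via Lemma~\ref{lemma:supported_fraction}. Regular self-intersection holds with $\mathcal{G}$ the $\beta n$-subsets of $L\cap L'$, since a permutation can meet $\mathcal{G}$ only through its $\beta n$-prefix set. With $\alpha=0.8412$ and $\beta=0.75\alpha$ this gives $S\approx 1.7916$ and $P'_{1.7916}<1.20375$, hence $T<2.1567$ and $ST<3.864$.
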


Theorem~\ref{thm319} immediately implies Theorem~\ref{thm3} stated in the introduction. As discussed before, Theorem~\ref{thm319} applies to a variety of permutation problems of constant degree over arbitrary semirings.
Perhaps the most natural problems in this class, to which our stronger Theorems~\ref{thm1} and \ref{thm2} do not apply, are \emph{counting} problems. 

\medskip

Let us give a single representative application, the problem of \emph{counting linear extensions of posets} (\#\textsc{LE}). In this problem, given an input poset $([n],\prec)$, we seek the number of total orders (permutations of $[n]$) that contain (extend) $\prec$. 

The problem has been extensively studied, e.g., see~\cite{brightwell1991counting, stanley1986two} and references thereof. 
Algorithms analogous to the TSP dynamic program (resulting in $\fO^*(2^n)$ time and space) are applicable to \#\textsc{LE} with minimal changes (e.g., see~\cite{knuth1974structured}), with prefix-sets corresponding to downsets of the input poset. Both the Gurevich-Shelah divide-and-conquer, and the tradeoff scheme of Koivisto and Parviainen easily apply, yielding the same exponential bounds; to our knowledge -- apart from special cases (e.g., \cite{mohring1989computationally, kangas2020faster, felsner2015linear, kozma_poset}) -- no better tradeoff $ST$ is known for this problem.   

\#\textsc{LE} is a permutation problem of degree two over the $(+,\cdot)$ semiring, where we aim to compute $\sum_\pi f(\pi)$ over all permutations of $[n]$. Here $f(\pi)$ should be $1$ if $\pi$ is a valid linear extension of the input and $0$ otherwise; accordingly, $f_1(A,x) = 1$, and $f_j(A,x,y) = 0$ if $y \prec x$ and $1$ otherwise. As a consequence, Theorem~\ref{thm319} implies a new space-time tradeoff for this problem with the improved value $ST < 3.864$.

\section{Extremal set systems}
\label{sec4}

In this section we focus on the combinatorial problem of finding small set systems supporting many permutations, thus proving bounds for our framework. In Appendix~\ref{appa} we give some simpler examples of set systems, including those that imply the previous tradeoff results and our warm-up example from \S\,\ref{sec2}.

Our best construction will result in the following bounds.
\begin{theorem}\label{thm:construction}
    Let $\epsilon > 0$, $\nicefrac{1}{4}\leq \beta \leq \gamma \leq \nicefrac{1}{2}$ and $\beta \leq \alpha \leq \nicefrac{1}{2}$. 
    Then there is a set system $\cF$ with \begin{align*}
    \lg S(\cF) &\leq
    \max\left\{\alpha,{\frac{1}{2}\left(H(2\beta)+H(1-2\gamma)\right)}\right\} +\epsilon,
    \\
    \lg P(\cF) &\leq
    1+H(2\alpha) - \left(\frac{1}{2}-\beta\right)\left(H\left(\frac{\gamma-\beta}{\frac{1}{2}-\beta}\right) + H\left(\frac{\frac{1}{2}-\gamma}{\frac{1}{2}-\beta}\right) + 2H\left(\frac{\alpha-\beta}{\frac{1}{2}-\beta}\right) \right)
    +\epsilon.
\end{align*}
\end{theorem}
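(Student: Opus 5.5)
The plan is to give an explicit construction for fixed rationals $\alpha,\beta,\gamma$ on a ground set of size $n$. I bound $|\cF|$ and $C(\cF)$ using the entropy estimates of \S\,\ref{sec2} and Stirling's formula, with all polynomial factors absorbed into $\epsilon$ for large $n$. Lemma~\ref{lem37} then lets me pass to union powers if divisibility issues arise.

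\textbf{Construction.} This is a refinement of the warm-up. Split $[n]$ into two halves $X$ and $Y$ with $|X|=|Y|=n/2$. Inside $X$ fix a set $A\subseteq X$ with $|A|=\alpha n$, and symmetrically inside $Y$ a set $B\subseteq Y$ with $|B|=\alpha n$. The family $\cF$ is the union of three parts:
\begin{itemize}
\item an early phase, consisting of all subsets of $A$ (at most $2^{\alpha n}$ sets);
\item a symmetric late phase, consisting of all sets $[n]\setminus U$ with $U\subseteq B$ (again $2^{\alpha n}$ sets);
\item a middle phase, consisting of all sets $s$ with $|s\cap X|\ge 2\beta\cdot |X|$ and $|s\cap Y|\le (1-2\gamma)\cdot|Y|$, together with the symmetric constraints that make the middle phase glue onto the early and late phases.
\end{itemize}
By the estimate $\sum_{i\le \beta' m}\binom{m}{i}\le 2^{mH(\beta')}$, the middle phase has size at most $2^{\frac n2(H(2\beta)+H(1-2\gamma))}$. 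The hypothesis $\beta,\gamma\ge 1/4$ ensures that $2\beta$ and $1-2\gamma$ sit on the correct sides of $1/2$ for this estimate. This gives the stated bound on $\lg S(\cF)$.

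\textbf{Counting chains.} I lower-bound $C(\cF)$ by counting permutations of a specific shape. In the first $\alpha n$ steps the permutation visits exactly $A$, and in the last $\alpha n$ steps exactly $B$. In between, the permutation must stay inside the middle phase. I make this automatic by requiring a more detailed block structure: the interior order is split into consecutive segments in which prescribed numbers of elements of $X\setminus A$ and $Y\setminus B$ appear. The segment boundaries are chosen at the thresholds $\beta$ and $\gamma$ relative to the remaining mass $\frac12-\beta$, so that each prefix satisfies the size constraints.

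The number of such permutations is a product of factorials times multinomial coefficients. After normalizing by $n!$, these multinomials contribute the savings
\[
\left(\tfrac12-\beta\right)\left(H\!\left(\tfrac{\gamma-\beta}{\frac12-\beta}\right)+H\!\left(\tfrac{\frac12-\gamma}{\frac12-\beta}\right)+2H\!\left(\tfrac{\alpha-\beta}{\frac12-\beta}\right)\right).
\]
The cost term $1+H(2\alpha)$ accounts for the loss from fixing the halves $X,Y$ and the sets $A,B$, namely the $\binom{n}{n/2}$ factor together with the choices inside each half.

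\textbf{Main obstacle.} The hard step is matching the chain count precisely. I need to choose the interleaving pattern in the middle segment so that every prefix lies in $\cF$ while the number of admissible interleavings is still as large as the claimed entropy expression. My first attempt is to let the middle be three consecutive windows: the $X$-elements beyond $A$, the crossing region between the $\beta$ and $\gamma$ thresholds, and the $Y$-elements before $B$. I then verify monotonicity of prefix sizes window by window and compute the multinomial by Stirling. If a window's prefixes escape the middle-phase constraints, I will enlarge $\cF$ by that window's sets, check that this stays within $2^{(\alpha+\epsilon)n}$ or within the middle-phase count, and thereby keep both stated bounds.
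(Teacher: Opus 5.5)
Your size estimate is fine. Read the middle phase as the union of $\{s: |s\cap X|\ge \beta n,\ |s\cap Y|\le(\frac12-\gamma)n\}$ and its mirror image $\{s: |s\cap X|\ge\gamma n,\ |s\cap Y|\le(\frac12-\beta)n\}$. Then your family contains the paper's and has size $\fO^*\bigl(2^{\alpha n}+\binom{n/2}{\beta n}\binom{n/2}{n/2-\gamma n}\bigr)$.

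The gap is the chain count. You never carry it out, and the one shape you commit to cannot give the claimed bound. Permutations whose first $\alpha n$ entries are exactly $A$ and whose last $\alpha n$ entries are exactly $B$ form at most a $(\alpha n)!^2((1-2\alpha)n)!/n!$ fraction of all permutations. Counting them therefore certifies at best $\lg P(\cF)\le H(2\alpha)+2\alpha+o(1)$, however you arrange the middle windows. At $\alpha=\frac12$, $\beta=0.4112$, $\gamma\approx0.4703$ this gives only $P\le 2$, which is the tower of cubes. The statement claims $\lg P\approx 0.837$, i.e.\ $P\approx1.786$. So your assertion that ``these multinomials contribute the savings'' has no computation behind it. The fallback of enlarging $\cF$ is also unnecessary once the right permutations are chosen.

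The missing idea is to force only the first $\beta n$ entries into $A$ and the last $\beta n$ into $B$. Split the remaining positions into two windows of length $(\frac12-\beta)n$, namely $\beta n+1,\dots,\frac n2$ and $\frac n2+1,\dots,n-\beta n$. The first window contains exactly $(\gamma-\beta)n$ elements of $X$ and $(\frac12-\gamma)n$ of $Y$; the second is the mirror image. Every prefix of such a permutation lies in your $\cF$ automatically. The leftover $(\alpha-\beta)n$ elements of $A$ (resp.\ $B$) may occupy any $X$-slots (resp.\ $Y$-slots). There are
\[
\Bigl(\tfrac{(\alpha n)!}{((\alpha-\beta)n)!}\Bigr)^{2}\binom{(\frac12-\beta)n}{(\gamma-\beta)n}\binom{(\frac12-\beta)n}{(\frac12-\gamma)n}\Bigl(\bigl((\tfrac12-\beta)n\bigr)!\Bigr)^{2}
\]
such permutations, and dividing by $n!$ gives exactly
\[
\frac{\binom{(1/2-\beta)n}{(\gamma-\beta)n}\binom{(1/2-\beta)n}{(1/2-\gamma)n}\binom{(1/2-\beta)n}{(\alpha-\beta)n}^2}{\binom{n}{n/2}\binom{n/2}{\alpha n}^2}.
\]
This is the fraction the paper obtains via Lemma~\ref{lemma:supported_fraction}, by counting relabelings of $(L_1,R_1,L_2,R_2)$ for which the identity satisfies these constraints. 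Its entropy estimate is the stated bound.

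The freedom of the leftover elements of $A$ and $B$ is precisely what produces the $2H\bigl(\frac{\alpha-\beta}{1/2-\beta}\bigr)$ term. Your shape, in which the first $\alpha n$ steps visit exactly $A$, destroys that freedom. It also shrinks the region where $X$ and $Y$ can interleave from $(1-2\beta)n$ to $(1-2\alpha)n$ positions, which kills the two $\gamma$-window binomials as well.
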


In particular, letting $\alpha=\nicefrac{1}{2}-o(1)$, $\beta=0.4112$ and $\gamma \approx 0.4703+o(1)$ be the root of $H(2\beta)+H(1-2\gamma) = 2\alpha$, yields the following.
\begin{corollary}
    For $S=\sqrt{2}$, $P_S < 1.785975$.
\end{corollary}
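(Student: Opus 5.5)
The plan is a direct instantiation of Theorem~\ref{thm:construction}, followed by a continuity argument. The continuity argument is needed because the theorem only bounds $\lg S(\cF)$ up to an additive $\epsilon$, whereas $P_S$ requires $S(\cF)\le \sqrt2$ exactly.

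\textbf{Limit values.} First I evaluate the right-hand sides at the limiting parameters $\alpha=\nicefrac12$, $\beta=0.4112$, and $\gamma=\gamma_0$. Here $\gamma_0$ is the root in $[\beta,\nicefrac12]$ of $H(2\beta)+H(1-2\gamma)=1$; it exists by the intermediate value theorem, since $H(1-2\gamma)$ decreases continuously from $H(1-2\beta)=H(2\beta)>0$ to $0$ as $\gamma$ goes from $\beta$ to $\nicefrac12$. One needs $H(2\beta)\le 1 \le 2H(2\beta)$ for this, and that holds for $\beta=0.4112$.

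At $\alpha=\nicefrac12$ we get $H(2\alpha)=H(1)=0$ and $(\alpha-\beta)/(\nicefrac12-\beta)=1$, so the term $2H(\cdot)$ also vanishes. The two remaining arguments $(\gamma-\beta)/(\nicefrac12-\beta)$ and $(\nicefrac12-\gamma)/(\nicefrac12-\beta)$ sum to $1$, so their entropies coincide. The bound therefore collapses to
\[
\lg P \le 1-(1-2\beta)\,H\!\left(\frac{\gamma_0-\beta}{\nicefrac12-\beta}\right).
\]
Numerically, $\gamma_0\approx 0.4703$ and the argument is about $0.6655$. Then $H\approx 0.9197$ and $\lg P\approx 1-0.1633=0.8367$, which gives $P\approx 1.78597$. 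I would carry out this evaluation with rigorous interval arithmetic, bracketing $\gamma_0$ and using monotonicity of $H$ on each side of $\nicefrac12$. The goal is to certify that the limiting value is strictly below $\lg 1.785975$.

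\textbf{Perturbation.} The limiting choice gives $\lg S(\cF)\le \nicefrac12+\epsilon$, which is not good enough. So I perturb the parameters. Take $\alpha=\nicefrac12-\delta$, and let $\gamma=\gamma_\delta$ be the root of $H(2\beta)+H(1-2\gamma)=1-2\delta$. This root tends to $\gamma_0$ as $\delta\to0$, and for small $\delta$ it still satisfies $\beta\le\gamma\le\nicefrac12$. The constraints $\nicefrac14\le\beta\le\gamma$ and $\beta\le\alpha\le\nicefrac12$ clearly hold. Choosing $\epsilon<\delta$ then gives $\lg S(\cF)\le \nicefrac12-\delta+\epsilon<\nicefrac12$, that is, $S(\cF)<\sqrt2$.

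The right-hand side of the $P$ bound is a continuous function of $(\alpha,\gamma)$ on this domain, since $H$ is continuous on $[0,1]$, even though it has infinite slope at the endpoints. So as $\delta,\epsilon\to0$ the bound tends to the limiting value. Because that value has strict slack below $\lg 1.785975$, some small $\delta>0$ and $0<\epsilon<\delta$ give a set system with $S(\cF)\le\sqrt2$ and $P(\cF)<1.785975$. By the definition of $P_{\sqrt2}$ as an infimum, this yields the corollary.

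\textbf{Main obstacle.} I expect the main difficulty to be the numerics rather than the argument. The slack is only in about the sixth significant digit, so $\gamma_0$ and the entropy value must be bounded carefully; an approximate floating-point check is not enough. The perturbation must also not consume the slack. Near $\alpha=\nicefrac12$ the terms $H(2\alpha)$ and $H((\alpha-\beta)/(\nicefrac12-\beta))$ grow like $\delta\lg(1/\delta)$, so I would pick an explicit small $\delta$, say $10^{-9}$, and bound these terms directly.
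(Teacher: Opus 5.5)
Your proposal is correct and matches the paper's argument, which is exactly this instantiation of Theorem~\ref{thm:construction} with $\alpha=\nicefrac12-o(1)$, $\beta=0.4112$ and $\gamma$ the root of $H(2\beta)+H(1-2\gamma)=2\alpha$; your explicit choice of $\delta$ and $\epsilon<\delta$, together with the continuity step, spells out what the paper's $o(1)$ notation leaves implicit. For the record, a careful evaluation gives $\gamma_0\approx 0.470282$, argument $\approx 0.665338$ and limiting value $P\approx 1.785930$, so the slack below $1.785975$ is about $4.5\cdot 10^{-5}$, somewhat more comfortable than your rounded figures suggest.
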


Letting $\alpha=0.46-o(1)$, $\beta=0.406$ and $\gamma \approx 0.4821+o(1)$ be the root of $H(2\beta)+H(1-2\gamma) = 2\alpha$ yields the following.
\begin{corollary}
    For $S=2^{0.46}$, $P_S < 2.121604$.
\end{corollary}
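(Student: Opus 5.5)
The corollary is a direct instantiation of Theorem~\ref{thm:construction}, so the plan is to fix the parameters, check the hypotheses, absorb the small slacks with a continuity argument, and finish with a numerical evaluation. By the definition of $P_S$ as an infimum, it suffices to exhibit a single set system $\cF$ with $S(\cF)\le 2^{0.46}$ and $P(\cF)<2.121604$.

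\textbf{Choice of parameters.} Fix a small $\delta>0$ and set $\alpha=0.46-\delta$ and $\beta=0.406$. Consider the function $g(\gamma)=H(2\beta)+H(1-2\gamma)$ on $[\nicefrac14,\nicefrac12]$.
\begin{itemize}
\item As $\gamma$ runs from $\nicefrac14$ to $\nicefrac12$, $1-2\gamma$ runs from $\nicefrac12$ to $0$, so $H(1-2\gamma)$ decreases continuously from $1$ to $0$.
\item Numerically $H(0.812)\approx 0.697$, so $g$ decreases continuously from about $1.697$ to about $0.697$.
\item The target value $2\alpha\approx 0.92$ lies strictly in between, so the intermediate value theorem gives a root $\gamma=\gamma(\delta)$ of $g(\gamma)=2\alpha$. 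It satisfies $H(1-2\gamma)\approx 0.223$, i.e.\ $1-2\gamma\approx 0.036$, so $\gamma\approx 0.482$, tending to $\approx 0.4821$ as $\delta\to 0$.
\end{itemize}
The hypotheses of Theorem~\ref{thm:construction} then hold: $\nicefrac14\le\beta=0.406\le\gamma\approx0.482\le\nicefrac12$ and $\beta\le\alpha\le\nicefrac12$.

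\textbf{Size bound.} With this $\gamma$, the maximum in the size bound of Theorem~\ref{thm:construction} equals exactly $\alpha$. Taking $\epsilon\le\delta$ gives $\lg S(\cF)\le \alpha+\epsilon\le 0.46$, i.e.\ $S(\cF)\le 2^{0.46}$, as required.

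\textbf{Chain-density bound.} The right-hand side of the $\lg P(\cF)$ bound is a continuous function of $(\alpha,\beta,\gamma)$ near the limit point $(0.46,\,0.406,\,0.4821)$. So it suffices to show that its value at the limit point is strictly less than $\lg 2.121604$; choosing $\delta$ and then $\epsilon$ small enough keeps the bound below that threshold. At the limit point:
\begin{itemize}
\item $\nicefrac12-\beta=0.094$ and $H(2\alpha)=H(0.92)\approx 0.4022$.
\item The ratios are $(\gamma-\beta)/0.094\approx 0.81$ and $(\nicefrac12-\gamma)/0.094\approx 0.19$. These sum to $1$, so each of their entropies is $\approx 0.70$.
\item The ratio $(\alpha-\beta)/0.094\approx 0.5745$ has entropy $\approx 0.984$.
\item The bracketed sum is therefore $\approx 3.37$, and the subtracted term is $\approx 0.094\cdot 3.37\approx 0.317$.
\end{itemize}
This gives $\lg P\approx 1.085$, i.e.\ $P\approx 2.1216$.

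\textbf{Main obstacle.} The only delicate step is this last numerical verification. The claimed constant sits very close to the computed value, so the evaluation needs enough digits, with $\gamma$ computed precisely from its defining equation, to certify the strict inequality with a positive margin. That margin is exactly what the continuity argument uses to absorb $\delta$ and $\epsilon$. I would do this with a short interval-arithmetic computation of $\gamma$ and of the entropy expression.
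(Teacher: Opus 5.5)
Your proposal is correct and is exactly the paper's argument. You instantiate Theorem~\ref{thm:construction} with $\beta=0.406$, $\alpha=0.46-o(1)$ and $\gamma$ the root of $H(2\beta)+H(1-2\gamma)=2\alpha$, so the size bound collapses to $\alpha+\epsilon\le 0.46$, and then you evaluate the chain-density bound numerically; your $\delta,\epsilon$ continuity bookkeeping just makes the paper's $o(1)$ terms explicit. As a check on your final step: at the limit point ($\gamma\approx 0.482096$) a careful evaluation gives $\lg P\approx 1.085138$, i.e.\ $P\approx 2.12158$. That leaves a genuine margin of about $2.6\cdot 10^{-5}$ below $2.121604$ for your continuity argument to absorb.
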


We can use Lemma \ref{lemma:interpolate} to interpolate between these two points (as well as the trivial point $P_S=1$ for $S=2$) to get the following.
\begin{corollary}
    For $0.46 \leq x \leq \nicefrac{1}{2}$ and $S=2^x$, we have $P_S < 1.785975\cdot74.0839^{\left(\nicefrac{1}{2}-x\right)}$.
    
    For $\nicefrac{1}{2}\leq x \leq 1$ and $S=2^x$, we have $P_S < 3.18971^{1-x}$.
\end{corollary}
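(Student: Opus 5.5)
The plan is to apply Lemma~\ref{lemma:interpolate} twice, once on each interval, with the exponent $x=\lg S$ as the interpolation variable. Writing $S=2^x$, Lemma~\ref{lemma:interpolate} gives, for $S_1=2^{x_1}$, $S_2=2^{x_2}$ and $x=\mu x_1+(1-\mu)x_2$ with $\mu\in[0,1]$, the bound $P_{2^x}\le P_{2^{x_1}}^{\mu}P_{2^{x_2}}^{1-\mu}$. Equivalently, $\lg P_{2^x}$ is bounded above by the linear interpolation in $x$ of the endpoint values. Since the lemma asks for $1<S_i\le 2$, the endpoints $x_1,x_2$ must lie in $(0,1]$, and the ones we use ($0.46$, $\nicefrac12$, $1$) do.

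\emph{First interval, $0.46\le x\le\nicefrac12$.} Take $x_1=\nicefrac12$ and $x_2=0.46$. Then $\mu=(x-0.46)/0.04$ and $1-\mu=(\nicefrac12-x)/0.04$. The two preceding corollaries give $P_{\sqrt2}<1.785975$ and $P_{2^{0.46}}<2.121604$, so
\[
P_{2^x}< 1.785975\cdot\left(\frac{2.121604}{1.785975}\right)^{(\nicefrac12-x)/0.04}
= 1.785975\cdot\left(\left(\frac{2.121604}{1.785975}\right)^{25}\right)^{\nicefrac12-x}.
\]
It then remains to check numerically that $(2.121604/1.785975)^{25}\le 74.0839$. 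The ratio is about $1.18793$, and its logarithm times $25$ should come out near $\ln 74.08$.

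\emph{Second interval, $\nicefrac12\le x\le 1$.} Take $x_1=\nicefrac12$ and $x_2=1$, using the trivial bound $P_2=1$. This holds because the full power set $2^{[n]}$ has $S=2$ and supports all $n!$ permutations. Here $\mu=2(1-x)$, so
\[
P_{2^x}\le 1.785975^{2(1-x)}=(1.785975^2)^{1-x},
\]
and the claim reduces to checking $1.785975^2<3.18971$.

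Strictness needs a short remark. When $\mu>0$ the strict endpoint inequality propagates to a strict final inequality. At $x=1$ both sides equal $1$, so the second statement is really $\le$ there, unless one reads the statement as intended for $x<1$. At $x=0.46$ the first bound is strict by the corollary. Apart from this edge case, the proof is purely mechanical, and the main point needing care is the numerical verification of the constants $74.0839$ and $3.18971$.
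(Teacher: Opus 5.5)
Your proposal is correct and follows the paper's own argument: you apply Lemma~\ref{lemma:interpolate} between the points $x=0.46$ and $x=\nicefrac12$ for the first range, and between $x=\nicefrac12$ and the trivial point $P_2=1$ for the second, and the constants check out ($(2.121604/1.785975)^{25}\approx 74.08389\le 74.0839$ and $1.785975^2\approx 3.1897067<3.18971$). Your remark about the endpoint is also right: at $x=1$ both sides equal $1$ because $P_2=1$, so the strict inequality holds only for $x<1$.
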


In order to apply our approach to permutation problems over semirings which are not additively idempotent, we need the set systems we consider to have some additional properties.
As discussed in the previous section, a sufficient property is for the set system to be regularly self-intersecting. The following result shows that even with this restriction, we can improve on the previous best tradeoff with our framework.

\begin{theorem} \label{thm:construction-regular}
    Let $\epsilon > 0$.
    For any $0 < \alpha \leq 1$,  and $\frac{\alpha}{2} \leq \beta\leq \alpha$ there is a regularly self-intersecting set system $\cF$ with
    \begin{align*}
        \lg S(\cF) &\leq \max\left\{\alpha, (1-\alpha) + \alpha H\left(\frac{\beta}{\alpha}\right)\right\} +\epsilon, \\
        \lg  P(\cF) &\leq H(\alpha) - (1-\beta)H\left(\frac{\alpha-b}{1-\beta}\right) +\epsilon.
    \end{align*}
\end{theorem}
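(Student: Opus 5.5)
The plan is to use a one-parameter ``threshold'' construction that forces the first $\beta n$ elements of a supported permutation to lie in a fixed set of $\alpha n$ elements. Then we count its size and chains, and check regular self-intersection directly.

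\textbf{Construction.} Fix a large $n$ (chosen depending on $\epsilon$), put $a=\lfloor \alpha n\rfloor$, $b=\lfloor \beta n\rfloor$, and let $A=[a]$. Define
\[
\cF = \{X \mid X\subseteq A\} \;\cup\; \{X \subseteq [n] \mid |X\cap A|\ge b\}.
\]
We first identify the supported permutations. A permutation $\pi$ is supported by $\cF$ if and only if $\pi^{(b)}\subseteq A$, i.e., its first $b$ elements all lie in $A$.
\begin{itemize}
\item If $\pi^{(b)}\subseteq A$, every shorter prefix is a subset of $A$, and every longer prefix contains $b$ elements of $A$.
\item Conversely, if some element outside $A$ appears among the first $b$ positions, then the prefix ending at that element is not contained in $A$. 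It also has fewer than $b$ elements of $A$, so it is not in $\cF$.
\end{itemize}

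\textbf{Size and chain-density.} The first part of $\cF$ contributes $2^{a}$ sets. The second part contributes at most $2^{n-a}\sum_{i\ge b}\binom{a}{i}$ sets. Since $\beta\ge\alpha/2$, we have $b/a\ge 1/2$ up to rounding, so the standard entropy estimate gives $\sum_{i\ge b}\binom{a}{i}\le 2^{aH(b/a)}$. Hence
\[
|\cF|\le \fO^*\bigl(2^{\alpha n}+2^{(1-\alpha)n+\alpha H(\beta/\alpha) n}\bigr).
\]
For the chain count, $C(\cF)/n!$ is the probability that a uniform random permutation has its first $b$ elements inside $A$. This probability equals $\binom{a}{b}/\binom{n}{b}$. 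The identity $\binom{n}{a}\binom{a}{b}=\binom{n}{b}\binom{n-b}{a-b}$ rewrites it as $\binom{n-b}{a-b}/\binom{n}{a}$. Standard binomial estimates then give
\[
\lg P(\cF)\le H(\alpha)-(1-\beta)H\!\left(\tfrac{\alpha-\beta}{1-\beta}\right)+o(1).
\]
The polynomial factors and the rounding of $a$ and $b$ are absorbed into $\epsilon$ by choosing $n$ large enough. Alternatively, one could take union powers via Lemma~\ref{lem37}, but for that step we would need regular self-intersection to be preserved. The lemma above on union products provides exactly this, so either route works.

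\textbf{Regular self-intersection.} Let $\cF'$ be isomorphic to $\cF$. It is the same construction built on some $A'$ with $|A'|=a$. Take
\[
\mathcal{G}=\{X \mid |X|=b,\ X\subseteq A\cap A'\}.
\]
Every $X\in\mathcal{G}$ is a subset of $A$ and of $A'$, so $\mathcal{G}\subseteq\cF\cap\cF'$. We check the two required conditions.
\begin{itemize}
\item A permutation supported by both $\cF$ and $\cF'$ has $\pi^{(b)}\subseteq A\cap A'$, so $\pi^{(b)}\in\mathcal{G}$.
\item Let $\pi$ be supported by $\cF$ and have some prefix in $\mathcal{G}$. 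All sets in $\mathcal{G}$ have size $b$, so that prefix must be $\pi^{(b)}$, and hence $\pi^{(b)}\subseteq A'$. By the characterization of supported permutations applied to $\cF'$, $\pi$ is then supported by $\cF'$. So no permutation supported by $\cF$ but not by $\cF'$ has a prefix in $\mathcal{G}$.
\end{itemize}

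The main thing to be careful about is this characterization: membership in $\cF$ depends only on the threshold at level $b$. That is why a single layer of sets of size $b$ serves as $\mathcal{G}$. The remaining work is the routine entropy bookkeeping, together with the condition $\beta\ge\alpha/2$, which is needed so that the tail sum of binomials is dominated by its first term.
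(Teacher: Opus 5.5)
Your proof is correct and takes the same approach as the paper. Your $\cF$ is exactly the paper's minimal set system supporting all permutations whose first $\beta n$ entries lie in a fixed $\alpha n$-set, since subsets of $A$ of size $>b$ already satisfy $|X\cap A|\ge b$; you also use the same witness $\mathcal{G}$ of $b$-subsets of $A\cap A'$, and your count $\binom{a}{b}/\binom{n}{b}=\binom{n-b}{a-b}/\binom{n}{a}$ is the quantity the paper gets by counting isomorphic copies (Lemma~\ref{lemma:supported_fraction}).
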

In particular, letting $\alpha = 0.8412$ and $\beta = 0.75 \cdot 0.8412$, yields the following.
\begin{corollary}
    For $S=1.7916$, $P'_S < 1.20375$.
\end{corollary}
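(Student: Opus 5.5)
The corollary should follow by substituting into Theorem~\ref{thm:construction-regular} and then reading off $P'_S$ from its definition as an infimum. I would take $\alpha=0.8412$ and $\beta=0.75\cdot 0.8412=0.6309$. These are admissible, since $0<\alpha\le 1$ and $\alpha/2\le\beta\le\alpha$. The theorem then gives, for every $\epsilon>0$, a regularly self-intersecting set system $\cF_\epsilon$ with the two stated bounds. It remains to evaluate those bounds numerically and choose $\epsilon$ small enough. The $H\bigl(\frac{\alpha-b}{1-\beta}\bigr)$ in the statement is read with $b=\beta$, which I take to be a typo.

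\emph{Size bound.} Here $\beta/\alpha=3/4$ and $H(3/4)\approx 0.811278$. So
\[
(1-\alpha)+\alpha H(3/4)\approx 0.1588+0.682447=0.841247 .
\]
This exceeds $\alpha=0.8412$, so the maximum is $\approx 0.841247$. Since $\lg 1.7916\approx 0.841250$, choosing $\epsilon<2\cdot 10^{-6}$ gives $S(\cF_\epsilon)\le 1.7916$.

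\emph{Chain-density bound.} We have $H(0.8412)\approx 0.63142$. Also $\frac{\alpha-\beta}{1-\beta}=0.2103/0.3691\approx 0.56977$, with $H(0.56977)\approx 0.98591$. This gives
\[
\lg P(\cF_\epsilon)\le 0.63142-0.3691\cdot 0.98591+\epsilon\approx 0.26752+\epsilon,
\]
that is, $P(\cF_\epsilon)\lesssim 2^{0.26752}\approx 1.20373$. This is below $1.20375$ for $\epsilon$ small enough.

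Since $\cF_\epsilon$ is regularly self-intersecting with $S(\cF_\epsilon)\le 1.7916$, the definition of $P'_S$ gives $P'_{1.7916}\le P(\cF_\epsilon)<1.20375$.

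The only real obstacle is numerical: both margins are tiny, about $3\cdot10^{-6}$ in $\lg S$ and $2\cdot10^{-5}$ in $P$. I would therefore evaluate the entropies with rigorous interval arithmetic to roughly eight significant digits, rather than relying on the rounded values above, and fix $\epsilon$, say $\epsilon=10^{-6}$, only after these certified bounds are in hand.
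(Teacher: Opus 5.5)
Your proposal is correct and is the paper's own argument: substitute $\alpha=0.8412$ and $\beta=0.75\cdot 0.8412$ into Theorem~\ref{thm:construction-regular} (reading $b$ as $\beta$), take $\epsilon$ small, and use the definition of $P'_S$ as an infimum. One numerical caution: the margins are tighter than your rounded values suggest, since $\lg 1.7916\approx 0.8412486$ against a size exponent of $\approx 0.8412472$ (so $\epsilon<2\cdot 10^{-6}$ is not quite enough, though your eventual choice $\epsilon=10^{-6}$ works), and $H(0.8412)\approx 0.631431$ gives a limiting value of $2^{0.267531}\approx 1.203746$, only about $4\cdot 10^{-6}$ below $1.20375$.
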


To count the number of permutations supported by a particular set system $\F$, it will be more convenient in the cases which lead to this bound to reason about how many set systems isomorphic to $\F$ there are, and how many of those contain a particular permutation (similarly to how the counting was done in the warm-up of \S\,\ref{sec2}). The following lemma makes this translation explicit.

\begin{lemma}\label{lemma:supported_fraction}
    Let $n\geq 1$ and let $\cF$ be a set system on $[n]$. Let $N$ be the number of set systems on $[n]$ isomorphic to $\F$, and $M$ be the number of those supporting the identity permutation on $[n]$. Then $\F$ supports exactly $\frac{M}{N}n!$ permutations of $[n]$.
\end{lemma}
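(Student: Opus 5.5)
We prove this by double counting pairs (relabeling, permutation) using the action of the symmetric group $\mathfrak{S}_n$ on set systems over $[n]$. For a permutation $\sigma$ of $[n]$, write $\sigma(\cF) = \{\sigma(s) \mid s\in\cF\}$. The set systems isomorphic to $\cF$ are exactly the images $\sigma(\cF)$, and each of the $N$ distinct such systems arises from exactly $n!/N$ choices of $\sigma$. The reason is that the stabilizer of $\cF$ has size $n!/N$, by the orbit–stabilizer theorem.

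The key observation is that support is equivariant. The system $\cF$ supports $\pi$ if and only if $\sigma(\cF)$ supports $\sigma\circ\pi$. Indeed, the prefix-sets of $\sigma\circ\pi$ are $\sigma(\pi^{(i)})$, and $\pi^{(i)}\in\cF$ holds iff $\sigma(\pi^{(i)})\in\sigma(\cF)$. Applying this with $\sigma=\pi^{-1}$, we get that $\cF$ supports $\pi$ iff $\pi^{-1}(\cF)$ supports the identity permutation.

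Now count the permutations supported by $\cF$. This number equals the number of $\pi$ for which $\pi^{-1}(\cF)$ supports the identity. The map $\pi\mapsto\pi^{-1}$ is a bijection of $\mathfrak{S}_n$, so the count equals $|\{\sigma\in\mathfrak{S}_n \mid \sigma(\cF)\text{ supports the identity}\}|$.

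We group these $\sigma$ by the image $\sigma(\cF)$. Each of the $M$ isomorphic copies that support the identity is hit by exactly $n!/N$ values of $\sigma$. The copies that do not support the identity contribute nothing. Hence $\cF$ supports exactly $M\cdot n!/N$ permutations, as claimed.

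The only point needing care is the uniform fibre size $n!/N$, which is handled by orbit–stabilizer: the fibre over $\sigma_0(\cF)$ is the coset $\sigma_0\,\mathrm{Stab}(\cF)$. Everything else is routine bookkeeping. Equivalently, the argument says that a uniformly random relabeling of $\cF$ supports the identity with probability $M/N$, which matches the probabilistic counting used in the warm-up of \S\,\ref{sec2}.
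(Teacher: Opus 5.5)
Your proof is correct and uses the same symmetry as the paper's proof, but organizes the count differently. The paper double counts pairs (isomorphic copy of $\cF$, permutation it supports). It asserts that every permutation is supported by the same number $M$ of copies and that every copy supports the same number $Q$ of permutations, so $M\cdot n! = Q\cdot N$.

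You instead count the relabelings $\sigma$ for which $\sigma(\cF)$ supports the identity. This forces the extra step that every copy arises from exactly $n!/N$ relabelings, which you get from orbit--stabilizer. So your argument is really a direct count through the symmetric group, not the double count your first sentence announces.

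The paper's route never needs stabilizers, because it counts distinct copies rather than relabelings. It does, however, leave its two uniformity claims unjustified. Your equivariance identity ($\cF$ supports $\pi$ iff $\sigma(\cF)$ supports $\sigma\circ\pi$) proves both of them:
\begin{itemize}
\item $\cF'\mapsto\sigma(\cF')$ permutes the copies and sends those supporting $\pi$ onto those supporting $\sigma\circ\pi$, so every permutation is supported by the same number of copies;
\item $\pi\mapsto\sigma\circ\pi$ bijects the permutations supported by $\cF'$ with those supported by $\sigma(\cF')$, so every copy supports the same number of permutations.
\end{itemize}
Your write-up is therefore more self-contained than the paper's one-line argument, at the cost of the additional group-theoretic step.
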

\begin{proof}
    This follows easily from a double counting argument. Every permutation on $[n]$ (and in particular the identity) is supported by the same number $M$ of set systems isomorphic to $\cF$, and every set system isomorphic to $\cF$ supports the same number, call it $Q$, of permutations on $[n]$. 
    We have $M\cdot n! = Q\cdot N$, as both hand sides count the number of pairs of set systems isomorphic to $\cF$ and permutations supported by said set systems. Rearranging, we get $Q=\frac{M}{N}n!$.
\end{proof}

We are now ready to prove Theorem~\ref{thm:construction} and Theorem~\ref{thm:construction-regular}.
\begin{proof}[Proof of Theorem~\ref{thm:construction}]
    Let $n\geq 1$, $\nicefrac{1}{4}\leq \beta \leq \gamma \leq \nicefrac{1}{2}$, and $\beta \leq \alpha \leq \nicefrac{1}{2}$. To simplify notation, we assume $\frac{n}{2}$, $\alpha n$, $\beta n$, and $\gamma n$ are integers. 

    Let $L_1,R_1$ be a partition of $[n]$ into two subsets of size $\frac{n}{2}$, and $L_2\subseteq L_1$, $R_2\subseteq R_1$ be two subsets of size $\alpha n$.
    
    We define $\cF$ as the minimal set system over $[n]$ supporting all permutations $\pi$ of $[n]$ with the following properties:
    \begin{itemize}
        \item the first $\beta n$ entries of $\pi$ are in $L_2$,
        \item the last $\beta n$ entries of $\pi$ are in $R_2$,
        \item among the first $\frac{n}{2}$ entries of $\pi$, at least $\gamma n$ are in $L_1$,
        \item among the last $\frac{n}{2}$ entries of $\pi$, at least $\gamma n$ are in $R_1$.
    \end{itemize}

    Note in particular that the conditions imply that among the first (resp.\ last) $k$ entries of such a permutation, at least $k-\frac{n}{2}+\gamma$ are in $L_1$ (resp.\ $R_1$).

    Let us estimate the number of sets in $\cF$. By definition, each of these sets is a prefix-set of a permutation with the above properties. 
    
    The number of sets in $\cF$ of size at most $\beta n$ is at most $2^{\alpha n}$, as these are subsets of $L_2$. The same holds for the number of sets in $\cF$ of size at least $n-\beta n$, as these are complements of subsets of $R_2$.

    By the constraints on the supported permutations, a set $s$ in $\cF$ of size $k$ between $\beta n$ and $\frac{n}{2}$ must contain at least $k_1\geq \max\{\beta n, k-\frac{n}{2}+\gamma n\} \geq \beta n$ elements from $L_1$. Similarly, the complement of $s$ must contain at least $\max\{\beta n, n-k-\frac{n}{2}+\gamma n\}$ elements from $R_1$. In other words, $s$ contains at most $k_2\leq \frac{n}{2} - \max\{\beta n, n-k-\frac{n}{2}+\gamma n\} \leq \frac{n}{2}-\gamma n$ elements from $R_1$.

    One can then bound the number of such sets by
    \begin{align*}
        \sum_{k_1=\beta n}^{n/2}\sum_{k_2=0}^{n/2-\gamma n} \binom{n/2}{k_1} \binom{n/2}{k_2} = \fO^*\left( \binom{n/2}{\beta n} \binom{n/2}{n/2-\gamma n}\right),
    \end{align*}
    where we have bounded the sum by its maximum term, using the fact that $\beta n \geq \frac{n}{4}$ and $\gamma n \geq \frac{n}{4}$.

    The same holds for sets in $\cF$ of size between $\frac{n}{2}$ and $n-\beta n$, by considering their complements.

    Thus, we have $|\cF| = \fO^*\left(2^{\alpha n} + \binom{n/2}{\beta n} \binom{n/2}{n/2-\gamma n}\right)$. 

    Let us now estimate the fraction of set systems isomorphic to $\cF$ which support the identity permutation. This is equivalent to estimating the fraction of ways to choose $L_1,R_1,L_2$ and $R_2$ which lead to a set system supporting the identity permutation.

    In total, there are $\binom{n}{n/2}\binom{n/2}{\alpha n}^2$ ways to choose $L_1,R_1,L_2$ and $R_2$. The identity permutation is supported when this choice is such that:
    \begin{itemize}
        \item $L_2$ contains ${1,2,\ldots, \beta n}$,
        \item $R_2$ contains ${n,n-1,\ldots, n-\beta n+1}$,
        \item $L_1$ contains at least $\gamma n$ elements from ${1,2,\ldots, \frac{n}{2}}$,
        \item $R_1$ contains at least $\gamma n$ elements from ${n,n-1,\ldots, \frac{n}{2}+1}$.
    \end{itemize}

We can generate such choices by first letting $L_1$ be a set containing ${1,2,\ldots, \beta n}$ together with $\gamma n-\beta n$ elements from  ${\beta n+1, \beta n+2, \ldots, \frac{n}{2}}$ and $\frac{n}{2}-\gamma n$ elements from  ${\frac{n}{2}+1, \frac{n}{2}+2, \ldots, n-\beta n}$. There are $\binom{n/2-\beta n}{\gamma n - \beta n}\binom{n/2-\beta n}{n/2 - \gamma n}$ ways to do so (and this choice also fixes $R_1$). Then choose for $L_2$ any subset of $L_1$ of size $\alpha n$ containing ${1,2,\ldots, \beta n}$ (there are $\binom{n/2-\beta n}{\alpha n - \beta n}$ choices), and choose for $R_2$ any subset of $R_1$ of size $\alpha n$ containing ${n,n-1,\ldots, n-\beta n+1}$ (again there are $\binom{n/2-\beta n}{\alpha n - \beta n}$ choices).

There are thus at least 
$\binom{n/2-\beta n}{\gamma n - \beta n}\binom{n/2-\beta n}{n/2 - \gamma n} \binom{n/2-\beta n}{\alpha n - \beta n}^2$ ways to choose $L_1,R_1,L_2$ and $R_2$ such that the identity permutation is supported by the resulting set system.

It follows that the fraction of set systems isomorphic to $\cF$ which support the identity permutation is at least 
\[f = \frac{\binom{n/2-\beta n}{\gamma n - \beta n}\binom{n/2-\beta n}{n/2 - \gamma n} \binom{n/2-\beta n}{\alpha n - \beta n}^2}{\binom{n}{n/2}\binom{n/2}{\alpha n}^2}.\]

By Lemma \ref{lemma:supported_fraction}, the number of permutations of $[n]$ supported by $\cF$ is  least $f\cdot n!$, and thus $P(\cF) \leq (1/f)^{1/n}$.

By estimating the binomial coefficients through the binary entropy function in the standard way, we get
\begin{align*}
    \lg S(\cF) &\leq
    \max\left\{\alpha,{\frac{1}{2}\left(H(2\beta)+H(1-2\gamma)\right)}\right\} +o(1),
    \\
    \lg P(\cF) &\leq
    1+H(2\alpha) - \left(\frac{1}{2}-\beta\right)\left(H\left(\frac{\gamma-\beta}{\frac{1}{2}-\beta}\right) + H\left(\frac{\frac{1}{2}-\gamma}{\frac{1}{2}-\beta}\right) + 2H\left(\frac{\alpha-\beta}{\frac{1}{2}-\beta}\right) \right)
    +o(1).
\end{align*}
For large enough $n$ we thus have the sought result.
\end{proof}

\begin{proof}[Proof of Theorem~\ref{thm:construction-regular}]
    Let $n\geq 1$, $0\leq \alpha \leq 1$, and $\frac{\alpha}{2} \leq \beta \leq \alpha$. To simplify notation, we assume $\alpha n$ and $\beta n$ integers. 
    Let $L$ be a subset of $[n]$ of size $\alpha n$.
    
    We define $\cF$ as the minimal set system over $[n]$ supporting all permutations $\pi$ of $[n]$ with the property that the first $\beta n$ entries of $\pi$ are in $L$. 
    Note that $\cF$ consists of all subsets of $L$ of size $\beta n$ together with all subsets and supersets of such sets.
    
    Estimating $S(\cF)$ and $P(\cF)$ can be done in the same way as in the previous proof, yielding the claimed counts, but we still need to argue that $\cF$ is regularly self-intersecting. 
    
    Let $\cF'$ be a set system isomorphic to $\cF$. There is $L'\subseteq [n]$ such that $\cF'$ consists of all subsets of $L'$ of size $\beta n$ together with all subsets and supersets of such sets. Let $\mathcal{G}$ consist of all subsets of $L\cap L'$ of size $\beta n$. Then, all permutations supported by both $\cF$ and $\cF'$ have a prefix in $\mathcal{G}$, and none of the permutations supported by $\cF$ but not $\cF'$ do.
    Thus $\cF$ and $\cF'$ are regularly intersecting and $\cF$ is regularly self-intersecting. 
\end{proof}

\subsection{Lower bound}
\label{sec:lb}

Recall that for a set system $\cF$, the quantities $S(\cF)$ and $P(\cF) \cdot S(\cF)$ correspond to the $S$, $T$ values in the feasible space-time tradeoff resulting from this set system. 

\begin{figure}
    \centering
\includegraphics[width=0.85\linewidth]{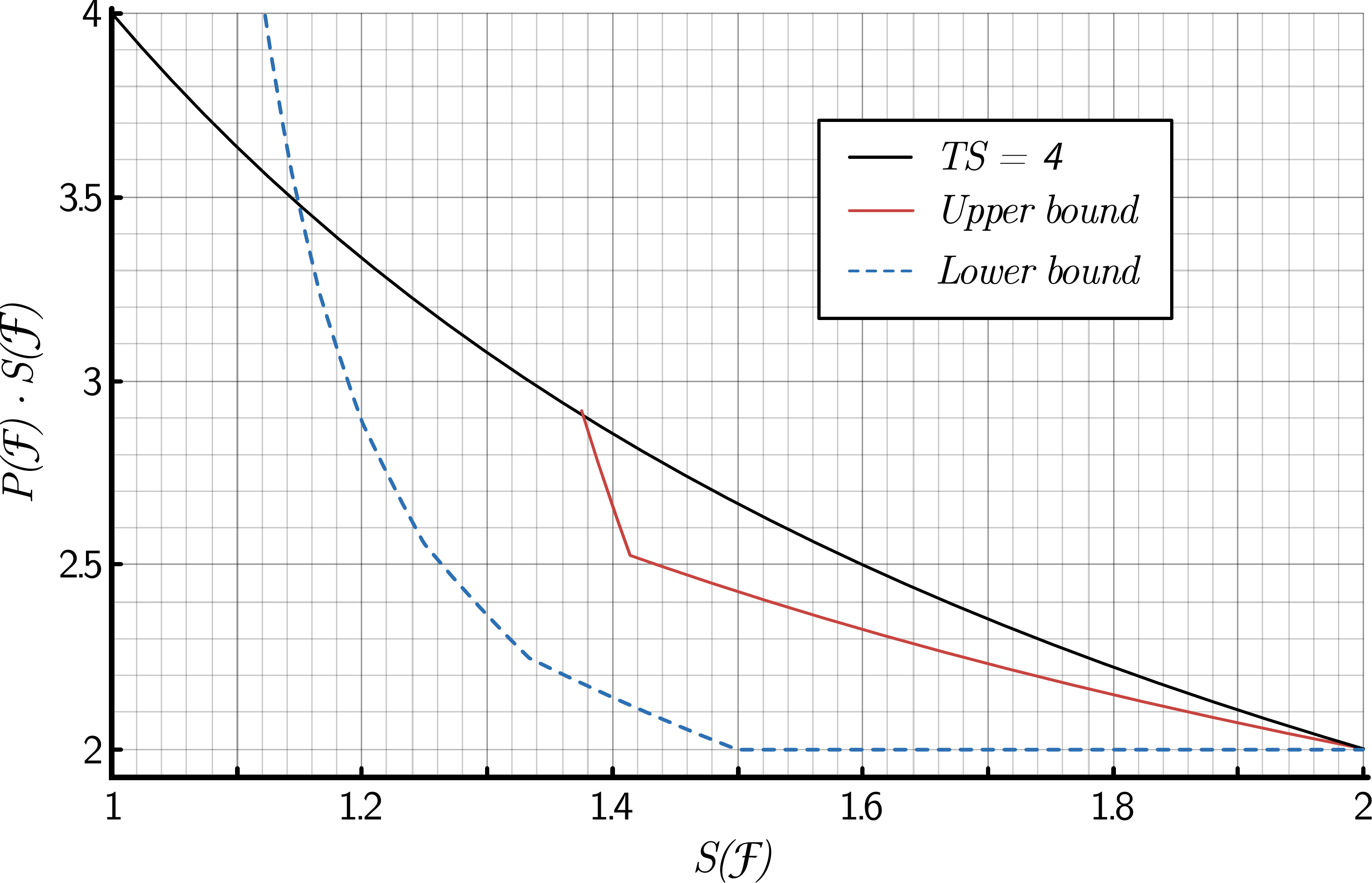}
    \caption{Possible tradeoffs between $P(\cF) \cdot S(\cF)$ and $S(\cF)$ for a single set system $\cF$. Recall that these correspond to $T$, resp.\ $S$ in the resulting space-time tradeoff. 
    Solid red line shows our upper bound, and dashed blue line is our lower bound. }
    \label{fig:LB}
\end{figure}

In this subsection we develop a lower bound on this tradeoff. Let $\cF$ be a set system over $[n]$.

\begin{lemma}\label{lem:LB}
For arbitrary $k \geq 0$ integer parameter, $P(\cF) \geq \frac{k+1}{S(\cF)^{k}} \cdot (1+o(1))$.
\end{lemma}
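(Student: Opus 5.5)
The plan is to bound $C(\cF)$ from above via an encoding (double-counting) argument. We show that a maximal chain of $\cF$ is essentially determined by a much smaller amount of information than a full permutation, once we exploit the fact that $k$ consecutive intermediate sets of the chain must all lie in $\cF$. In normalized form, the target is
\[
C(\cF) \;\le\; n!\cdot \Bigl(\tfrac{S(\cF)^k}{k+1}\Bigr)^{n}\cdot 2^{o(n)},
\]
which is exactly $P(\cF)\ge \frac{k+1}{S(\cF)^k}(1+o(1))$. The case $k=0$ is the trivial bound $P(\cF)\ge 1$, so assume $k\ge 1$.

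\textbf{Step 1 (block decomposition).} Fix an offset $r\in\{0,\dots,k\}$ and cut each maximal chain $S_0\subsetneq\dots\subsetneq S_n$ at the anchor levels $r, r+(k+1), r+2(k+1),\dots$. Between two consecutive anchors $A\subsetneq B$ with $|B\setminus A|=k+1$, the chain is one of at most $(k+1)!$ orderings of $B\setminus A$. It is admissible only if its $k$ intermediate sets all belong to $\cF$. Writing $N_{\cF}(A,B)$ for the number of admissible orderings, we get
\[
C(\cF)=\sum_{\text{anchor sequences}}\ \prod_t N_{\cF}(A_t,A_{t+1}).
\]
Compared with the full lattice, for which $C=n!$, we lose a factor $\prod_t N_{\cF}(A_t,A_{t+1})/(k+1)!$.

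\textbf{Step 2 (averaging over windows).} Instead of bounding each $N_{\cF}$ by $(k+1)!$, average. Pick a uniformly random permutation $\pi$ and a uniformly random window start $i$. The probability that the $k$ prefix-sets $\pi^{(i+1)},\dots,\pi^{(i+k)}$ all lie in $\cF$ should be controlled by $|\cF|$. The intended estimate is that each set of $\cF$ can serve as an intermediate set for at most $k$ positions of a window, while each window consumes $k$ sets of $\cF$ and advances the chain by $k+1$ elements. This should give a per-window factor of roughly $|\cF|^{k/n}/(k+1)=S(\cF)^k/(k+1)$ per element.

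To turn this into a bound on whole chains, I would write $\lg C(\cF)$ as the entropy of a uniformly random supported chain. I would then apply the chain rule along the $n$ steps, bounding the conditional entropy of each block of $k+1$ steps by $\lg(k+1)!+k\lg S(\cF)-(k+1)\lg(k+1)+o(k)$. Finally I would average over the $k+1$ offsets $r$ (a Shearer-type argument) so that boundary effects and the choice of anchors contribute only $2^{o(n)}$. Stirling's formula then absorbs the polynomial and $\binom{n}{\cdot}$ factors into the $(1+o(1))$ term.

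\textbf{Main obstacle.} The hard part is Step 2: proving rigorously that the $k$ intermediate sets inside each block cost only a factor $S(\cF)^k$, on average, rather than being free. Naive product bounds such as $C(\cF)\le\prod_i f_i$, where $f_i$ is the number of level-$i$ sets of $\cF$, are far too weak, since $f_i$ can be as large as $\binom{n}{i}$. My first attempt would be the entropy/Shearer route described above, conditioning on the anchor $A$ and using that the admissible $k$-step continuations from $A$ are injectively encoded by $k$-tuples of sets from $\cF$ above $A$. If the constant $k+1$ does not come out cleanly, I would fall back to a direct double count. This would count pairs (supported permutation, window position) against tuples (set in $\cF$, ordering of the remaining elements), and then use the symmetry argument of Lemma~\ref{lemma:supported_fraction} to pass between a fixed permutation and random relabelings of $\cF$.
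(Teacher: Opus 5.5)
There is a genuine gap. Step 2 carries the whole argument, and it is not merely unproven but false as stated.

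Take $\cF = 2^{[n]}$, so $S(\cF)=2$. A uniformly random maximal chain is then a uniformly random permutation. Each block of $k+1$ steps, even conditioned on both anchors $A_t \subsetneq A_{t+1}$, is a uniform ordering of $A_{t+1}\setminus A_t$, so its conditional entropy is exactly $\lg(k+1)!$. Your bound $\lg(k+1)! + k - (k+1)\lg(k+1)$ is smaller by $(k+1)\lg(k+1)-k \ge 1$ bit per block. Over all blocks that is $\Omega(n)$ bits, which no $2^{o(n)}$ slack can absorb. If you instead condition only on the past, as in the chain rule, the bound also omits the roughly $(k+1)\lg n$ bits for which elements enter the block, i.e.\ the $\lg n!$ term.

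The bookkeeping also does not reach your own target. Here $n/(k+1)$ blocks each contributing $k\lg S(\cF)$ give $\frac{k}{k+1}\,n\lg S(\cF)$ rather than $kn\lg S(\cF)$. Likewise, the per-block saving $(k+1)^{k+1}/(k+1)!$ compounds to $\bigl((k+1)^{k+1}/(k+1)!\bigr)^{n/(k+1)}$ rather than $(k+1)^n$.

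The underlying problem is structural. Your only handle on $|\cF|$ is that a single set of the chain ranges over at most $|\cF| = S(\cF)^n$ values. So every set you pin down costs a full factor $S(\cF)^n$, and the budget $S(\cF)^{nk}$ pays for $k$ sets in total, not $k$ sets in each of $\Theta(n)$ windows. The vaguely described fallback double count over window positions inherits the same problem.

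The missing idea is to transpose your decomposition. Use only $k$ anchors, at levels $i_j = \lfloor jn/(k+1)\rfloor$ for $j\in[k]$ (with $i_0=0$, $i_{k+1}=n$). This splits the chain into $k+1$ blocks of linear length, not $\Theta(n)$ blocks of constant length.
\begin{itemize}
\item The $k$ anchors lie in $\cF$, so there are at most $|\cF|^k = S(\cF)^{nk}$ choices for them.
\item Given the anchors, the chain is just an ordering of each gap $S_{i_{j+1}}\setminus S_{i_j}$, giving at most $\prod_{j=0}^{k}(i_{j+1}-i_j)!$ choices.
\item Hence $C(\cF)\le S(\cF)^{nk}\prod_{j=0}^{k}(i_{j+1}-i_j)!$.
\item The ratio $n!/\prod_{j}(i_{j+1}-i_j)!$ is a balanced multinomial coefficient. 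It is at least $(k+1)^n/\mathrm{poly}(n)$, by Stirling, or because it is the largest of the $\binom{n+k}{k}$ multinomial coefficients summing to $(k+1)^n$.
\item Taking $n$-th roots gives the lemma.
\end{itemize}
This is the paper's proof. The $(k+1)^n$ saving comes from the multinomial over $k+1$ linear-size blocks. No Shearer-type averaging over offsets and no symmetrization is needed. Your entropy formulation also works verbatim if you condition on these $k$ anchors instead of on $\Theta(n)$ closely spaced ones.
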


Here the $o$-notation is with respect to the set system ground set size $n=n(\cF)$. Note however, that a smaller value of $P(\cF)$ is not possible, even for a finite $n$, as that would imply the same value (in the limit) for $P_{S(\cF)}(n)$, by Lemma~\ref{lem:fekete}.

Before proving the lemma, let us give some interpretations. For $k=1$ it implies the lower bound $P(\cF) \cdot S(\cF) \geq 2$, meaning that the time bound $2^n$ is unavoidable. (This is intuitive, since partitioning the solution space still does not forego eventually looking at all prefix sets.)
For $k=2$ we obtain the lower bound $P(\cF) \cdot S^2(\cF) \geq 3$, showing that no point on the $ST$ tradeoff curve can improve this value (in the set systems based framework).

Figure~\ref{fig:LB} shows the obtained lower bound curve for the entire range of the parameter $k$. It is easy to see that a given value $k$ is optimal when $\frac{k+2}{k+1} \leq S(\cF) \leq \frac{k+1}{k}$.

For $k=6$ and $S(\cF) \leq (7/4)^{1/4} \approx 1.15$ we have $P(\cF) \cdot S^2(\cF) \geq 4$, and for smaller values of the normalized set system size $S(\cF)$, the product is above $4$; this means that no set system can improve the trivial tradeoff in the range $S \leq 1.15$. Recall that for this regime we combined the set system based approach with the divide and conquer algorithm.

We proceed with the proof.

\begin{proof}[Proof of Lemma~\ref{lem:LB}]
Let $\cF$ be a set system over $[n]$. We aim to give an upper bound on the number $C(\cF)$ of maximal chains supported by $\cF$.

Let $k \geq 0$, and consider the indices $i_j = \lfloor jn / (k+1) \rfloor$ for $j=0,\dots,k+1$.

Recall that a maximal chain is a sequence $S_0, \dots, S_{n} \in \cF$, where $S_{i-1} \subsetneq S_{i}$ for $i \in [n]$.

To construct a maximal chain, we first fix the entries $S_{i_j}$ for $j=0,\dots,k+1$. For $S_{i_0}$ and $S_{i_{k+1}}$ there is a single choice, $\emptyset$ and $[n]$, respectively. For each $S_{i_j}$ for $j \in [k]$ the number of choices is at most $|\cF| = S(\cF)^n$, the total number of sets in the set system. 

Having fixed $S_{i_j}$ and $S_{i_{j+1}}$, the choice for the portion of the maximal chain between them corresponds to the possible orders in which the elements $S_{i_{j+1}} \setminus S_{i_j}$ can be added, so the number of possibilities is $(i_{j+1} - i_j)! \leq (\frac{n}{k+1})!$.

The total number of maximal chains is thus $C(\cF) \leq ((\frac{n}{k+1})!)^{k+1} \cdot S(\cF)^{nk}$.

Let us finally lower bound the normalized inverse chain density $P(\cF)$.

\begin{align*}
P(\cF) & = \left( \frac{n!}{C(\cF)} \right) ^{1/n}\\
& \geq \left( \frac{n!}{((\frac{n}{k+1})!)^{k+1} \cdot S(\cF)^{nk}} \right) ^{1/n}\\
& \geq \left( \frac{(\frac{n}{e})^n}{\frac{en}{k+1} \left(\frac{n}{e(k+1)}\right)^n S(\cF)^{nk}} \right)^{1/n}\\
& \geq \frac{k+1}{S(\cF)^k} \cdot\left(\frac{k+1}{en}\right)^{1/n}\\
& \geq \frac{k+1}{S(\cF)^k} \cdot (1-o(1)).
\end{align*}

\end{proof}

Here the middle step uses the standard bounds on the factorial resulting from Stirling's approximation: $(\frac{n}{e})^n  \leq n! \leq en(\frac{n}{e})^n$.

\subsection{The Johnson-Leader-Russell conjecture}\label{sec:comb}

Johnson, Leader, and Russell~\cite{JLRsetSystems} raise the question of which set system $\cF$ over $[n]$ with a prescribed size $|\cF|$ has the largest number of maximal chains. (Or, equivalently, which set $B$ of permutations of $[n]$ of a prescribed size $|B|$ has the smallest number of prefix sets.)
The motivation of the authors in studying this question appears to be purely combinatorial, as it lies at an opposite end from Sperner-type results that allow \emph{no long chains}. Given the natural formulation, very little appears to be known about this regime, as also expressed by the authors of~\cite{JLRsetSystems}.  

Johnson, Leader, and Russell characterize the special case when $|\cF| \in \Theta(2^n)$, but leave open the setting $|\cF| \in o(2^n)$, most relevant for our study. For this case, they give a general \emph{tower of cubes} construction and conjecture that it maximizes the number of maximal chains for all appropriate set system sizes.

Precisely, for $n=tk$, a \emph{tower of $t$-cubes} can be defined as follows. Let $(P,\prec)$ be a poset with $|P| = n$, where $P$ is partitioned into antichains $P_1, \dots, P_k$, each of size $t$, where $x \prec y$ for all $x \in P_i$ and $y \in P_j$ for $i<j$.
The set system $\cF$ is then defined as the set of order ideals (downsets) of $P$. 

Johnson, Leader, and Russell conjecture~\cite[Conj.~5]{JLRsetSystems} that set systems of this form maximize the number of maximal chains among all set systems of size $|\cF| = \frac{n}{t}2^t - \frac{n}{t} +1$. In fact, they also conjecture more strongly~\cite[Conj.~6]{JLRsetSystems} that a \emph{generalized tower of cubes} construction that allows antichains to differ in size by one, is also optimal.

We first notice that the construction of Koivisto and Parviainen (see Appendix~\ref{appa}) is the special case of the tower of $t$-cubes  when $t=13$ and $k=2$.  

In~\cite{KoivistoParviainen2010} it is shown that the minimal $ST$ value is given by their $13 \times 2$ scheme, among all \emph{bucket orders}. Bucket orders are precisely the partial orders defined as a collection of antichains, linearly sorted among them, as in the tower of cubes construction above. In fact, Koivisto and Parviainen even allow for antichains of arbitrarily differing sizes. The conjecture of Johnson, Leader, and Russell would thus strongly suggest the Koivisto-Parviainen scheme to lead to an optimal time-space tradeoff, not just among poset-based approaches, but in our broader set-systems-based framework. (The reason this implication cannot be made more formally is that not all set system sizes $|\cF|$ are decomposable into cubes, and the conjecture does not cover other possible sizes.)

We argue, that this is emphatically not the case, and briefly show that the conjecture of Johnson, Leader, and Russell is false; leaving open a full characterization of extremal set systems with respect to the density of maximal chains, even at sizes that do decompose into equal cubes. 

Indeed, already for the simple case of $t = n/2$ and $k=2$, the tower of cubes construction resulting from a poset of height two (a scaled up version of the Koivisto-Parviainen construction) yields normalized size $S(\cF) = \sqrt{2} + o(1)$, and, since $C(\cF) = ((n/2)!)^2$, a normalized inverse chain density of $P(\cF) = 2-o(1)$.

This is significantly weaker than our warm-up construction (see \S\,\ref{sec2} and Appendix~\ref{appa}) with $P(\cF) \leq 1.8532$ or our best construction (\S\,\ref{sec4}) that yields $P(\cF) \leq 1.7860$  (and consequently, a number of maximal chains higher by an exponential factor) for the same set system size. In fact, the conjecture would imply that for $S(\cF) \leq \sqrt{2}$ and $n$ large enough, $S(\cF)^2 \cdot P(\cF) \geq 4$, with no set system improving the basic tradeoff. This is contradicted by our set systems obtained through the interpolation lemma, as shown on Figure~\ref{fig:LB}.

\section{Conclusion}
\label{sec5}

We significantly improved the attainable space-time tradeoff for the TSP and for more general permutation problems. Our algorithms are deterministic and arise from a new conceptual connection between dynamic programming and the existence of set systems with extremal properties. Additionally we incorporate the existing divide-and-conquer method to extend the space-time tradeoff to the entire range of parameters. While we made some effort to obtain strong numerical bounds, we also aimed to keep the constructions interpretable. Our main contribution is methodological, making the optimization task of the exponential space-time-tradeoff explicit, and connecting it with a combinatorial question of independent interest. This also allows seeing earlier approaches in a clearer framework, and we believe the approach will likely inspire results for other exponential algorithms, beyond permutation problems.

The main remaining question is closing the gap between the upper and lower bounds. While the best upper bound we obtain is $ST < 3.572$, the best lower bound we show for the set-systems-based framework is $ST \geq 3$. Notably, no better lower bound is known even when restricting attention to set systems arising as poset-ideals. Bounds that would separate this special case from general set systems would be particularly interesting. We believe that numerically improving the upper bound will likely be amenable to computer-assisted methods, with closing the gap likely requiring further ideas; we find the question of which set systems of a given size have maximal chain density to be a natural question of extremal combinatorics that deserves further investigation. Obtaining a better space-time tradeoff for TSP in the metric special case would also be interesting.

\medskip

Finally, we mention an application of our techniques to a related aspect of the complexity of TSP and related problems: their \emph{certificate}- or \emph{communication}-complexity, e.g., see~\cite{dantsin2011satisfiability}. Here, two parties (Alice and Bob) have access to the same input instance, with Alice having unlimited computational power. The tradeoff of interest is between the amount of communication from Alice to Bob, and the computation time of Bob that allows him to compute the optimum. (Notice that communication from Bob to Alice is useless; having unlimited computation, Alice can simulate Bob and anticipate his messages.)

Our set-systems-based algorithms with space $S^{n}$ and time $T^n = P^n \cdot S^{n}$ yield protocols with $\lceil n \lg{P} \rceil$ bits of communication and $S^n$ computation cost for Bob. (Alice can run the algorithm, making the step of non-deterministically ``guessing'' among $P^n$ choices for free and just communicating the choice to Bob, who can then run the rest of the algorithm.)
This should be contrasted with the trivial scheme of no communication and $\fO^{*}(2^n)$ computation time, or the folklore (best known) protocol of $ \nicefrac{n}{2} \cdot \lceil\lg{n} \rceil$ bits of communication and polynomial time computation~\cite{tsp_compl}.

%
%

\appendix

\section{Special cases}\label{appa}
We briefly discuss how some concrete results arise as special cases of the general framework. 

Every set system that admits maximal chains leads to \emph{some} space-time tradeoff, although it may not improve upon the trivial one. 
As a trivial example, consider the complete set system $\cF =  2^{[m]}$ where $S(\cF) = 2$ and $C(\cF) = 1$, with the resulting $(S,T)$ pair being the trivial $(2,2)$ of the Bellman-Held-Karp algorithm.

Another simple example over $[m]$ is a single maximal chain, where $S(\cF) = (m+1)^{1/m}$ and $P(\cF) = (m!)^{1/m}$. The tradeoff resulting for this set system is at the extreme end of saving space. E.g., for $m = 8$ it results in an algorithm with space $S \leq 1.3161$ and time $T \leq 4.9552$, significantly above the Gurevich-Shelah bounds for both parameters. 

\medskip
The warmup example of \S\,\ref{sec2} can be reproduced via Theorem~\ref{thm:main} and the following set system.

Let $\beta \approx 0.889972$, the root of $H(x) = \nicefrac{1}{2}$, and define $\cF$ as a set system over $[2k]$, where $\cF = \cF' \cup \cF'' \cup \cF'''$, with the following definitions: $\cF' = 2^{[k]}$, $\cF'' = \{(s+k) \cup [k] \mid s \in 2^{[k]}\}$, and $\cF''' = \{(s_1 \cup (s_2 + k))\mid  s_1, s_2 \in 2^{[k]}, |s_1| \geq \beta k, |s_2| \leq (1-\beta)k\}$.

For the relevant parameters of the set system, the same calculations as in \S\,\ref{sec2} yield $S(\cF) = 2^{1/2} \cdot (1+o(1))$ and $P(\cF) \leq 1.8532$ (for large enough $k$). (For the latter, we can look at the probability that a random permutation $\pi$ of $[2k]$ is supported by $\cF$, which only depends on the half-prefix-set $\pi^{(k)}$.)

\medskip

A natural way to construct set systems is by taking the order ideals of a partial order (poset). 

More precisely, for a poset $(P, \prec)$, the set system of its order ideals (downsets) is the collection of sets $\cF = \{s \subseteq P \mid \forall x \in s, \forall y \in P, (y \prec x \implies y \in s) \}$.

As the number of set systems over $[n]$ is $2^{2^{[n]}}$ and the number of posets is $2^{\fO(n^2)}$~\cite{kleitman1975asymptotic}, it is to be expected that set systems offer more flexibility. Order ideals of a poset are closed under union and intersection, which need not be the case for general set systems; for example, if the set of order ideals of a poset contains the singleton sets $\{1\}, \dots, \{n\}$, then it necessarily contains all subsets of $[n]$. 

The set system capturing the best bound in the framework of Koivisto and Parviainen arises from a height-two poset $([26],\prec)$ and can be seen as follows. 

Let $\cF$ be over $[26]$, with $\cF = 2^{[13]} \cup ´\{(s+13) \cup [13] \mid s \in 2^{[13]}\}$.

It is easy to see that $S(\cF) = (2^{13} + 2^{13} - 1)^{1/26} \approx 1.453$, and $P(\cF) = \left(\frac{26!}{ (13!)^2  }\right)^{1/26} \approx 1.862$, resulting in the space-time tradeoff $ST \leq P(\cF) \cdot S(\cF)^2 \approx 3.93$. 

\newpage

\small
\bibliographystyle{alphaurl}
\bibliography{main}
\end{document}